\documentclass[3p]{elsarticle}



\journal{arXiv}









\bibliographystyle{elsarticle-num}

\usepackage{amsfonts}
\usepackage{amsmath, amscd, amssymb, bm, amsbsy, epsf, amsthm}
\usepackage{enumerate}
\usepackage{paralist}
\usepackage{graphicx,color}
\usepackage{subfigure}
\usepackage{mathrsfs}
\usepackage{verbatim}
\usepackage{inputenc}
\usepackage{diagbox}
\usepackage{algorithm}
\usepackage{algorithmicx}
\usepackage{algpseudocode}
\usepackage{hyperref}
\usepackage{pifont}
\usepackage{inputenc}
\usepackage{bbm, dsfont}
\usepackage{multicol}
\usepackage{balance}
\usepackage{verbatim, booktabs}

\usepackage{multirow, url}
\usepackage[table]{xcolor}
\usepackage{diagbox}
\usepackage{tikz}
\usepackage{capt-of}
\usepackage{natbib}

\usepackage{cmbright}


\DeclareMathAlphabet{\mathpzc}{OT1}{pzc}{m}{it}
\DeclareMathAlphabet{\mathpzc}{OT1}{pzc}{m}{it}

\newtheorem{theorem}{Theorem}
\newtheorem{lemma}[theorem]{Lemma}
\newtheorem{corollary}[theorem]{Corollary}
\newtheorem{proposition}[theorem]{Proposition}
\newdefinition{definition}{Definition}
\newdefinition{hypothesis}{Hypothesis}
\newdefinition{problem}{Problem}
\newdefinition{remark}{Remark}
\newdefinition{example}{Example}



\def\N{\boldsymbol{\mathbbm{N}}}
\def\R{\boldsymbol{\mathbbm{R}}}
\def\defining{\overset{\mathbf{def}}=}

\def\A{\mathcal{A}}
\def\S{\mathcal{S}}

\def\Exp{\boldsymbol{\mathbbm{E}}}
\def\prob{\boldsymbol{\mathbbm{P} } }
\def\Var{\boldsymbol{\mathbbm{V}\!\mathrm{ar}}}
\def\ind{\boldsymbol{\mathbbm{1} } }
\def\x{\mathbf{x} }

\def\T{\mathcal{T}}
\def\p{\mathbf{p}}
\def\g{\mathbf{g}}

%
%
%
%
\def\ind{\boldsymbol{\mathbbm{1}}}
%

%
\DeclareMathAlphabet{\mathpzc}{OT1}{pzc}{m}{it}

\newcommand\X{\mathop{}\!\mathbf{X} } 
\newcommand\Y{\mathop{}\!\mathbf{Y} } 
\newcommand\Z{\mathop{}\!\mathbf{Z} } 
\newcommand\ups{\mathop{}\!\mathbf{S} } 
\newcommand\upt{\mathop{}\!\mathbf{T} } 
\newcommand\upk{\mathop{}\!\mathbf{K} } 

\newcommand\fg{\mathop{}\!\mathrm{fG} } 
\newcommand\gr{\mathop{}\!\mathrm{G} } 
\newcommand\eg{\mathop{}\!\mathrm{eG} } 
\newcommand\ff{\mathop{}\!\mathrm{fF} } 
\newcommand\ef{\mathop{}\!\mathrm{eF} } 
\newcommand\lp{\mathop{}\!\mathrm{LP} } 

\newcommand\upg{\mathop{}\!\mathbf{G} } 
\newcommand{\upp}{\mathop{}\!\mathbf{P}}
\newcommand\upw{\mathop{}\!\mathbf{W}} 

\newcommand\mar{\mathop{}\!\textbf{m} } 


\newcommand\tree{\mathcal{T}}
\newcommand\lt{\mathop{}\!\mathrm{left} } 
\newcommand\rt{\mathop{}\!\mathrm{right} } 
\newcommand\lb{\mathop{}\!\mathrm{lb} } 
\newcommand\odd{\mathop{}\!\mathrm{odd} } 
\newcommand\even{\mathop{}\!\mathrm{even} } 
\newcommand\alg{\mathop{}\!\mathrm{alg} } 
\newcommand\side{\mathop{}\!\mathrm{side} } 

\def\cleft{\mathbf{C}_{\lt}}
\def\cright{\mathbf{C}_{\rt}}
\def\sleft{\boldsymbol{\ups}_{\lt}}
\def\sright{\boldsymbol{\ups}_{\rt}}
\def\kleft{\boldsymbol{\upk}_{\lt}}
\def\kright{\boldsymbol{\upk}_{\rt}}

\def\expcl{\boldsymbol{\mathbbm{E}}(\cleft)}
\def\expcr{\boldsymbol{\mathbbm{E}}(\cright)}

\def\expk{\boldsymbol{\mathbbm{E}}(\upk)}
\def\exps{\boldsymbol{\mathbbm{E}}(\ups)}

\def\expkl{\boldsymbol{\mathbbm{E}}(\kleft)}
\def\expsl{\boldsymbol{\mathbbm{E}}(\sleft)}
\def\expkr{\boldsymbol{\mathbbm{E}}(\kright)}
\def\expsr{\boldsymbol{\mathbbm{E}}(\sright)}
\def\expcl{\boldsymbol{\mathbbm{E}}(\cleft)}
\def\expcr{\boldsymbol{\mathbbm{E}}(\cright)}

\begin{document}

\begin{frontmatter}

\title{Expected Performance and Worst Case Scenario Analysis \\
of the Divide-and-Conquer Method \\
	for the 0-1 Knapsack Problem}
\tnotetext[mytitlenote]{This material is based upon work supported by project HERMES 45713 from Universidad Nacional de Colombia,
Sede Medell\'in.}

\author[mymainaddress]{Fernando A Morales} 
\cortext[mycorrespondingauthor]{Corresponding Author}
\ead{famoralesj@unal.edu.co}

\author[mysecondaryaddress]{Jairo A Mart\'inez}



\address[mymainaddress]{Escuela de Matem\'aticas
Universidad Nacional de Colombia, Sede Medell\'in \\
Carrera 65 \# 59A--110, Bloque 43, of 106,
Medell\'in - Colombia\corref{mycorrespondingauthor}}

\address[mysecondaryaddress]{Departamento de Ciencias Matem\'aticas, Universidad EAFIT. \\
Carrera 49 \# 7 Sur-50, Bloque 38, of 501, Medell\'in - Colombia}


\begin{abstract}
In this paper we furnish quality certificates for the Divide-and-Conquer method solving the 0-1 Knapsack Problem: the worst case scenario and an estimate for the expected performance. The probabilistic setting is given and the main random variables are defined for the analysis of the expected performance. The performance is accurately approximated for one iteration of the method then, these values are used to derive analytic estimates for the performance of a general Divide-and-Conquer tree. Most of the theoretical results are verified vs numerical experiments for a  wider illustration of the method.
\end{abstract}

\begin{keyword}
Divide-and-Conquer Method, Quality Certificates, Probabilistic Analysis, Monte Carlo simulations, method's efficiency.
\MSC[2010] 90C10 \sep 68Q25 \sep 68Q87 \sep 05A19
\end{keyword}

\end{frontmatter}



%
%
%
%
%
%
%
%
%
\section{Introduction}   
%
%
%
%
The 0-1 knapsack problem (0-1KP) is one of the most widely discussed problems in the combinatorial optimization literature and it is certainly the simplest prototype of a maximization problem \cite{kellerer2005knapsack}. It is defined as follows: given a set of $n$ items, each item $j$ having a weight $w(j)$ and a profit $p(j)$, the problem is to choose a subset of items such that the sum of profits is maximized, while the sum of weights does not exceed the knapsack capacity $\delta$. The simplicity of its formulation (see \textsc{Problem} \ref{Pblm Original 0-1 KP}) contrasts with its surprising theoretical and practical relevance:  its decision version is one of Karp's $21$ NP-complete problems \cite{karp1972reducibility}, 0-1KP itself, or some of its well-known variants, is used in the modeling of important practical problems such as portfolio management and container optimization \cite{vaezi2019portfolio,jacko2016resource}. In addition, it appears as a subproblem when applying some decomposition technique to large problems, for example, solving material cutting models using a column generation method \cite{blado2020column,muter2018algorithms}. It has also played an interesting role in the development of cryptographic systems \cite{smart2016cryptography}. 

The Divide-and-Conquer method for solving the 0-1KP was recently introduced by Morales and Mart\'inez in \cite{MoralesMartinez}. The method seeks to reduce the computational complexity of a large instance of the problem, by executing a recursive subdivision into smaller instances, so that the process can be visualized as the construction of a binary tree whose nodes are knapsack subproblems. As it was emphasized in the original work, the method does not \emph{compete} with the existing algorithms, it \emph{complements} them (observe that in Example \ref{Exm 0-1KP and DC tree Asymetric}, it is not specified how to solve the defined subproblems). The experimental results presented in \cite{MoralesMartinez}, show that the method is a good \emph{middle grounds} alternative, halfway between computational complexity and quality of the solution. So far, the quality performance of Divide-and-Conquer has been measured only empirically. The aim of this paper is to analyze theoretically its quality performance from two points of view: the worst-case scenario and its expected/average performance.
%
%
\subsection{Literature Review}
%
%
In the last three decades of the 20th century, the algorithms implemented for the resolution of 0-1KP reached a great maturity, standing out the primal and dual variants of  branch and bound, \cite{kolesar1967branch,horowitz1974computing,pisinger1995expanding}, dynamic programming \cite{pisinger1999linear,pisinger2003dynamic}, the \textit{core}-type algorithms \cite{balas1980algorithm,martello1988new}  and hybrid procedures like the \textit{Combo} algorithm \cite{martello1999dynamic}. Although in terms of worst case time complexity, the best bounds achieved are pseudpolynomial, the combined application of different techniques made it possible to effectively solve a large number of benchmark instances, which led to the designation of the knapsack problem as one of the ``easy to solve'' NP-hard problems. Consequently, the research line directed at understanding the characteristics of the most computationally challenging instances \cite{pisinger2005hard}, was developed.

The discrepancy observed between the good performance of simple heuristics and exact methods when applied on pure random instances, and the high complexity pointed out by the worst case analysis, started to be explained theoretically through  probabilistic analysis. In this respect, Kellerer et al. \cite{kellerer2005knapsack}  classified the contributions depending on whether the results are: $ \bullet $ \textit{structural}, if they give a probabilistic statement e.g. on the optimal solution value. $ \bullet $ Expected performance of algorithms, which produce an optimal solution with a certain probability. $ \bullet $ Expected running time of algorithms, which always produce solutions of a certain quality.

A probabilistic model for the knapsack problem widely used in the literature is the one proposed by Lueker  \cite{lueker1982average},  in which it is assumed that weights and profits are  uniformly selected from the interval $[0,1]$, so that the choice of the parameters of the $n$ items  can be understood as the random location of $n$ points in the unit square. The knapsack capacity should be specified as $\delta=\beta n$, where $\beta$ is some constant in the interval $(0,1]$. Several random models in literature differ from Lueker's proposal only in the $\beta$ parameter, see for example \cite{calvin2003average,diubin2008greedy,frieze1984approximation}.

The structural result presented by Lueker \cite{lueker1982average} consisted in estimating the expected value of the linear relaxation gap, to formally explain the empirically observed good performance of the branch and bound algorithms (B\&B). The fundamental result was that integrality gap has order $ \mathcal{O}(\log^2 (n)/n) $ which means it decreases with problem size increase. Regarding the analysis of the exact solution, we highlight two works: Frieze and Clarke \cite{frieze1984approximation}  conducted a probabilistic analysis of 0-1KP and obtained an interesting bound for the behavior of the  objective value, showing that it is asymptotically equal to $\sqrt{2n/3}$ with probability going to $1$ as $n$ tends to infinity. Mamer et al.  \cite{mamer1990growth} carried out a similar analysis for a very large class of joint distributions and deducted the same upper bound as Frieze and Clarke.

Since the 80's, of the last century the probabilistic method was applied to the study of different versions of the greedy algorithm. Szkatula \& Libura \cite{szkatula1983probabilistic}  obtained moments and distribution functions for some parameters of the greedy algorithm without ordering, obtaining recursive equations for the distribution function of the accumulated weight in any iteration. Under slightly different hypotheses than those in the standard model, Calvin and Leung \cite{calvin2003average}  proved, using convergence in distribution,  that the sorted greedy algorithm produces results that differ from the optimum value by order $1/\sqrt{n}$.  Diubin et al. \cite{diubin2008greedy} address the analysis of the minimization version of the 0-1KP and  proved that the primal and dual greedy methods  for the minimization knapsack problem are also asymptotically good. They showed that despite the complementarity between the minimization and maximization problems, the result concerning the former cannot be obtained from the result addressing the latter problem. It is worth noting that most of the mathematical analyses involved in these investigations exploit the geometric interpretation of the extended greedy algorithm, in particular the critical or splitting ray.

There are also very relevant works related to the expected running time of exact and approximation algorithms. Beier and V{\"o}cking \cite{beier2004probabilistic} presented the first average-case analysis proving a polynomial upper bound on the expected running time of a sparse dynamic programming algorithm for the 0-1KP; originally proposed  by Nemhauser and Ullman \cite{nemhauser1969discrete}. The algorithm iteratively extend non-dominated or Pareto-efficient subset, contained in the set of the first $i$ items. The main conclusion is that the number of Pareto-efficient knapsack fillings is polynomially bounded in the number of available items. The random input model used in this study is more general, the weights of the items are chosen by an adversary and their profits are chosen according to arbitrary continuous or discrete probability distributions with finite mean, allowing to address the effects of correlation between parameters. It is interesting to point out that when using discrete distributions, they were able to prove a trade-off, ranging from polynomial to pseudo-polynomial running time, depending on the randomness of the specified instances.

In a later work Beier and V{\"o}cking  \cite{beier2004random} studied the average-case performance of \textit{core algorithms} for the 0-1KP. They proved an upper bound of $\mathcal{O}(n \text{ polylog}(n))$ on the expected running time of a core algorithm on instances with $n$ items, whose profits and weights are  drawn random and independently from a uniform distribution.  Unlike previous works such as Goldberg and \& Marchetti-Spaccamela \cite{goldberg1984finding}, the degree of the polynomial involved is relatively low, but the probabilistic analysis is complicated due to the dependence between random variables.

More recent research has attempted to theoretically understand, the efficiency  of simple and successful heuristics such as rollout algorithms. These iterative methods use a \emph{base policy}, whose performance is evaluated to obtain an \emph{improved policy}, by one-step look ahead. Rollout algorithms are  easy to implement and guarantee a \emph{not worse}, and usually much better results than corresponding base policies. Bertazzi \cite{bertazzi2012minimum}  proved   minimum and worst case performance ratio when the greedy, full greedy and the extended greedy algorithms are chosen as base policies, respectively. In all cases the analysis was applied to only the first iteration, showing furthermore, that for the algorithms considered there exists an instance in which the worst-case performance ratio is obtained at the first iteration, so that the expected value deducted cannot be subsequently improved. The worst case performance ratios was improved from $0$ to $1/2$ for the greedy algorithm, and from $1/2$ to $2/3$ for the extended greedy algorithm. Motivated by Bertazzi's results, Mastin \& Jaillet \cite{mastin2015average}  provided a complementary study of rollout algorithms for knapsack-type problems from an average-case perspective.  The authors started from Lueker's random model with profits and weights taken at random and independently generated. They analyzed the exhaustive rollout and consecutive rollout techniques, both using as base policy the unsorted greedy algorithms. The authors derived bounds for both techniques, showing that the expected performance of the rollout algorithms is strictly better than the performance obtained by only using the base policy. These results hold after only a single iteration and provide bounds for additional iterations. The authors state that it was not possible to apply the same analysis to a sorted greedy algorithm, due to the dependencies between random variables originated in the ordering step.
%
\subsection{Contributions}
%
First, a worst case performance ratio of $1/2$ is derived for the Divide-and-Conquer heuristics (see \textsc{Theorem} \ref{Thm Worst Case Scenario}), then a probabilistic analysis is presented for the same method. The defined random model (see \textsc{Section} \ref{Sec Probabilistic Model}), differs in several aspects from Lueker's basic model \cite{lueker1982average}, which is the literature's mainstream: $ \bullet $ Discrete uniform probability distributions are assumed for the parameters. $ \bullet $ A very simple relation is defined between the number of items $n$ and the knapsack capacity $\delta$. $ \bullet $ The profits are defined by means of the weights and the efficiencies, which in turn are given in terms of random variables called the increments. We point out that according to the literature review, discrete distributions were considered only in Beier and V{\"o}cking's work  \cite{beier2004probabilistic}. The adopted model allowed to obtain structural results for the (sorted) greedy and the eligible first item algorithm, which are difficult to approach from the usual model (see for example Bertazzi \cite{bertazzi2012minimum}). Similarly to the Mastin \& Jaillet  proof strategy \cite{mastin2015average}, the theoretical analysis of the Divide-and-Conquer method concentrates on its first iteration. Asymptotic relationships are presented, these permit to define and evaluate numerically, the performance ratios for the entire solution process (see \textsc{Theorem} \ref{Thm Split Item Asymptotic}, \textsc{Lemmas} \ref{Thm Distribution Split Item Left Right} and \ref{Thm Knapsack Slack Expectation Left Right} and \textsc{Corollaries} \ref{Thm Approximation Greedy Left Right Expected Profit}, \ref{Thm Approximation Expectation Eligible First Left and Right}).
\section{Preliminaries}\label{Sec Preliminaries}
%
%
%
%
%
%
In this section the general setting and preliminaries of the problem are presented. We start introducing the mathematical notation. For any natural number $ \mu\in \N $, the symbol $ [\mu] \defining \{ 1, 2, \ldots , \mu \} $ indicates the sorted set of the first $ \mu $ natural numbers. In the same fashion $ [0, 6, 1,  3] $ stands for the set containing the mentioned elements in the order $ 0, 6, 1, 3 $. Greek lowercase letters ($ \delta, \lambda, \mu, \nu, \ldots $), are used for important fixed constants. For any set $ E $ we denote by $ \# E $ its cardinal and by $ \wp(E) $ its power set. Given an event $ E \subseteq \Omega $, we denote its indicator function by $ \ind_{E}: \Omega \rightarrow \{0,1\} $, with $\ind_{E}(\omega) = 1 $ if $ \omega \in E $ and zero otherwise. Random variables will be represented with bold capital letters, e.g. $ \X, \Y, \Z, ... $ and its respective expectations with $ \Exp(\X),  \Exp(\Y), \Exp(\Z), ... $. Vectors are indicated with bold letters, namely $ \p, \g, ... $ etc. Particularly important collections of objects will be written with calligraphic characters, e.g. $ \mathcal{A}, \mathcal{D}, \mathcal{E} $ to add emphasis. A particularly important set is $ \mathcal{S}_{N} $, where $ \mathcal{S}_{N} $ denotes the collection of all permutations in $ [N] $. For any real number $ x \in \R $ the floor and ceiling function are given (and denoted) by $ \lfloor x \rfloor \defining \max\{k: \ell\leq x, \, k \text{ integer}\} $, $ \lceil x \rceil \defining \max\{k: k\geq x, \, k \text{ integer}\} $, respectively.
%
%
\subsection{The Problem}\label{Sec The Problem}
%
%
In the current section we introduce the 0-1 Knapsack Problem and review a list of greedy algorithms, to be used in the analysis of the Divide-and-Conquer method for both ends: attain a quality certificate in the worst case scenario and compute the expected performance of the method. 
\begin{problem}[0-1KP]\label{Pblm Original 0-1 KP}
	Consider the problem
	\begin{subequations}\label{Eqn Integer Problem}
		\begin{equation}\label{Eqn Integer Problem Objective Function}
		z^{*} \defining \max \sum\limits_{i\, = \, 1 }^{ \mu } p(i) \, x(i) ,
		\end{equation}
		subject to
		\begin{equation}\label{Eqn Integer Problem Capacity Constraint}
		\sum\limits_{i \, = \, 1 }^{\mu} w(i) \, x(i) \leq \delta,
		\end{equation}
		\begin{align}\label{Eqn Integer Problem Choice Constraint}
		& x(i) \in \{0,1\}, &
		& \text{for all } i \,\in \,[\mu] .
		\end{align}	
	\end{subequations}
	Here, $ \delta $ is the \textbf{knapsack capacity} and $  \big( x(i) \big)_{i = 1}^{\mu} $ is the list of binary valued \textbf{decision variables}. In addition, the \textbf{weight coefficients} $ \big(w(i)\big)_{i = 1}^{\mu} $, as well as the knapsack capacity $ \delta $
	are all positive integers. In the sequel,
	$ z^{*} $ denotes the  objective function \textbf{optimal solution value}. We refer to the parameters $ \big(p(i) \big)_{i = 1}^{\mu} \subseteq (0, \infty]^{\mu} $ as the \textbf{profits} and introduce the \textbf{efficiency} rate $ g(i) \defining \frac{p(i)}{w(i)} $. Finally, in the sequel the problem is indicated by the acronym \textbf{0-1KP} and we denote by $ \Pi = \big\langle \delta, (p(i))_{i \in [\mu]}, (w(i))_{i \in [\mu]} \big\rangle $ one of its instances.	
\end{problem}
Before we continue our analysis, the next hypothesis is adopted.
\begin{hypothesis}\label{Hyp Non Triviality and Sorting}
	In the sequel we assume that the instances $ \Pi $ of the 0-1KP satisfy the following
	\begin{enumerate}[(i)]
		\item The items of \textsc{Problem} \ref{Pblm Original 0-1 KP} are sorted according to their efficiencies in decreasing order i.e., 
		\begin{equation}\label{Ineq Sorting}
		g(1) \geq 
		g(2) \geq 
		\ldots \geq 
		g(\mu) .
		\end{equation}

		\item The weights of the items satisfy 
		\begin{align}\label{Ineq Weights non-triviality conditions}
		& w(i) \leq \delta, \text{ for all } i \in [\mu], &
		& \sum\limits_{i \, = \, 1}^{\mu} w(i) > \delta .
		\end{align}
	\end{enumerate}
\end{hypothesis}
\begin{remark}[0-1KP Setting]\label{Rem Setting of 0-1 KP}
	We make the following observations about the setting of the problem \ref{Pblm Original 0-1 KP}.
	\begin{enumerate}[(i)]
		\item The condition \eqref{Ineq Sorting} in \textsc{Hypothesis} \ref{Hyp Non Triviality and Sorting} is assumed to ease the algorithm analysis later on.
		
		\item The condition \eqref{Ineq Weights non-triviality conditions} in \textsc{Hypothesis} \ref{Hyp Non Triviality and Sorting} guarantees two things. First, every item is eligible to be chosen. Second, the complete set of items is not eligible. Both conditions are introduced to prevent trivial instances of \textsc{Problem} \ref{Pblm Original 0-1 KP}.
		
		\item  Due to the condition \eqref{Ineq Weights non-triviality conditions}, the split item and the greedy algorithm solutions of \textsc{Definition} \ref{Def The split item} are well-defined.
		
	\end{enumerate}
\end{remark}
Next, we recall a catalog of greedy algorithms for the solution of \textsc{Problem} \ref{Pblm Original 0-1 KP}, to be used in the probabilistic analysis of the Divide-and-Conquer method.
\begin{definition}[Greedy Solutions]\label{Def The split item}
	Let $ \Pi = \big\langle \delta, \big(p(i) \big)_{ i = 1 }^{ \mu }, \big(w(i) \big)_{ i = 1 }^{ \mu } \big\rangle $ be an instance of \textsc{Problem} \ref{Pblm Original 0-1 KP}. Let $ \ind_{ \{ J \} }  $ be the indicator function of the singleton $ \{J\} $, with $ J \in \N $. Define the following
	\begin{enumerate}[(i)]
		\item The \textbf{split item} is the index $ s \in [\mu] $ satisfying 
		\begin{align}\label{Eqn The split item}
		& \sum\limits_{ i \, = \, 1 }^{ s - 1 } w(i) \leq \delta, &
		& \sum\limits_{i \, = \, 1}^{s} w(i) > \delta .
		\end{align}
		%
		
		
		\item The \textbf{greedy algorithm solution} to the problem \ref{Pblm Original 0-1 KP} and its corresponding objective function values are given by 
		%
		\begin{align}\label{Eq Greedy Decision Variables}
		& x^{\gr}(i)  \defining \begin{cases}
		1 , & i = 1, \ldots, s - 1 , \\
		0 , & i = s, \ldots, \mu ,
		\end{cases} &
		& z^{\gr} \defining \sum\limits_{i \, = \, 1}^{ s - 1 } p(i) .
		\end{align}
		%
		%
		\item The \textbf{extended-greedy algorithm solution} yields the following objective function value and corresponding solution to the problem \ref{Pblm Original 0-1 KP} 
		%
		\begin{align}\label{Eq Extended Greedy Decision Variables}
		& z^{\eg} \defining \max\big\{ z^{\gr}, \max\limits_{ i \, \in \, [\mu] } \{p(i): i \in [\mu]\} \big\}  , &
		& x^{\eg}(i)  \defining \begin{cases}
		x^{\gr}(i) , &  z^{\eg} = z^{\gr}, \\
		\ind_{ \{J\} }(i) , & z^{\eg} > z^{\gr} .
		\end{cases} 
		\end{align}
		Here, $ J \defining \min\big\{ j\in [\mu]: p(j) = \max\limits_{\ell \, \in \, [\mu]} p(\ell)\big\} $. 
		
		%
		%
		\item The \textbf{eligible-First greedy algorithm solution} defines the following set
		\begin{subequations}\label{Eq Eligible First Algorithm}
			\begin{equation}\label{Eq Eligible First Set}
			E \defining \Big\{ i > s: w(i) \leq \delta - \sum\limits_{ i \, = \, 1 }^{ s - 1 } w(i) \Big\} ,
			\end{equation}
			to yield the following objective function value and corresponding solution to the problem \ref{Pblm Original 0-1 KP} 
			\begin{align}\label{Eq Eligible-First Greedy Decision Variables}
			& z^{\ef} \defining 
			\begin{cases}
			z^{\gr}, & E = \emptyset, \\
			z^{\gr} + z_{J}, & J = \min E ,
			\end{cases} &
			& x^{\ef}(i)  \defining \begin{cases}
			x^{\gr}(i) , &  E = \emptyset, \\
			x^{\gr}(i) + \ind_{ \{J\} }(i) , & J = \min E.
			\end{cases} 
			\end{align}
		\end{subequations}
		%
		%
		\item Finally we describe the \textbf{full-greedy algorithm solution} for solving problem \ref{Pblm Original 0-1 KP} with the following pseudocode 

		\begin{algorithm}[H]
			\caption{Greedy Algorithm, returns feasible solution $ \big(x(i) \big)_{i  = 1}^{\mu} $ and the associated value $ z^{\fg} = \sum_{i = 1}^{\mu}p(i) x(i) $ of the objective function for \textsc{Problem} \ref{Pblm Original 0-1 KP}.}\label{Alg Full Greedy Algorithm}
			\begin{algorithmic}[1]
				\Procedure{Greedy-Algorithm pseudo-code}{Input:
					Capacity: $ \delta $, Profits: $ (p(i))_{i = 1}^{ \mu } $, Weights: $ (w(i))_{i = 1}^{ \mu } $. 
					The items' efficiencies satisfy $ g(1) \geq g(2) \geq \ldots \geq g(\mu) $.}
				\State $ w^{\fg} \defining 0 $ \Comment{$ w^{\fg} $ is the total weight of the currently packed items}
				\State $ z^{\fg} \defining 0 $ \Comment{ $ z^{\fg} $ is the profit of the current solution}
				\For{$ j = 1, \ldots, \mu $}{
					\If{$ w^{\fg} + w(j) \leq  \delta $ }
					\State $ x(j) = 1 $ \Comment{put item $ j $ into the knapsack}
					\State $ w^{\fg} = w^{\fg} + w(j) $
					\State $ z^{\fg} = z^{\fg} + p(j) $
					\Else
					\State $ x(j) = 0 $
					\EndIf
				}
				\EndFor
				\EndProcedure
			\end{algorithmic}
		\end{algorithm}
	\end{enumerate}
\end{definition}
\begin{remark}[Greedy Algorithms]\label{Rem Greedy Algorithms}
	It is direct to see that $ z^{\gr}  \leq \min \{ z^{\ef}, z^{\eg} \} \leq z^{\fg} $ for any instance of 0-1KP and that all the algorithms are of the same order in terms of computational cost. Therefore, only the full-greedy algorithm should be implemented in practice however, it is very hard to analyze from the probabilistic point of view. The extended-greedy algorithm furnishes a quality certificate for the worst case scenario, as it can be seen in \textsc{Theorem} \ref{Thm Greed and LP Solutions} (ii), however its probabilistic performance analysis is as hard as in the previous case. On the other hand, the probabilistic analysis of the greedy algorithm is tractable (see \textsc{Theorem} \ref{Thm Greedy Expected Profit}) and it characterizes the linear programming relaxation of 0-1KP (see \textsc{Theorem} \ref{Thm Greed and LP Solutions} (i)), which contributes to the probabilistic analysis of the latter problem (see \textsc{Theorem} \ref{Thm Linear Relaxation Expected Profit}). Finally, the eligible-first greedy algorithm is introduced because its probabilistic analysis is tractable at the time of furnishing better approximation estimates to the optimal solution, than the greedy algorithm, see \textsc{Section} \ref{Sec Full-First and Eligible-First algorithms' expected performance}.
\end{remark}
\begin{definition}\label{Pblm Natural LOP Problem}
	The natural \textbf{linear programming relaxation} of \textsc{Problem} \ref{Pblm Original 0-1 KP}, is given by 
	\begin{problem}[0-1LPK]\label{Pblm Integer Problem LOP Relaxation}
		\begin{subequations}\label{Eqn Integer Problem LOP Relaxation}
			\begin{equation}\label{Eqn Integer Problem Objective Function LOP}
			\max \sum\limits_{i\, = \, 1}^{ \mu } p(i) \, x(i) ,
			\end{equation}
			subject to
			\begin{equation}\label{Eqn Integer Problem Capacity Constraint LOP}
			\sum\limits_{i\, = \, 1}^{ \mu } w(i) \, x(i) \leq \delta,
			\end{equation}
			\begin{align}\label{Eqn Integer Problem Choice Constraint LOP}
			& 0 \leq x(i) \leq 1, &
			& \text{for all } i \,\in \,[\mu] ,
			\end{align}	
		\end{subequations}
		i.e., the decision variables $ \big( x(i) \big)_{i = 1}^{\mu} $ are are now real-valued. 
	\end{problem}
	In the sequel the acronym \textbf{0-1LPK} will stand for the associated linear relaxation problem.
\end{definition}
We close this section recalling a couple of classical results for the sake of completeness
\begin{theorem}\label{Thm Greed and LP Solutions}
	Let $ \Pi = \big\langle \delta, \big( p(i) \big)_{ i = 1 }^{ \mu }, \big( w(i) \big)_{ i = 1 }^{ \mu } \big\rangle $ be an instance of \textsc{Problem} \ref{Pblm Original 0-1 KP}, then
	\begin{enumerate}[(i)]
		\item The optimal solution of the problem \ref{Pblm Natural LOP Problem} (0-1 LPK) is given by
		\begin{subequations}\label{Eq LR Solution}
			\begin{equation}\label{Eq LP Decision Variables}
			x^{\lp}(i)  = \begin{cases}
			1 , & i = 1, \ldots, s - 1 , \\
			\frac{1}{w(s)} \big( \delta -  \sum\limits_{i = 1}^{s - 1} w(i) \big) , & i = s,  \\
			0 , & i = s + 1, \ldots, \mu ,
			\end{cases} 
			\end{equation}
			with the corresponding objective function value
			\begin{equation}\label{Eq LR Objective}
			z^{\lp} =  \sum\limits_{i \, = \, 1}^{ s - 1 } p(i) + 
			\Big(\delta -  \sum\limits_{j = 1}^{s - 1} w(j) \Big)\frac{p(s)}{w(s)} .
			\end{equation}
		\end{subequations}
		
		\item Let $ z^{*} $, $ z^{\eg} $ be respectively, the optimal and the extended greedy algorithm objective values for \textsc{Problem} \ref{Pblm Original 0-1 KP}. Then, 
		\begin{equation}\label{Ineq Greedy Efficiency}
		\frac{z^{*}}{2} \leq  z^{\eg},
		\end{equation}
		i.e., the extended greedy algorithm has a relative performance quality certificate of $ 50\% $.
		
	\end{enumerate}
\end{theorem}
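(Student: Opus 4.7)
The plan is to treat the two parts separately, handling the linear-programming relaxation first and then using it as the bridge to the $1/2$ approximation bound for the extended greedy algorithm.

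For part (i), I would argue that the candidate vector $x^{\lp}$ defined in \eqref{Eq LP Decision Variables} is feasible (by construction of the split index $s$ in \eqref{Eqn The split item}) and is optimal via a standard exchange argument. First, at any optimum $\hat x$ the capacity constraint must be tight: by Hypothesis \ref{Hyp Non Triviality and Sorting} we have $\sum w(i) > \delta$ so some $\hat x(i)<1$, and if the constraint were slack we could raise $\hat x(i)$ to strictly increase the objective since $p(i)>0$. Next, suppose there exist indices $i<j$ with $\hat x(i)<1$ and $\hat x(j)>0$. Shifting a tiny mass $\epsilon>0$ of weight from item $j$ to item $i$ (concretely, $\hat x(i)\mapsto \hat x(i)+\epsilon w(j)/w(i)$ and $\hat x(j)\mapsto \hat x(j)-\epsilon$) preserves feasibility and changes the objective by $\epsilon\bigl(w(j)g(i)-w(j)g(j)\bigr)\ge 0$, with equality only when $g(i)=g(j)$. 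Iterating, any optimum can be brought into the ``prefix-fills, one fractional item, suffix-zero'' shape without loss; combined with tightness of the capacity, the descending efficiency order \eqref{Ineq Sorting} forces the fractional item to be $s$, and the value in \eqref{Eq LR Objective} follows by direct substitution.

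For part (ii), I would exploit part (i) plus the defining inequality of the split item. By \eqref{Eqn The split item}, $\delta-\sum_{j=1}^{s-1}w(j)\le w(s)$, so the fractional term in \eqref{Eq LR Objective} is bounded by $p(s)$, giving
\begin{equation*}
z^{\lp}\;\le\;\sum_{i=1}^{s-1}p(i)+p(s)\;=\;z^{\gr}+p(s).
\end{equation*}
Since \textsc{Problem} \ref{Pblm Natural LOP Problem} is a relaxation of \textsc{Problem} \ref{Pblm Original 0-1 KP}, we also have $z^{*}\le z^{\lp}$. On the other hand, from the definition of the extended-greedy value in \eqref{Eq Extended Greedy Decision Variables},
\begin{equation*}
z^{\eg}\;\ge\;z^{\gr}\qquad\text{and}\qquad z^{\eg}\;\ge\;\max_{i\in[\mu]} p(i)\;\ge\;p(s).
\end{equation*}
Adding these two lower bounds yields $2z^{\eg}\ge z^{\gr}+p(s)\ge z^{\lp}\ge z^{*}$, which is \eqref{Ineq Greedy Efficiency}.

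The only delicate step is the exchange argument in part (i): one needs to check that after finitely many swaps the solution really reaches the canonical shape (not merely improves monotonically), which is why I would phrase it as a structural statement about optima rather than as an iterative procedure. Everything else reduces to the split-item inequality and the trivial bound $\max_i p(i)\ge p(s)$, so I expect part (ii) to be essentially a one-line consequence of part (i).
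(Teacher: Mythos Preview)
Your proposal is correct. The paper itself does not give a proof but simply cites the standard references (Theorems 2.2.1 and 2.5.4 in Kellerer--Pferschy--Pisinger), and what you have written is precisely the classical argument one finds there: the exchange/prefix-fill characterization for the LP relaxation in part (i), and for part (ii) the chain $z^{*}\le z^{\lp}\le z^{\gr}+p(s)\le 2z^{\eg}$ using the split-item inequality together with $z^{\eg}\ge z^{\gr}$ and $z^{\eg}\ge\max_i p(i)\ge p(s)$.
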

\begin{proof}
	\begin{enumerate}[(i)]
		\item See 
		\textsc{Theorem} 2.2.1 in \cite{kellerer2005knapsack}. 
		
		\item See \textsc{Theorem} 2.5.4 in \cite{kellerer2005knapsack}. 
	\end{enumerate}
\end{proof}
%
%
%
%
%
%
\subsection{The Divide-and-Conquer Approach}
%
%
%
%
The Divide-and-Conquer method for solving the 0-1KP was introduced in \cite{MoralesMartinez}. Here was presented an extensive discussion (theoretical and empirical) on the possible strategies to implement it and conclude that the best strategy is the one described by the following algorithm
\begin{definition}[Divide-and-Conquer pairs and trees]\label{Def Divide and Conquer Setting}
	Let $ \Pi = \big\langle \delta, \big( p(i) \big)_{ i = 1 }^{ \mu }, \big( w(i) \big)_{ i = 1 }^{ \mu }  \big\rangle $ be an instance of \textsc{Problem} \ref{Pblm Original 0-1 KP}
	\begin{enumerate}[(i)]
		\item Let $ V $ be a subset of $ [\mu] $ and $ \delta_{V} \leq \delta $ with $ \delta_{V} \in \N $. A \textbf{subproblem} of \textsc{Problem} \ref{Pblm Original 0-1 KP} is an integer problem with the following structure
		\begin{equation*} 
		\max \sum\limits_{i\, \in \, V } p(i) \, x(i) ,
		\end{equation*}
		subject to
		\begin{align*} 
		& \sum\limits_{i\, \in \, V } w(i) \, x(i) \leq \delta_{V},
		\end{align*}
		\begin{align*} 
		& x(i) \in \{0,1\}, &
		& \text{for all } i \,\in \,V .
		\end{align*}	
		In the sequel, the subproblem will be denoted by $ \Pi_{V} \defining \big\langle \delta_{V}, \big( x(i) \big)_{i \in V} ,  \big( w(i) \big)_{i \in V} \big\rangle $
		
		\item Let $ (V_{0}, V_{1}) $ be a set partition of $ [\mu] $ and let $ (\delta_{0}, \delta_{1}) $ be an integer partition of $ \delta $ (i.e., $ \delta = \delta_{0} + \delta_{1} $). We say a Divide-and-Conquer pair of \textsc{Problem} \ref{Pblm Original 0-1 KP} is the couple of subproblems $ \big(\Pi_{b}: \, b \in \{0,1\} \big) $, each with input data $ \Pi_{b} = \big\langle \delta_{b}, \big( p(i)\big)_{ i \, \in \, V_{b} }, \big( w(i)\big)_{ i \, \in \, V_{b} } \big\rangle $.
		%
		In the sequel, we refer to $ \big(\Pi_{b}, b = 0, 1 \big) $ as a \textbf{D\&C pair} and denote by $ z_{b}^{*} $ the optimal solution value of the problem $ \Pi_{b} $. 		
		
		\item A \textbf{D\&C tree} (see \textsc{Example} \ref{Exm 0-1KP and DC tree Asymetric} and \textsc{Figure} \ref{Fig Tree Generated by Balanced Algorithm} below) for \textsc{Problem} \ref{Pblm Original 0-1 KP} is defined recursively by \textsc{Algorithm} \ref{Alg Branch Function}. Its input is an instance $ \Pi_{0} = \big\langle \delta, (p(i))_{i \in [\mu]}, (w(i))_{i \in [\mu]} \big\rangle $ of \textsc{Problem} \ref{Pblm Original 0-1 KP} and a minimum size of subproblems $ \zeta $. It satisfies the following properties
		\begin{enumerate}[a.]
			\item Every \textbf{vertex} of the tree is in \textbf{bijective correspondence} with a \textbf{subproblem} $ \Pi $ of $ \Pi_{0} $.
			
			\item The \textbf{root} of the tree is associated with \textsc{Problem} \ref{Pblm Original 0-1 KP} itself.
			
			\item Every \textbf{internal vertex} $ \Pi $ (which is not a leave) has a \textbf{left} and \textbf{right child}, $ \Pi_{\lt}, \Pi_{\rt} $ respectively. Its children make a D\&C pair for the subproblem $ \Pi $, whose generation is given by \textsc{Algorithm} \ref{Alg Branch Function}.
			
		\end{enumerate} 
		\item Let $ \Pi = \big\langle \delta, (p(i))_{i \in [\mu]}, (w(i))_{i \in [\mu]} \big\rangle $ be an instance of a 0-1KP and let $ \T $ be a D\&C tree. The method uses the search space and objective values
		\begin{align}\label{Eq DAC tree solution}
		& \x_{\T} \defining \bigcup\limits_{ L \text{ is a leave of } \T } \x_{L}, &
		& z_{\T} \defining \sum\limits_{ L \text{ is a leave of } \T } z_{L}  .
		\end{align}
		Here, we introduce some abuse of notation, denoting by $ \x_{L} $ a feasible solution (a vector) of $ \Pi_{L} $ and using the same symbol as a set of chosen items (instead of a vector) in the union operator. In particular, the maximal possible value occurs when all the summands are at its maximum i.e., the method approximates the optimal solution by $ \x_{\T}^{*} \defining \bigcup\{  \x_{L}^{*}: L \text{ is a leave of } \T \}  $ with objective value $ z_{\T}^{*} \defining \sum\{  z_{L}^{*}: L \text{ is a leave of } \T \} $.
		\begin{algorithm}[h!]
			\caption{Divide-and-Conquer tree generation branch function, returns a D\&C tree $ \tree $ of \textsc{Problem} \ref{Pblm Original 0-1 KP}.}
			\label{Alg Branch Function}
			\begin{algorithmic}[1]
				\Function{Branch}{ Subproblem: $ \Pi = \langle \delta, (p(i))_{i \in V}, (w(i))_{i \in V} \rangle  $, D\&C Tree: $ \tree $, Minimum problem size: $ \zeta $ }
				\State \textbf{compute} $ s $ (split item), $ z^{\gr} = \sum_{i = 1}^{s - 1} p(i) $ (objective function value), 
				\State \textbf{compute} $ k = \delta - \sum_{ i = 1 }^{s - 1} w(i) $ (slack) for problem $ \Pi $ 
				\Comment{Greedy Algorithm, Definition \ref{Def The split item} (ii)}
				\State \textbf{compute} $ z^{\eg}  $ for problem $ \Pi $ 
				\Comment{Extended Greedy Algorithm, Definition \ref{Def The split item} (iii)}
				\If {$ z^{\gr} \geq z^{\eg} $ and $ \vert V \vert \geq 2\zeta $}
				\Comment{Branching condition}
				\State $ V_{\lt} \defining \big[ i: i \in V, i \text{ is in odd relative position}   \big] $ \Comment{Computing the left child indexes}
				\State $ V_{\rt} \defining \big[ i: i \in V, i \text{ is in even relative position}   \big] 
				$\Comment{Computing the right child indexes}
				\State $ \delta_{\lt} \defining \lceil \frac{ 1 }{ 2 } \times k \rceil 
				+ \sum\limits_{ i = 1, \, i \, \text{odd} }^{s - 1} w(i) $ \Comment{Computing the left capacity}
				\State $ \delta_{\rt} \defining \lfloor \frac{ 1 }{ 2 } \times k \rfloor 
				+ \sum\limits_{i = 1, \, i \, \text{even} }^{s - 1}w(i) $
				\Comment{Computing the right capacity}
				\State $ \Pi_{\lt} \defining \big\langle \delta_{\lt}, (p(i))_{i \in V_{\lt}}, (w(i))_{i \in V_{\lt}}\big\rangle$
				\Comment{Defining the left child problem $ \Pi_{\lt} $}
				\State $ \Pi_{\rt} \defining \big\langle \delta_{\rt}, (p(i))_{i \in V_{\rt}}, (w(i))_{i \in V_{\rt}}\big\rangle$
				\Comment{Defining the right child problem $ \Pi_{\rt} $}
				\State $ \Pi_{\lt} \hookrightarrow V(\tree)  $, $ ( \Pi, \Pi_{\lt} ) \hookrightarrow E(\tree) $, 
				$ \Pi_{\rt} \hookrightarrow V(\tree)  $, $ ( \Pi, \Pi_{\rt} ) \hookrightarrow E(\tree) $ \\
				\Comment{Pushing problems $ \Pi_{\lt}, \Pi_{\rt} $ as nodes and $ (\Pi, \Pi_{\lt}), ( \Pi, \Pi_{\rt}) $ as arcs of the D\&C tree $ \tree $}
				\State \textsc{Branch}($ \Pi_{\lt}, \tree, \zeta $) 
				\Comment{Recursing for the left subtree}
				\State \textsc{Branch}($ \Pi_{\rt}, \tree, \zeta $) \Comment{Recursing for the right subtree}
				\State \Return $ \tree $ \Comment{output D\&C tree}
				\Else 
				\State \Return $ \tree $ \Comment{output D\&C tree}
				\EndIf
				\EndFunction
			\end{algorithmic}
		\end{algorithm}
	\end{enumerate}
\end{definition}
\begin{remark}[Divide-and-Conquer pairs and trees]\label{Rem Divide-and-Conquer tree}
	Observe the following about the algorithm \ref{Alg Branch Function} defined below 
	\begin{enumerate}[(i)]
		\item The instance of \textsc{Problem} \ref{Pblm Original 0-1 KP}, $ \Pi_{0} = \big\langle \delta, (p(i))_{i \in [\mu]}, (w(i))_{i \in [\mu]} \big\rangle $, to be solved with the Divide-and-Conquer method is assumed to satisfy \textsc{Hypothesis} \ref{Hyp Non Triviality and Sorting}.
		
		\item Before calling the \textsc{Branch} function for the first time, the D\&C tree $ \tree $ must be initialized as $ V(\tree) \defining \{ \Pi_{0} \} , E(\tree) \defining \emptyset $.
		
		\item When defining the ordered sets $ V_{\lt} $ the sentence ``is in odd relative position" is used, signifying:, those indexes which occupy odd positions in the sorted set $ V $ (the analogous holds for $ V_{\rt} $). For instance, observe the subproblem $ \Pi_{1} $ in \textsc{Example} 
		\ref{Exm 0-1KP and DC tree Asymetric}, \textsc{Figure} \ref{Fig Tree Generated by Balanced Algorithm}. Here the indexes $ 1, 5 $ are in odd relative positions (1 and 3 respectively), while $ 3, 7 $ are in even relative positions (2 and 4 respectively) inside the sorted set $ [1, 3, 5, 7] $. Hence, $ V_{\lt} = [ 1, 5 ] $ and $ V_{\rt} = [3, 7] $ (subsets for $ \Pi_{2} $ and $ \Pi_{3} $ subproblems of problem $ \Pi_{1} $).
		
		\item The definition of $ V_{\lt}, V_{\rt} $ subdividing the list of eligible items $ V $ for each node of the tree $ \tree $, is adopted because it has been observed empirically in \cite{MoralesMartinez} (balanced left-right subtrees, Section 4.2) that the Divide-and-Conquer method is expected to produce better results with this branching process.
		%
		\item The condition for branching: ($ z^{\gr} \geq z^{\eg} $ and $ \vert V \vert \geq 2 \zeta $) states that a subproblem will not be further subdivided if $ z^{\gr} < z^{\eg} $ or if the number of items $ \vert V \vert < 2 \zeta $. The first condition is discussed in \textsc{Theorem} \ref{Thm Worst Case Scenario} and \textsc{Remark} \ref{Rem Greedy Algorithm Possible Growth} below, while the second aims to ensure that no problem will be smaller that $ \zeta $. The latte condition is adopted, because it has been observed empirically in \cite{MoralesMartinez} that the Divide-and-Conquer method no longer produces good results beyond a problem size threshold, namely $ \zeta $.
	\end{enumerate}
\end{remark}
\begin{example}[Divide-and-Conquer tree]\label{Exm 0-1KP and DC tree Asymetric}
	Consider the 0-1KP instance described by the table \ref{Tbl Example Problem}, with knapsack capacity $ \delta = 7 $ and number of items $ \mu = 8 $. 
	\begin{table}[h!]
		\begin{centering}
						\rowcolors{2}{gray!25}{white}
			\begin{tabular}{c | c c c c c c c c}
								\rowcolor{gray!80}
				\hline
				$ i $ & 1 & 2 & 3 & 4 & 5 & 6 & 7 & 8
				\\
				$ w(i) $ & 3 & 2 & 3 & 3 & 4 & 7 & 1 & 5\\
				$ p(i) $ & 11.7 & 7.0 & 9.3 & 8.4 & 8.4 & 9.1 & 0.7 & 1.0 \\
				$ g(i) $ & 3.9 & 3.5 & 3.1 & 2.8 & 2.1 & 1.3 & 0.7 & 0.2\\
				\hline		
			\end{tabular}
			\caption{0-1KP problem of \textsc{Example} \ref{Exm 0-1KP and DC tree Asymetric}, knapsack capacity $ \delta = 7 $, number of items $ \mu = 8 $.}
			\label{Tbl Example Problem}
		\end{centering}
	\end{table}

	\noindent In this particular case 
	\begin{align*}
	& s = 3, &
	\x^{ \eg} = \x^{ \gr } & = [1, 1, 0, 0, 0, 0, 0 , 0] ,
	&
	z^{\gr} & = 18.7 = z^{\eg}, \\
	& & & &  k & = 7 - \sum\limits_{ i \, = \, 1 }^{ 8 } w(i) x^{\gr}(i) = 2,
	\\
	& &
	\x^{ *} & = [1, 0, 1, 0, 0, 0, 1, 0] , 
	&
	z^{*} & = 21.7 .
	\end{align*}
	Here, $ k $ denotes the slack in the knapsack. Hence, due to Algorithm \ref{Alg Branch Function} it follows that 
	\begin{align*}
	& \Pi_{ \lt }: &
	V_{ \lt} & = [1, 3, 5, 7 ], &
	\delta_{\lt} = 3 + 1 & \\
	& & & & (3 \text{ from item 1 and 1 from the slack } \lceil \tfrac{ k }{ 2 }\rceil ) , & 
	\\
	& &
	\x_{\lt}^{*} & = [1, 0, 0, 1], &
	z_{\lt}^{*} = 12.4, &
	\\
	& &
	\x_{\lt}^{\gr} = \x_{\lt}^{\eg} & = [1, 0, 0, 0], &
	z_{\lt}^{\gr} = z^{\eg}_{\lt} = 11.7. & \\
	& \Pi_{ \rt }: &
	V_{ \rt} & = [2, 4, 6, 8 ], &
	\delta_{\rt} = 2 + 1 & \\
	& & & & (2 \text{ from item 2 and 1 from the slack } \lfloor \tfrac{ k }{ 2 }\rfloor ) , &
	\\
	& &
	\x_{\rt}^{\eg} = \x_{\rt}^{*} & = [0, 1, 0, 0], &
	z_{\rt}^{\eg} =  z_{\rt}^{*} = 8.4. &
	\\
	& &
	\x_{\rt}^{\gr} & = [1, 0, 0, 0], &
	z_{\rt}^{\gr} = 7.8. &
	\end{align*}
	In this case $ z^{*} > z^{*}_{\lt} + z^{*}_{\rt} $. Next, given that $ z^{\gr}_{\lt} = z^{\eg}_{\lt} $ we repeat the same procedure for $ \Pi_{\lt} $, however we do not branch on $ \Pi_{\rt} $ since $ z^{\gr}_{\rt} < z^{\eg}_{\rt} $; this is observed in \textsc{Table} \ref{Tbl Balanced Tree for Particular Example} and \textsc{Figure} \ref{Fig Tree Generated by Balanced Algorithm}.
	\begin{table}[h!]
		\scriptsize{
			\begin{minipage}{0.3\textwidth}
				\begin{centering}
										\rowcolors{2}{gray!25}{white}
					\begin{tabular}{c | ccccc}
												\rowcolor{gray!80}
						\hline
						\diagbox{Item}{Vertex} & $ V_{0} $ & $ V_{1} $ & $ V_{2} $ & $ V_{3} $ & $ V_{4} $ 
						\\
						\hline
						1 &	1 & 1 & 1 & 0 & 0 \\			
						2 &	1 & 0 & 0 & 0 & 1 \\
						3 &	1 & 1 & 0 & 1 & 0 \\		
						4 &	1 & 0 & 0 & 0 &	1 \\
						5 &	1 & 1 & 1 & 0 & 0 \\		
						6 &	1 & 0 & 0 & 0 & 1 \\
						7 &	1 & 1 & 0 & 1 & 0 \\			
						8 &	1 & 0 & 0 & 0 &	1  \\
						\hline
												\rowcolor{gray!80}
						Capacity $ \delta $ & 7 & 3 &	 3 & 0 & 4  \\
						\hline		
					\end{tabular}
					\caption{\textsc{Algorithm} \ref{Alg Branch Function}, D\&C tree generated for the 0-1KP instance described in \textsc{Table} \ref{Tbl Example Problem}.}
					\label{Tbl Balanced Tree for Particular Example}
				\end{centering}
			\end{minipage}
			%
			\hspace{0.3in}
			\begin{minipage}{0.65\textwidth}
				\centering
				\begin{tikzpicture}
				[scale=.7,auto=left,every node/.style={}]
				\node (n0) at (10,6) {
					$ 
					\begin{pmatrix}
					V_{0} = [1, 2, 3, 4, 5, 6, 7, 8] \\[3pt]
					\delta_{0} = 7
					\end{pmatrix} \equiv \Pi_{0} $
				};
				\node (n1) at (6,3)  {$ 
					\begin{pmatrix}
					V_{1} = [1, 3, 5, 7] \\[3pt]
					\delta_{1} = 3
					\end{pmatrix} \equiv \Pi_{1} $};
				\node (n2) at (3.5,0)  {$ 
					\begin{pmatrix}
					V_{2} = [1, 5]\\[3pt]
					\delta_{2} = 3
					\end{pmatrix} \equiv \Pi_{2} $};
				\node (n3) at (8.5, 0)  {$ 
					\begin{pmatrix}
					V_{3} =  [3, 7]\\[3pt]
					\delta_{3} = 0
					\end{pmatrix} \equiv \Pi_{3} $};
				\node (n4) at (14,3)  {$ 
					\begin{pmatrix}
					V_{4} = [2, 4, 6, 8] \\[3pt]
					\delta_{4} = 4
					\end{pmatrix} \equiv \Pi_{4} $};
				
				\foreach \from/\to in {n0/n1, n1/n2,n1/n3, n0/n4
				}
				\draw[thick, ->] (\from) -- (\to);
				\end{tikzpicture}
				
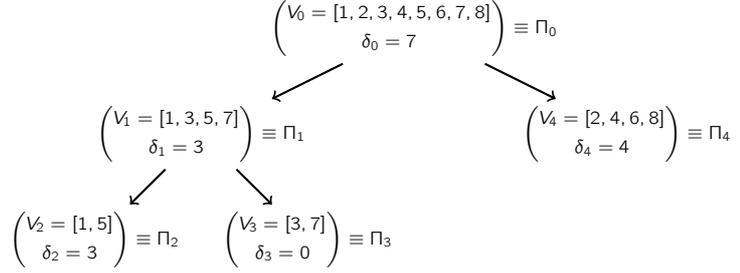
\captionof{figure}{\textsc{Algorithm} \ref{Alg Branch Function} D\&C tree generated for \textsc{Table} \ref{Tbl Balanced Tree for Particular Example}. Every vertex $ \Pi_{\ell} $ is a subproblem of the 0-1 KP instance $ \Pi_{0} = \langle \delta_{0},  (p(i))_{i \in V_{0}} , (w(i))_{i \in V_{0}} \rangle $.}
				\label{Fig Tree Generated by Balanced Algorithm}
			\end{minipage}
		}
	\end{table}
\end{example}
\begin{theorem}\label{Thm quality certificate DCM}
	Let $ \Pi = \big\langle \delta, \big( p(i) \big)_{ i = 1 }^{ \mu }, \big( w(i) \big)_{ i = 1 }^{ \mu } \big\rangle  $ be an instance of the 0-1KP introduced in \ref{Pblm Original 0-1 KP}. Let $ (V_{n})_{ n = 1 }^{ N } $ be a partition of $ [\mu] $ and let $ \tilde{\x} \defining \big( \tilde{x}(i) \big)_{i = 1}^{\mu} $ be a fixed feasible solution to the 0-1 KP problem. Hence, if  
	\begin{align}\label{Stmt Capacity Split}
	& \delta = \sum\limits_{n \, = \, 1}^{ N } \delta_{n} \, , &
	& \sum\limits_{i\, \in \, V_{n}  } w(i) \, \tilde{x}(i) \leq \delta_{n} , \text{ for all } n \in [ N ] , 
	\end{align}
	then 
	\begin{equation}\label{Ineq quality certificate DCM}
	\sum\limits_{i \, = \, 1}^{\mu} p(i) \, \tilde{x}(i) \leq 
	\sum\limits_{ n \, = \, 1 }^{ N } z^{*}_{n} .
	\end{equation}
	Here $ z^{*}_{n} $ is the optimal solution of the subproblem $ \Pi_{n} = \big\langle \delta_{n}, \big( p(i)\big)_{ i \, \in \, V_{n} }, \big( w(i)\big)_{ i \, \in \, V_{n} } \big\rangle $ for all $ n  = 1, \ldots , N $. In the following we refer to $ \tilde{\x} $ as the \textbf{control solution}. 
\end{theorem}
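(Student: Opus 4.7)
The plan is to exhibit, for each block $V_n$ of the partition, a feasible solution of the subproblem $\Pi_n$ obtained by restricting the control solution $\tilde{\x}$, and then to sum the resulting profit inequalities across $n$.

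More precisely, for every $n \in [N]$ I would define $\tilde{\x}^{(n)} \defining \big(\tilde{x}(i)\big)_{i \in V_n}$. Two things have to be checked: first, that $\tilde{\x}^{(n)}$ is feasible for the subproblem $\Pi_n = \big\langle \delta_n, (p(i))_{i \in V_n}, (w(i))_{i \in V_n} \big\rangle$; second, that the profits aggregate correctly. Feasibility of $\tilde{\x}^{(n)}$ is immediate: the binary constraint $\tilde{x}(i) \in \{0,1\}$ is inherited from $\tilde{\x}$ being a feasible solution of the original 0-1KP, and the capacity constraint $\sum_{i \in V_n} w(i)\, \tilde{x}(i) \leq \delta_n$ is exactly the second hypothesis in \eqref{Stmt Capacity Split}.

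Once feasibility is established, by definition of $z^*_n$ as the optimal value of $\Pi_n$ we get
\begin{equation*}
\sum\limits_{i \, \in \, V_n} p(i)\, \tilde{x}(i) \;\leq\; z^*_n, \qquad n = 1, \ldots, N.
\end{equation*}
Summing these $N$ inequalities and using the fact that $(V_n)_{n=1}^N$ is a partition of $[\mu]$, so that
\begin{equation*}
\sum\limits_{i \, = \, 1}^{\mu} p(i)\, \tilde{x}(i) \;=\; \sum\limits_{n \, = \, 1}^{N} \sum\limits_{i \, \in \, V_n} p(i)\, \tilde{x}(i),
\end{equation*}
yields the desired inequality \eqref{Ineq quality certificate DCM}. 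The first hypothesis in \eqref{Stmt Capacity Split} ($\delta = \sum_n \delta_n$) is not strictly needed in this argument; it is a consistency requirement on the partition of the capacity rather than a step in the bound.

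There is essentially no obstacle here: the result is a packaging of the straightforward observation that any feasible $\tilde{\x}$ splits into feasible restrictions on the blocks, and that each restricted profit is dominated by the corresponding subproblem optimum. The slight care to take is keeping the bookkeeping clean (partition of indices versus partition of capacity) so that the two sums on the right of the decomposition match the hypotheses of Problem \ref{Pblm Original 0-1 KP} applied to each $\Pi_n$.
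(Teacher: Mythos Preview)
Your proof is correct and follows essentially the same approach as the paper's own proof: restrict $\tilde{\x}$ to each block $V_n$, verify feasibility in $\Pi_n$ via the capacity hypothesis, bound the restricted profit by $z^*_n$, and sum using that $(V_n)_{n=1}^N$ partitions $[\mu]$. Your observation that the condition $\delta = \sum_n \delta_n$ is not actually used in the argument is accurate and worth noting.
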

\begin{proof}
	It is direct to see that that $ \big( \tilde{x}(i) \big)_{i \in V_{n}} $ is a feasible solution of $ \Pi_{n} $ for all $ n = 1, \ldots, N $, due to the capacities condition \eqref{Stmt Capacity Split}. Hence, $ \sum\limits_{i \, \in \, V_{n}} p(i) \, \tilde{x}(i) \leq z^{*}_{n} $ for each $ n = 1, \ldots , N $, then
	\begin{equation*}
	\sum\limits_{ n \, = \, 1 }^{ N }\sum\limits_{i \, \in \, V_{n}} p(i) \, \tilde{x}(i) 
	\leq \sum\limits_{ n \, = \, 1 }^{ N }\ z^{*}_{n} . 
	\end{equation*}
	Given that $ (V_{n})_{ n = 1 }^{ N } $ is a partition of $ [\mu] $, the inequality \eqref{Ineq quality certificate DCM} follows. 
\end{proof}
\begin{remark}\label{Rem Capacities Split}
	Notice that if an optimal solution $ \big(x^{*}(i) \big)_{ i = 1 }^{ \mu } $ of \textsc{Problem} \ref{Pblm Original 0-1 KP} satisfies the set of capacities constraint \eqref{Stmt Capacity Split} then $ z^{*} \leq  \sum_{ n \, = \, 1 }^{ N }\ z^{*}_{ n }$ i.e., the D\&C collection of subproblems $ (\Pi_{ n } )_{ n = 1 }^{ N } $ reduces the computational complexity of \textsc{Problem} \ref{Pblm Original 0-1 KP} at no expense of precision, which is the ideal scenario.
\end{remark}
\begin{theorem}\label{Thm Worst Case Scenario}
	Let $ \Pi $ be a 0-1KP instance,
	\begin{enumerate}[(i)]
		\item Let $ \Pi_{\lt}, \Pi_{\rt} $ be a D\&C pair for the 0-1KP instance $ \Pi $. Let $ \x^{\gr}, \x^{\gr}_{\lt}, \x^{\gr}_{\rt} $ and $ z^{\gr}, z^{\gr}_{\lt} + z^{\gr}_{\rt},$  be their corresponding solutions and objective function values, furnished by the greedy algorithm. Then $ \x^{\gr} $ and $ V_{\lt}, V_{\rt} $ satisfy the hypothesis of \textsc{Theorem} \ref{Thm quality certificate DCM}. Moreover,
		\begin{equation}\label{Ineq DC pair greedy algorithm}
		z^{\gr} \leq z^{\gr}_{\lt} + z^{\gr}_{\rt}, 
		\end{equation}
		where $ z^{\gr}_{\lt}, z^{\gr}_{\rt} $ are the greedy algorithm solutions for $ \Pi_{\lt} $ and $ \Pi_{\rt} $ respectively.
		
		\item 
		Let $ z^{*}_{\T} $ be the optimal approximation value furnished by a D\&C tree $ \T $ of $ \Pi $, generated by \textsc{Algorithm} \ref{Alg Branch Function}. Then
		\begin{equation}\label{Ineq Worst Case Scenario}
		\frac{ 1 }{ 2 } \leq \frac{ z^{*}_{\T} }{ z^{*} }, 
		\end{equation}
		where $ z^{*} $ is the optimal value for the problem $ \Pi $.
	\end{enumerate}
\end{theorem}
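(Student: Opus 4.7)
The plan is to verify the hypotheses of \textsc{Theorem} \ref{Thm quality certificate DCM} with $\tilde{\x} = \x^\gr$ and $N=2$, and then separately establish the inequality $z^\gr \leq z^\gr_\lt + z^\gr_\rt$. Write $A_\lt \defining \{1,\ldots,s-1\} \cap V_\lt$, $A_\rt \defining \{1,\ldots,s-1\} \cap V_\rt$ and set $W_\lt = \sum_{i \in A_\lt} w(i)$, $W_\rt = \sum_{i \in A_\rt} w(i)$; these are precisely the items chosen by $\x^\gr$ on each side, and $W_\lt + W_\rt = \delta - k$. By construction in \textsc{Algorithm} \ref{Alg Branch Function}, the capacities split as $\delta_\lt + \delta_\rt = \delta$ and are designed so that the restrictions of $\x^\gr$ to $V_\lt$ and $V_\rt$ are feasible in $\Pi_\lt$ and $\Pi_\rt$ respectively (equivalently, $\delta_\lt \geq W_\lt$ and $\delta_\rt \geq W_\rt$), which is exactly condition \eqref{Stmt Capacity Split}. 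For the inequality, I would observe that $V_\lt$ inherits the global decreasing efficiency order \eqref{Ineq Sorting}, and since the items of $A_\lt$ have the highest efficiencies within $V_\lt$ they occupy the first $\#A_\lt$ slots of that sorted list. Consequently the greedy algorithm applied to $\Pi_\lt$ accepts every item of $A_\lt$ in sequence, because the running cumulative weight stays at most $W_\lt \leq \delta_\lt$. Hence $z^\gr_\lt \geq \sum_{i \in A_\lt} p(i)$ and, symmetrically, $z^\gr_\rt \geq \sum_{i \in A_\rt} p(i)$; adding and using $z^\gr = \sum_{i \in A_\lt} p(i) + \sum_{i \in A_\rt} p(i)$ gives \eqref{Ineq DC pair greedy algorithm}.

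\textbf{Proof plan for Part (ii).} The plan is to strengthen the target bound to a uniform pointwise inequality $z^*_\T(\Pi) \geq z^\gr(\Pi)$ at every node, and then apply \textsc{Theorem} \ref{Thm Greed and LP Solutions}(ii) at the root. For a node $\Pi$ of $\T$, let $z^*_\T(\Pi) \defining \sum\{z^*_L : L \text{ is a leaf of the subtree rooted at } \Pi\}$ and let $z^\gr(\Pi)$ denote the greedy value of $\Pi$. By induction on subtree depth: at a leaf the inequality is immediate since $z^*_\T(\Pi) = z^*(\Pi) \geq z^\gr(\Pi)$; at an internal node with children $\Pi_\lt, \Pi_\rt$, the inductive hypothesis gives $z^*_\T(\Pi_\lt) \geq z^\gr_\lt$ and $z^*_\T(\Pi_\rt) \geq z^\gr_\rt$, and Part (i) applied to $\Pi$ yields $z^*_\T(\Pi) = z^*_\T(\Pi_\lt) + z^*_\T(\Pi_\rt) \geq z^\gr_\lt + z^\gr_\rt \geq z^\gr(\Pi)$. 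Now specialize to the root. If the root is itself a leaf, then $z^*_\T = z^* \geq z^*/2$; otherwise the branching condition of \textsc{Algorithm} \ref{Alg Branch Function} forces $z^\gr \geq z^\eg$ at the root, and \textsc{Theorem} \ref{Thm Greed and LP Solutions}(ii) gives $z^\eg \geq z^*/2$. Chaining the inequalities delivers $z^*_\T \geq z^\gr \geq z^\eg \geq z^*/2$, which is \eqref{Ineq Worst Case Scenario}.

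\textbf{Main obstacle.} The delicate step is the second half of Part (i): confirming that after splitting $V$ into the odd/even position subsets $V_\lt, V_\rt$, the greedy algorithm run independently on $\Pi_\lt$ still recovers the entire block $A_\lt$. This rests on two structural facts — preservation of the global efficiency ordering inside $V_\lt$ (so that $A_\lt$ appears as the initial segment of the sorted $V_\lt$), and the design constraint $\delta_\lt \geq W_\lt$ built into \textsc{Algorithm} \ref{Alg Branch Function}. Once this pointwise greedy-recovery lemma is secured, the capacity partition \eqref{Stmt Capacity Split} and the recursive bound across the whole tree reduce to routine manipulation and a straightforward induction.
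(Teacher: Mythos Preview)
Your argument is correct. For Part~(i) you supply the explicit verification that the paper omits (it simply declares both claims ``direct'' from the capacity definitions in \textsc{Algorithm}~\ref{Alg Branch Function}); your observation that $A_\lt$ is the initial segment of the sorted list $V_\lt$ with total weight $W_\lt \leq \delta_\lt$ is exactly what forces the greedy run on $\Pi_\lt$ to accept all of $A_\lt$ before it can possibly halt.

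For Part~(ii) the two proofs diverge in structure. The paper fixes the root's greedy vector $\x^\gr$ and, by induction on the number of branching iterations, shows that $\x^\gr$ continues to satisfy the capacity-split condition \eqref{Stmt Capacity Split} with respect to the current leaf partition $\{V_L\}$; a single application of \textsc{Theorem}~\ref{Thm quality certificate DCM} at the end then gives $z^\gr \leq \sum_L z^*_L = z^*_\T$. You instead prove the local scalar invariant $z^*_\T(\Pi) \geq z^\gr(\Pi)$ node by node, using only the inequality \eqref{Ineq DC pair greedy algorithm} from Part~(i) as the inductive step, and never re-invoking \textsc{Theorem}~\ref{Thm quality certificate DCM}. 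Your route is a bit more economical---it dispenses with tracking feasibility of the root solution through intermediate levels---while the paper's route has the advantage of making explicit that the single vector $\x^\gr$ serves as a global control solution for the entire leaf partition, a structural fact used elsewhere (cf.\ \textsc{Remark}~\ref{Rem Greedy Algorithm Possible Growth}(ii)).
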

\begin{proof}
	\begin{enumerate}[(i)]
		\item It is direct to see that $ \x^{\gr} $ and $ V_{\lt}, V_{\rt} $ satisfy the hypothesis of \textsc{Theorem} \ref{Thm quality certificate DCM}, because of how $ \delta_{\lt} $ and $ \delta_{\rt} $ are defined in \textsc{Algorithm} \ref{Alg Branch Function}. Moreover, such definition ensures that the inequality \eqref{Ineq DC pair greedy algorithm} holds.

		\item Let $ \x^{\eg} \defining \big( x^{\eg}(i) \big)_{i = 1}^{\mu} $ be the extended-algorithm solution for the problem $ \Pi $; observe that if $ \x^{\gr} \neq \x^{\eg} $ then $ \T = \{ \Pi_{0} \}$, due to the method's definition (see \textsc{Algorithm} \ref{Alg Branch Function}) and the result is obvious. Hence, from now on we assume that $ \x^{\gr} = \x^{\eg} $. 
		
		Consider $ \big\{ \Pi_{L}  = \big\langle \delta_{L}, \big( p(i)\big)_{ i \, \in \, V_{L} }, \big( w(i)\big)_{ i \, \in \, V_{L} } \big\rangle : L \text{ is a leave of } \T \big\} $, due to the theorem 4 in \cite{MoralesMartinez}, the collection $ \big\{ V_{L}: L \text{ is a leave of } \T \big\} $ is a partition of $ [\mu] $. Then, in order to prove the result, it suffices to show that $ \x^{\gr} $ and $ \big\{ V_{L}: L \text{ is a leave of } \T \big\} $ satisfy the hypothesis of \textsc{Theorem} \ref{Thm quality certificate DCM}. We prove this by induction on the number of Divide-and-Conquer iterations used to generate the tree. Let $ \{ \Pi \} \defining \T_{0} , \T_{1}, \ldots, \T_{n} = \T $ be the colection of trees attained by subsequent iterations of the Divide-and-Conquer method, with $ \T_{0} $ the original problem and $ \T_{n} $ the tree of interest. For $ \T_{0} $ the result is obvious and for $ \T_{1} $ this was proved in the previous part. Denote by $ (\Pi^{j})_{ j = 1}^{ J } $ the leaves of $ \T_{n - 1} $, due to the induction hypothesis, the solution $ \x^{\gr} $ and $ (V^{j} )_{j = 1}^{J} $ satisfy the hypothesis of \textsc{Theorem} \ref{Thm quality certificate DCM}. But then, due to the first part, for each problem $ \Pi^{j} $, it holds that 
		\begin{align*}
		& \delta^{j} = \delta^{j}_{\lt} + \delta^{j}_{\rt} \, , &
		& \sum\limits_{i\, \in \, V^{j}_{\lt}  } w(i) \, x^{\gr}(i) \leq \delta^{j}_{\lt} , &
		& \sum\limits_{i\, \in \, V^{j}_{\rt}  } w(i) \, x^{\gr}(i) \leq \delta^{j}_{\rt} .
		\end{align*}
		Hence,
		\begin{equation*}
		\delta = \sum\limits_{j \, = \, 1}^{ J } \delta^{j} 
		= \sum\limits_{j \, = \, 1}^{ J }  \delta^{j}_{\lt} + \delta^{j}_{\rt}
		= \sum\limits_{L \text{ leave of } \T } \delta_{L}
		\end{equation*}
		and recalling that $ \{ L : L \text{ is a leave of } \T \} $ is in bijective correspondence with $ \big\{ \Pi^{j}_{\side}: j = 1, \ldots, J, \side \in \{ \lt, \rt\} \big\} $, we conclude that $ \x^{\gr} $ and $ \big\{V_{L}: L \text{ is a leave of } \T \big\} $ satisfy the hypothesis of \textsc{Theorem} \ref{Thm quality certificate DCM}. Hence,
		\begin{equation*}
		z^{\eg} = z^{\gr} = \sum\limits_{i \, = \, 1}^{\mu} p(i) \, x(i) \leq 
		\sum\limits_{ n \, = \, 1 }^{ \nu } z^{*}_{n} = z^{*}_{\T}.
		\end{equation*}
		But then, $ \dfrac{ z^{*}_{\T} }{z^{*}}
		\geq
		\dfrac{ z^{\gr} }{z^{*}} 
		\geq \dfrac{ 1 }{ 2 } $, 
		where the last bound holds due to the inequality \eqref{Ineq Greedy Efficiency} from \textsc{Theorem} \ref{Thm Greed and LP Solutions} part (ii).
	\end{enumerate}
\end{proof}
\begin{remark}\label{Rem Greedy Algorithm Possible Growth}
	We observe some facts in \textsc{Theorem} \ref{Thm Worst Case Scenario} above
	\begin{enumerate}[(i)]
		
		\item It is possible to have a strict inequality in the expression \eqref{Ineq DC pair greedy algorithm}. To see this, let $ s $ be the split items for $ \Pi $ then, $ w(s) > k = \delta - \sum_{i \, = \, 1}^{s - 1} w(i) $ which stops the algorithm. However, it is possible that $ w(s + 1) \leq \lceil \tfrac{ k }{ 2} \rceil $ for $ s $ even, or $ w( s + 1) \leq \lfloor \tfrac{ k }{ 2} \rfloor $ for $ s $ odd. In these cases we would necessarily have $ z^{\gr} < z^{\gr}_{\lt} + z^{\gr}_{\rt} $, because one more item could be packed by the greedy algorithm in the problem $ \Pi_{\side} $ ($ \side \in \{ \lt, \rt \} $), for which the item $ s $ is not assigned.
		
		\item When $ \x^{\gr} =\x^{\eg} $, this is a control solution for any D\&C tree built by \textsc{Algorithm} \ref{Alg Branch Function}. In order to have this global control solution, there is no need to require that $ z^{\gr}_{\Pi} = z^{\eg}_{\Pi} $ for every node $ \Pi $ of $ \T $ as the algorithm requires for branching. However, it has been observed empirically, that removing this requirement, heavily deteriorates the quality of the solution in a Divided-and-Conquer iteration.
		
		\item If $ z^{\gr} < z^{\eg} $ a rule for assigning capacities $ \delta_{\lt}, \delta_{\rt} $ different from the one used by \textsc{Algorithm} \ref{Alg Branch Function} could be defined. However, given that the extended-greedy algorithm is intractable from the probabilistic point of view (as mentioned in \textsc{Remark} \ref{Rem Greedy Algorithms}), this would also make intractable the probabilistic analysis of the Divide-and-Conquer method. 
		
		\item In the proof of \textsc{Theorem} \ref{Thm Worst Case Scenario} we introduced a slight inconsistency with the notation adopted so far, by switching from subindex to superscript to denote a particular family of problems $ \Pi^{j} $ and its associated elements $ \delta^{j} , V^{j} $. This was done out of necessity this one time throughout the paper.
		
	\end{enumerate}
\end{remark}
%
%
%
\subsection{Results from Combinatorics and Probabiilty}
%
%
We devote this subsection to recall some previous background necessary to analyze the 0-1KP from the probabilistic point of view. We begin with a concept from combinatorics 
\begin{definition}[Compositions]\label{Def Compositions}
	Let  $ (a_{ 1 }, \ldots, a_{ m } ) $ be a sequence of integers satisfying $ \sum_{ i \, = \, 1 }^{ m } a_{ i } = n $.
	%
	If $ a_{ i } \geq 1 $ for all $ i = 1, \ldots, m $, the sequence is said to be a \textbf{composition} of $ n $ in $ m $ parts. (Naturally $ m $ should be less or equal that $ n $.)
	%
\end{definition}
\begin{theorem}\label{Thm Theorems from Bona}
	Let $ n,  m $ be two natural numbers with $ m \leq n $ then
	%
	\begin{enumerate}[(i)]
		\item 
		\begin{equation}\label{Eq Theorem from Bona 2}
		{ n \choose m } = 
		{ m - 1 \choose m - 1 }
		+ { m \choose m - 1 }
		+ \ldots 
		+
		{ n - 1 \choose m - 1 } .
		\end{equation}
		
		
		\item The number of compositions of $ n $ into $ m $ parts is $ \displaystyle { n - 1 \choose m - 1 } $.
		
		\item The following identity holds
		\begin{equation}\label{Eq Binomial Coefficients Indexes Shift}
		{n \choose m} = \frac{ n }{ m } { n -1 \choose m - 1 } .
		\end{equation}
	\end{enumerate}
\end{theorem}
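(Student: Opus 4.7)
The three parts are classical combinatorial identities, so the plan is to handle each with a short, self-contained argument and let the order of attack be (iii), then (ii), then (i), since (iii) is purely algebraic, (ii) sets up a bijective viewpoint, and (i) is then cleanest as an induction built on Pascal's rule.

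For part (iii), the plan is a one-line factorial manipulation. Writing $\binom{n}{m} = \frac{n!}{m!\,(n-m)!}$ and pulling out a factor of $n$ from $n!$ in the numerator and a factor of $m$ from $m!$ in the denominator gives $\frac{n}{m}\cdot \frac{(n-1)!}{(m-1)!\,(n-m)!} = \frac{n}{m}\binom{n-1}{m-1}$. No subtlety here; it just needs to be written out.

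For part (ii), I would give a stars-and-bars bijection. A composition $(a_1,\ldots,a_m)$ of $n$ with every $a_i \geq 1$ corresponds to placing $m-1$ separators into the $n-1$ gaps between $n$ identical units, where the $i$-th block of units has size $a_i$; conversely every choice of $m-1$ distinct gaps out of the $n-1$ available produces such a composition. Hence the count equals $\binom{n-1}{m-1}$. The only thing worth emphasizing is that the positivity constraint $a_i \geq 1$ is exactly what forces the separators to sit in distinct gaps, so that we get $\binom{n-1}{m-1}$ rather than a multiset coefficient.

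For part (i), the plan is induction on $n \geq m$, using Pascal's rule $\binom{n}{m} = \binom{n-1}{m} + \binom{n-1}{m-1}$. The base case $n = m$ reads $\binom{m}{m} = 1 = \binom{m-1}{m-1}$. For the inductive step, assume the identity at $n-1$, i.e.\ $\binom{n-1}{m} = \sum_{k=m-1}^{n-2}\binom{k}{m-1}$, and add $\binom{n-1}{m-1}$ to both sides; by Pascal's rule the left side becomes $\binom{n}{m}$, and the right side telescopes into $\sum_{k=m-1}^{n-1}\binom{k}{m-1}$, which is exactly \eqref{Eq Theorem from Bona 2}. I expect the only mild bookkeeping obstacle to be the boundary case where $n = m$ and the sum on the right has only one term, and making sure the induction starts at the correct index; otherwise the argument is mechanical. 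A fully combinatorial alternative (classifying $m$-subsets of $[n]$ by their maximum element $k+1$, and counting the $\binom{k}{m-1}$ ways to choose the remaining $m-1$ elements from $[k]$) could replace the induction if a more conceptual proof is preferred.
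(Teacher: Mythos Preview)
Your arguments for all three parts are correct. The paper itself does not give self-contained proofs here: for (i) and (ii) it simply cites Theorem~4.5 and Corollary~5.3 of B\'ona's textbook \cite{BonaWalk}, and for (iii) it says ``by direct calculation.'' Your factorial manipulation for (iii) is exactly that direct calculation, and your stars-and-bars bijection for (ii) and Pascal-rule induction for (i) are standard textbook arguments of the same flavor one finds in the cited reference, so there is no meaningful divergence in approach---you have simply written out what the paper delegates to a citation.
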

\begin{proof}
	\begin{enumerate}[(i)]
		\item	See Theorem 4.5 in \cite{BonaWalk}.
		
		
		\item See Corollary 5.3 in \cite{BonaWalk}.
		
		\item By direct calculation.  See also Theorem 2.4 in \cite{AllenbySlomson} for a combinatorial proof of this fact.
	\end{enumerate}
\end{proof}
\begin{proposition}\label{Thm Balance Even-Odd sums in compositions}
	Let $ \A $ be the set of compositions of $ n $ in $ m $ parts. Denote by $ \alpha = (a_{1}, \ldots, a_{m}) $,  $ \beta = (b_{1}, \ldots, b_{m}) $, the elements of $ \A $ and define the quantities
	\begin{align}\label{Eqn Balance Even-Odd sums in compositions}
	\Sigma_{\odd}\defining & \sum\limits_{ \alpha \, \in \, \A } \sum\limits_{i \odd } a_{i}, & 
	\Sigma_{\even}\defining & \sum\limits_{ \alpha \, \in \, \A } \sum\limits_{i \even } a_{i} .
	\end{align}
	\begin{enumerate}[(i)]
		\item If $ m $ even, then $ \Sigma_{\odd} \equiv \Sigma_{\even} $.
		
		\item If $ m $ odd, then 
		$ \displaystyle \Sigma_{\odd} \equiv 
		\Sigma_{\even} + \frac{ 1 }{ 2 } \frac{n + 1}{\ell + 1} { n \choose 2 \ell + 1}  
		= \Sigma_{\even} + \frac{ 1 }{ 2 } \frac{n + 1}{\ell + 1} \# \A $, where $ m = 2 \ell + 1 $.
	\end{enumerate}
\end{proposition}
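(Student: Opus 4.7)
The plan is to exploit the permutation symmetry built into the definition of compositions. For any pair of indices $i,j \in [m]$, the map on $\A$ that swaps the $i$-th and $j$-th coordinates is an involution, hence the position-wise total
\[
S_i \defining \sum_{\alpha\,\in\,\A} a_i
\]
is independent of $i$. Writing $S$ for this common value and summing over $i$ yields $m\,S = \sum_{\alpha}\sum_{i=1}^{m} a_i = n\,\#\A$. Combining this with $\#\A = \binom{n-1}{m-1}$ from \textsc{Theorem} \ref{Thm Theorems from Bona}(ii) and identity \eqref{Eq Binomial Coefficients Indexes Shift}, I obtain
\[
S \;=\; \frac{n}{m}\binom{n-1}{m-1} \;=\; \binom{n}{m}.
\]

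Both assertions then reduce to counting parities in $[m]$. For (i), if $m$ is even the set $[m]$ contains $m/2$ odd and $m/2$ even indices, so $\Sigma_\odd = (m/2)\,S = \Sigma_\even$. (This case also admits a one-line proof via the reversal involution $(a_1,\ldots,a_m)\mapsto(a_m,\ldots,a_1)$, which swaps index parities precisely because $m+1$ is odd.) For (ii), with $m = 2\ell+1$ the set $[m]$ contains $\ell+1$ odd and $\ell$ even indices, so
\[
\Sigma_\odd - \Sigma_\even \;=\; (\ell+1)\,S - \ell\,S \;=\; S.
\]
The last step is then to rewrite $S$ in the form announced in \eqref{Eqn Balance Even-Odd sums in compositions}, for which I would apply identity \eqref{Eq Binomial Coefficients Indexes Shift} iteratively in order to pass between $\binom{n}{m}$, $\binom{n}{2\ell+1}$ and $\#\A = \binom{n-1}{2\ell}$.

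The main obstacle I foresee is precisely this last algebraic repackaging in (ii): once the identity $\Sigma_\odd - \Sigma_\even = S$ is in hand, one must apply \eqref{Eq Binomial Coefficients Indexes Shift} carefully to match the explicit prefactor $\frac{1}{2}\frac{n+1}{\ell+1}$ and to reconcile the two parallel forms (one with $\binom{n}{2\ell+1}$, one with $\#\A$) on the right-hand side. The symmetry argument, the value of $\#\A$ and the parity count of $[m]$ are all routine once the coordinate-swap bijection is recorded.
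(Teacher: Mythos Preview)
Your symmetry argument is correct and is a genuinely different --- and considerably more economical --- route than the paper's. The paper proves (i) via the specific involution that swaps adjacent pairs $(2j-1,2j)$; for (ii) it fixes the last coordinate, partitions $\A$ according to the value $a_{2\ell+1}=i$, and then evaluates $\sum_i i\,\#\A_i$ through a chain of binomial identities. Your observation that \emph{every} coordinate transposition is a bijection on $\A$, so that all position sums $S_i$ coincide and equal $S=\binom{n}{m}$, replaces that computation by a two-line parity count and treats (i) and (ii) uniformly rather than as separate cases.

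The obstacle you flag in your last paragraph is real, but not for the reason you anticipate: the displayed constant in part (ii) of the statement is erroneous. Your argument yields $\Sigma_{\odd}-\Sigma_{\even}=\binom{n}{m}=\tfrac{n}{m}\,\#\A$, and this is the correct value (e.g.\ for $n=4$, $m=3$ the three compositions $(1,1,2),(1,2,1),(2,1,1)$ give $\Sigma_{\odd}=8$, $\Sigma_{\even}=4$, difference $4=\binom{4}{3}$). By contrast, $\tfrac12\tfrac{n+1}{\ell+1}\binom{n}{2\ell+1}$ evaluates to $5$ on that example, and the second printed form $\tfrac12\tfrac{n+1}{\ell+1}\,\#\A$ gives yet a third value; the slip in the paper's proof is taking $\#\A_i=\binom{n-i}{2\ell}$ instead of the correct $\binom{n-i-1}{2\ell-1}$. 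So do not try to force your $\binom{n}{m}$ into the printed prefactor --- your expression is the one to keep, and the per-composition correction it yields is $\tfrac{n}{m}$, not $\tfrac12\tfrac{n+1}{\ell+1}$.
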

\begin{proof}
	\begin{enumerate}[(i)]
		\item Since $ m = 2\ell $, consider the permutation $ \sigma \in \S([m]) $ defined by
		\begin{align*}
		\sigma & : [ m ] \rightarrow [ m ], &
		\sigma(i) \defining 
		\begin{cases}
		i + 1 , & i \text{ is odd},\\
		i - 1 , & i \text{ is even}.
		\end{cases}
		\end{align*}
		Define the map 
		\begin{align*}
		B : \A & \rightarrow \A \\
		\alpha = (a_{1}, a_{2}, \ldots, a_{2\ell - 1}, a_{2\ell}) & \mapsto
		(a_{2}, a_{1}, \ldots, a_{2\ell}, a_{2\ell - 1}) 
		=
		(a_{\sigma(1)}, a_{\sigma(2)}, \ldots, a_{\sigma(2\ell - 1)}, a_{\sigma(2\ell)}) 
		. 
		\end{align*}
		It is direct to see that $ B $ is a bijection, then $  \sum\limits_{ \alpha \, \in \, \A } \sum\limits_{i \, = \, 1 }^{m} a_{i} 
		=  \sum\limits_{ \substack{ \beta \, = \, B(\alpha) \\ \alpha \, \in \, \A }} \sum\limits_{i \, = \, 1 }^{m} b_{i} $. Moreover
		\begin{align*}
		%
		\Sigma_{\odd} =  \sum\limits_{ \alpha \, \in \, \A } \sum\limits_{i \text{ odd }} a_{i} 
		=  \sum\limits_{ \substack{ \beta \, = \, B(\alpha) \\ \alpha \, \in \, \A }} \sum\limits_{i \text{ odd }} b_{i}
		= \sum\limits_{ \alpha \, \in \, \A } \sum\limits_{i \text{ odd } } a_{ \sigma(i) }  
		& = \sum\limits_{ \alpha \, \in \, \A } \sum\limits_{i \text{ even }} a_{i} 
		= \Sigma_{\even} ,
		\end{align*}
		which concludes the first part.
		
		\item Since $ m = 2\ell + 1$, consider the permutation $ \sigma \in \S([m]) $ defined by
		\begin{align*}
		\sigma & : [ m ] \rightarrow [ m ], &
		\sigma(i) \defining 
		\begin{cases}
		2 \ell + 1 , & i = 2 \ell + 1 ,\\
		i + 1 , & i \text{ is odd, } i \neq 2 \ell + 1 ,\\
		i - 1 , &  i \text{ is even}.
		\end{cases}
		\end{align*}
		As in the previous part, define the map 
		\begin{align*}
		B : \A \rightarrow & \A \\
		\alpha = (a_{1}, a_{2}, \ldots, a_{2\ell - 1}, a_{2\ell}, a_{2\ell + 1}) \mapsto &
		(a_{2}, a_{1}, \ldots, a_{2\ell}, a_{2\ell - 1}, a_{2\ell + 1}) \\
		& =
		(a_{\sigma(1)} , a_{\sigma(2)}, \ldots, a_{\sigma(2\ell - 1)}, a_{\sigma(2\ell)}, a_{\sigma(2\ell + 1)}) 
		. 
		\end{align*}
		As before, this is a bijection, however if we are to use it for computing the difference between $ \Sigma_{\lt} $ and $ \Sigma_{\rt} $ further specifications need to be done. Observe that the range of $ a_{2 \ell + 1} $ is $ \{ 1, \ldots, n - 2\ell \} $ and define $ \A_{i} = \{ \alpha \in \A: a_{2\ell + 1 } = i \} $. Observe that $ B: \A_{i} \rightarrow \A_{i} $ is also a bijection and that there is a bijection between $ \A_{i} $ and the set of compositions of $ n - i $ in $ 2 \ell $ parts. In particular (due to Theorem \ref{Thm Theorems from Bona} (ii)), it has $ { n - i \choose 2 \ell } $ elements and due to the previous part, we have
		\begin{equation*}
		\Sigma_{ \odd } (\A_{i}) = \Sigma_{ \even}(\A_{i})
		+ i { n - i \choose 2 \ell } .
		\end{equation*}
		Here, $ \Sigma_{ \odd } (\A_{i}) $ and $ \Sigma_{ \even}(\A_{i}) $ are defined by equation \eqref{Eqn Balance Even-Odd sums in compositions}. Therefore
		\begin{equation*}
		\begin{split}
		\Sigma_{\odd} = \sum\limits_{i \, = \,1}^{n - 2 \ell} \Sigma_{\odd} (\A_{i}) & =
		\sum\limits_{i \, = \,1}^{n - 2 \ell} \Sigma_{\even} (\A_{i}) +
		\sum\limits_{i \, = \,1}^{n - 2 \ell} i {n - i \choose 2\ell} 
		=
		\Sigma_{\even}  + \sum\limits_{i \, = \,1}^{n - 2 \ell} i {n - i \choose 2\ell} .
		\end{split}
		\end{equation*}
		We focus on the last sum
		\begin{align*}
		\sum\limits_{i \, = \,1}^{n - 2 \ell} i {n - i \choose 2\ell} 
		= 
		&
		\sum\limits_{j \, = \,2 \ell}^{n - 1} ( n - j ) {j \choose 2\ell} 
		\\
		= 
		&
		(n + 1)\sum\limits_{j \, = \,2 \ell}^{n - 1}  {j \choose 2\ell}
		-
		\sum\limits_{j \, = \,2 \ell}^{n - 1} ( j + 1 ) {j \choose 2\ell}
		\\
		= 
		&
		(n + 1)\sum\limits_{j \, = \,2 \ell}^{n - 1}  {j \choose 2\ell}
		-
		(2 \ell + 1)\sum\limits_{m \, = \,2 \ell + 1}^{n}  {m \choose 2\ell + 1} 
		\\
		= &
		(n + 1) {n \choose 2 \ell + 1} - 
		(2 \ell + 1) { n + 1 \choose 2 \ell + 2 } 
		.
		\end{align*}
		In the expression above, the second equality is a convenient association of summands, the third equality uses the identity \eqref{Eq Binomial Coefficients Indexes Shift} to adjust the binomial coefficient, while the fourth equality applies the expression \eqref{Eq Theorem from Bona 2}. Simplifying the latter and combining with the previous we have
		\begin{equation*}
		\Sigma_{\odd} = \Sigma_{\even} + \frac{ 1 }{ 2 } \frac{n + 1}{\ell + 1} { n \choose 2 \ell + 1} ,
		\end{equation*}
		which is the desired result.
	\end{enumerate}
\end{proof}
Next we recall some results from basic discrete probability
\begin{theorem}\label{Thm Discrete Conditional Expectations}
	Let $ ( \Omega, \prob ) $ be a discrete probability space and let $ ( \Omega_{n} )_{ n \, = \, 1 }^{ N } $ be a partition of $ \Omega $ then 

	\begin{enumerate}[(i)]
		\item Let $ A, B \subseteq \Omega $ be two events then 
		\begin{subequations}
			\begin{align}
			\prob( A , B ) & = \prob(A \cap B) 
			= \prob( A \big\vert B \big) \prob( B ) ,
			\label{Eq Conditional Probability}
			\\
			\prob( A ) & = \sum\limits_{ n \, = \, 1 }^{ N } 
			\prob\big( A \big\vert \Omega_{ n } \big) \prob\big(\Omega_{n}\big)
			\label{Eq Probability Using Conditional Probability and Partition}
			.
			\end{align}
		\end{subequations}
		\item Let $ \X : \Omega \rightarrow \R $ be a discrete random variable, let $ A \subseteq \Omega $ be an event then 
		\begin{subequations}
			\begin{align}
			\Exp\big( \X \big\vert A \big) & = 
			\sum
			\limits_{ x \, \in \, \X(\Omega) } 
			x \prob\big( \X = x \big\vert A \big) 
			,
			\label{Eq Conditional Expectation wrt Event} \\
			%
			\Exp\big( \X \,\ind_{ A } \big) & = 
			\Exp\big( \X \big\vert A \big) \prob( A),
			\label{Eq Expectation Inside an Event} \\
			%
			\Exp( \X ) & = \sum\limits_{ n \, = \, 1 }^{ N } \Exp\big( \X \big\vert \Omega_{ n } \big) 
			\prob(\Omega_{n})
			\label{Eq Expectation wrt Partition} 
			.
			\end{align}
		\end{subequations}
		In the expression \eqref{Eq Conditional Expectation wrt Event}, $ \X(\Omega) $ stands for the range of the random variable $ \X $.
	\end{enumerate}
\end{theorem}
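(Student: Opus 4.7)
The plan is to verify each identity directly from the definition of conditional probability and from the defining formula for the expectation of a discrete random variable; because the underlying probability space is discrete, no measure-theoretic machinery is required and all sums involved are well-defined.

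First I would dispatch part (i). Identity \eqref{Eq Conditional Probability} is just the definition $\prob(A\vert B)\defining\prob(A\cap B)/\prob(B)$ rewritten by clearing the denominator (assuming $\prob(B)>0$, which can be imposed without loss of generality by restricting to atoms with positive mass). For identity \eqref{Eq Probability Using Conditional Probability and Partition}, since $(\Omega_n)_{n=1}^N$ partitions $\Omega$, the events $(A\cap\Omega_n)_{n=1}^N$ form a disjoint decomposition of $A$; finite additivity gives $\prob(A)=\sum_n\prob(A\cap\Omega_n)$, after which \eqref{Eq Conditional Probability} converts each summand into the desired form (this being the classical law of total probability).

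For part (ii), identity \eqref{Eq Conditional Expectation wrt Event} is nothing but the definition of the expectation of the discrete random variable $\X$ under the conditional law $\prob(\cdot\vert A)$, expressed as a sum over the range $\X(\Omega)$. For \eqref{Eq Expectation Inside an Event} I would expand
\begin{equation*}
\Exp(\X\,\ind_A)=\sum_{x\,\in\,\X(\Omega)} x\,\prob(\{\X=x\}\cap A)=\sum_{x\,\in\,\X(\Omega)} x\,\prob(\X=x\vert A)\,\prob(A)=\Exp(\X\vert A)\,\prob(A),
\end{equation*}
using \eqref{Eq Conditional Probability} in the middle step and \eqref{Eq Conditional Expectation wrt Event} at the end. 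Finally, \eqref{Eq Expectation wrt Partition} follows by writing $\X=\sum_n \X\,\ind_{\Omega_n}$ pointwise (since the $\Omega_n$ partition $\Omega$), applying linearity of expectation, and invoking \eqref{Eq Expectation Inside an Event} on each summand.

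There is no real obstacle here: each identity reduces either to a one-line rearrangement of the definition or to disjoint-decomposition combined with linearity. The only minor care point is the handling of events $B$ with $\prob(B)=0$, where one either restricts the partition to atoms of positive probability or adopts the convention that the corresponding conditional quantities vanish; both conventions are consistent with the stated identities and do not alter any of the sums above.
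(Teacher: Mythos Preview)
Your proposal is correct and aligns with the paper's approach. The paper simply cites Br\'emaud's textbook for \eqref{Eq Conditional Probability}, \eqref{Eq Probability Using Conditional Probability and Partition}, \eqref{Eq Conditional Expectation wrt Event}, and \eqref{Eq Expectation Inside an Event}, whereas you supply the one-line arguments directly; for \eqref{Eq Expectation wrt Partition} the paper does exactly what you do, writing $\Exp(\X)=\sum_{n}\Exp(\X\,\ind_{\Omega_n})$ and invoking \eqref{Eq Expectation Inside an Event}.
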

\begin{proof}
	\begin{enumerate}[(i)]
		\item For \eqref{Eq Conditional Probability} see Definition 1.3.7 in \cite{BremaudDiscrete}. For \eqref{Eq Probability Using Conditional Probability and Partition} see Theorem 1.3.9 in  \cite{BremaudDiscrete}.
		
		\item For \eqref{Eq Conditional Expectation wrt Event} see Section 2.3.9, page 49 in \cite{BremaudDiscrete}. For \eqref{Eq Expectation Inside an Event} see Theorem 2.3.1 in in \cite{BremaudDiscrete}. Finally, noticing that $ \Exp(\X ) = \sum_{ n \, = \, 1 }^{ N} \Exp\big( \X\, \ind_{ \Omega_{ n }} \big) $ and the identity  \eqref{Eq Expectation Inside an Event}, the equation \eqref{Eq Expectation wrt Partition} follows.
	\end{enumerate}
\end{proof}
%
%
%
%
%
%
\section{Probabilistic Analysis of 0-1KP}\label{Sec The Expected Performance}
%
%
In this section, we present the probabilistic analysis of the Divide-and-Conquer method. We begin introducing the probabilistic model. 
%
%
\subsection{The Probabilistic Model and the Random 0-1KP}\label{Sec Probabilistic Model}
%
%
%
%
\begin{hypothesis}[The Random Model]\label{Hyp Random Problems}
	The random instances \newline 
	$ \big\langle \delta, \big( \upw(i) \big)_{i = 1}^{\mu}, \big(\upp(i) \big)_{i = 1}^{\mu} \big\rangle $ of the knapsack problem to be analyzed satisfy 
	\begin{enumerate}[a.]
		\item The capacity $ \delta $ and the number of items $ \mu $, with $ \mu = \delta + 1 $, are fixed.
		
		\item The weights $ \big( \upw(i) \big)_{i = 1 }^{ \mu } $ are i.i.d. random variables, uniformly distributed on the discrete set $ [\delta] \defining \{1,\ldots,  \delta\} $ for all $ i \in [\mu] $.
		
		
		\item The profits $ \big( \upp(i) \big)_{i = 1 }^{ \mu } $ are defined by means of the weights and the efficiencies $ \big( \upg(i) \big)_{i = 1 }^{ \mu } $. To define the efficiencies we introduce a set of random variables named \textbf{the increments} $ \big( \upt(i) \big)_{i = 1}^{\mu} $, which are i.i.d., continuous, uniformly distributed on the interval $ (0, 1) $ for all $ i \in [\mu] $. Hence, the efficiencies $\upg(i) $ and profits $ \upp(i) $ are defined by
		\begin{align}\label{Eq Random Efficiencies}
		& \upg(i) = \sum\limits_{t \, = \, i}^{\mu} \upt(t) , &
		& \upp(i) = \upg(i) \, \upw(i), &
		& \text{for all } i \in [\mu].
		\end{align}
	\end{enumerate}
\end{hypothesis}
\begin{definition}[The Random Model]\label{Def Random Problems}
	With the random model introduced in the hypothesis \ref{Hyp Random Problems} above, we define the following problems 
	\begin{enumerate}[(i)]
		\item The random version of the problem \ref{Pblm Original 0-1 KP} is given by
		%
		\begin{subequations}\label{Eqn Random Integer Problem}
			\begin{equation}\label{Eqn Random Integer Problem Objective Function}
			\max \sum\limits_{i\, = \, 1 }^{ \mu } \upp(i) \, x(i) ,
			\end{equation}
			subject to
			\begin{equation}\label{Eqn Random Integer Problem Capacity Constraint}
			\sum\limits_{i\, = \, 1}^{ \mu } \upw(i) \, x(i)  \leq \delta,
			\end{equation}
			\begin{align}\label{Eqn Random Integer Problem Choice Constraint}
			& x(i) \in \{0,1\}, &
			& \text{for all } i \,\in \,[\mu] .
			\end{align}	
		\end{subequations}
		From now on we refer to it as \textbf{0-1RKP}.
		
		\item The random version of problem \ref{Pblm Natural LOP Problem} is analogous to how 0-1RKP is generated. In the sequel, we refer to it as \textbf{0-1RLPK}.
	\end{enumerate}
	%
\end{definition}
\begin{remark}\label{Rem Probabilistic Model}
	\begin{enumerate}[(i)]
		\item It is direct to see that the random instances of \textsc{Problem} \eqref{Eqn Random Integer Problem} satisfy the conditions of \textsc{Hypothesis} \ref{Hyp Non Triviality and Sorting}. In particular the efficiencies $ \big( \upg(i) \big)_{i = 1}^{\mu} $ verify the monotonicity condition 
		\begin{equation}\label{Eq Monotone Especific Weigths}
		\upg(1) \geq \upg(2) \geq \ldots \geq \upg(\mu)  .
		\end{equation}

		\item Since $ \upw(i) \geq 1 $ for all $ i = 1, \ldots, \mu $, it follows that the number of packed items is at most $ \delta $ (i.e., $ \sum_{i\, = \, 1}^{\mu} x(i) \leq \delta $), hence we adopt $ \mu = \delta + 1 $ for mathematical convenience.
		
		\item In the figure \ref{Fig Random Realizations Model} we depict two random realizations for the weights, profits and efficiencies, according to the proposed probabilistic model, \textsc{Table} \ref{Tbl Random Realization Values} summarizes the values of the random variables for both realizations.

		\begin{figure}[h!]
			\centering
			\begin{minipage}{0.49\textwidth}
				\begin{subfigure}[Weights $ \big( \upw(i) \big)_{i = 1}^{13} $ for two random realizations.]
					{\includegraphics[scale = 0.480]{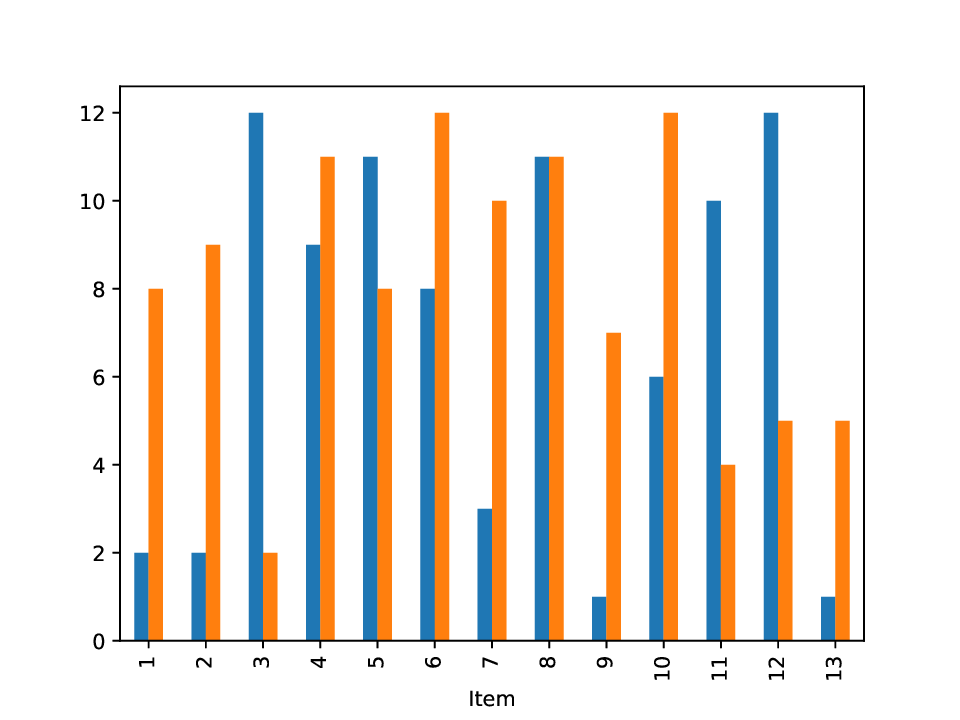} } 
				\end{subfigure}
			\end{minipage}
			%
			%
			~ 
			\begin{minipage}{0.49\textwidth}
				\begin{subfigure}[Profits $ \big( \upp(i) \big)_{i = 1}^{13} $ for two random realizations.]
					{\includegraphics[scale = 0.480]{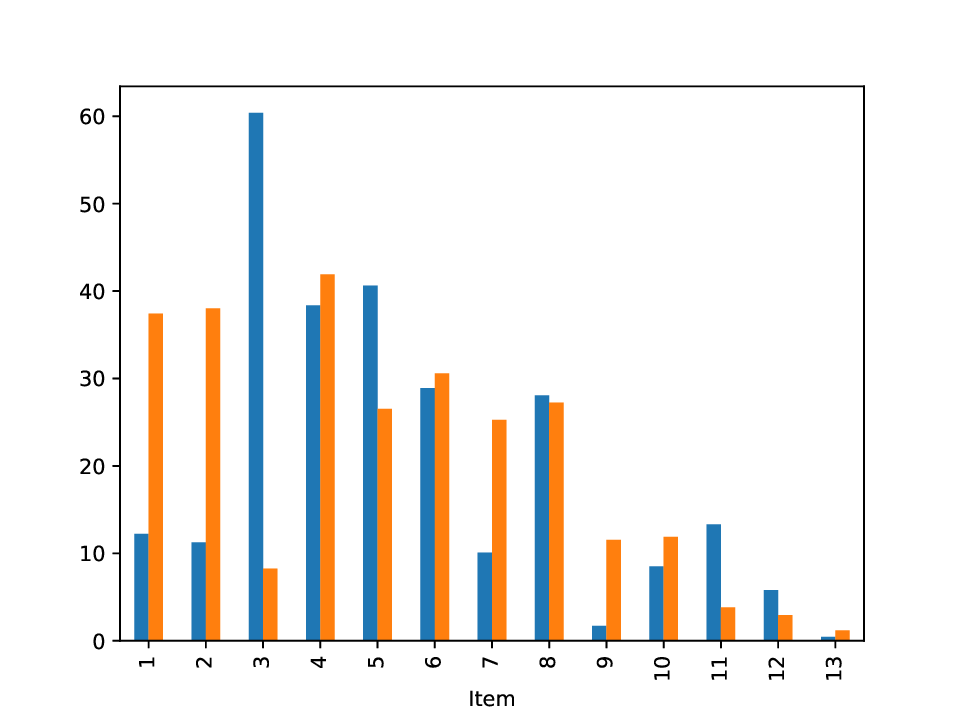} } 
				\end{subfigure}
			\end{minipage}
			\begin{minipage}{0.49\textwidth}  
				\begin{subfigure}[Efficiencies $ \big( \upg(i) \big)_{i = 1}^{13} $ for two random realizations.]
					{\includegraphics[scale = 0.48]{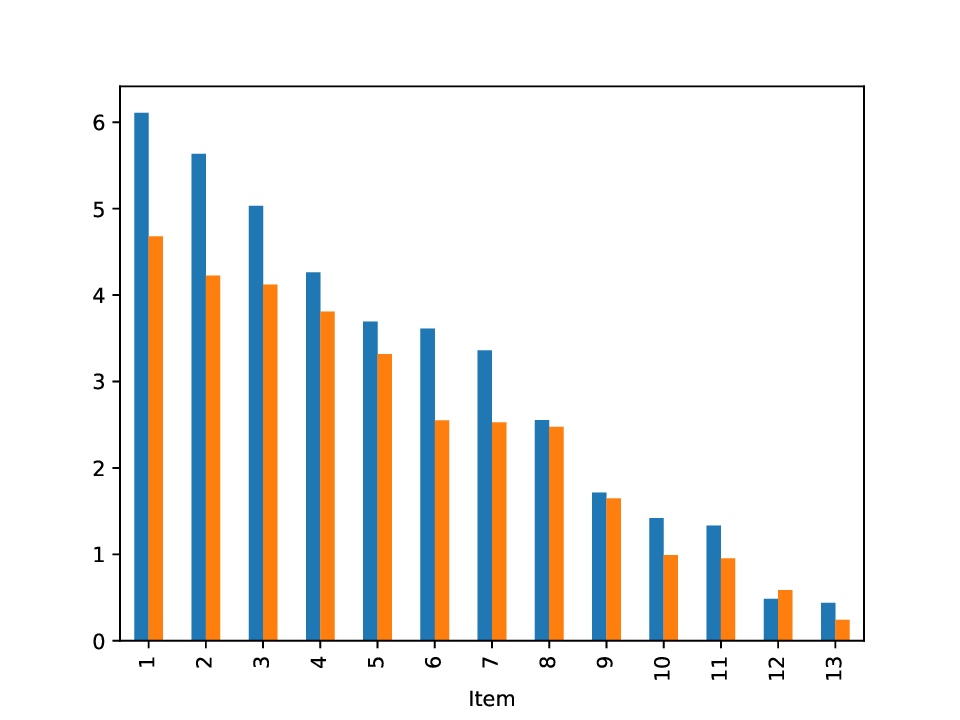} } 
				\end{subfigure}  
			\end{minipage} 
			%
			%
			\begin{minipage}{0.45\textwidth}
				\begin{center}
					\footnotesize{
												\rowcolors{2}{gray!25}{white}
						\begin{tabular}{c | c  c | c  c  | c  c}
														\rowcolor{gray!80}
							\hline
							Item & \multicolumn{2}{ c | }{$ \upw(i) $} & \multicolumn{2}{ c | }{$ \upp(i) $} & \multicolumn{2}{ c }{$ \upg(i) $}\\[2pt] 
														\rowcolor{gray!80}
							$ i $ & 1 & 2 & 1 & 2 & 1 & 2  \\[2pt]
							\hline 
							1 & 2	& 8 & 12.22 & 37.43 & 6.11 & 4.68 \\
							2 & 2	& 9 & 11.27 & 38.03	& 5.63 & 4.23 \\
							3 & 12 & 2 & 60.39 &	8.24	& 5.03 & 4.12 \\
							4 & 9	& 11 & 38.35 &	41.9	& 4.26 & 3.81 \\
							5 & 11 & 8	& 40.63 &	26.53 & 3.69 &	3.32 \\
							6 & 8	& 12	& 28.89 & 30.6	& 3.61 & 2.55 \\
							7 & 3	&10	& 10.08 &	25.26 & 3.36 &	2.53 \\
							8 & 11 & 11 & 28.07 & 27.23 & 2.55 & 2.48 \\
							9 & 1	& 7	& 1.72 & 11.54	& 1.72 & 1.65 \\
							10 & 6 & 12 & 8.51 &	11.88 & 1.42 &	0.99 \\
							11 & 10 & 4 & 13.33 & 3.81 & 1.33 & 0.95 \\
							12 & 12 & 5 & 5.81 & 2.93 & 0.48 & 0.59 \\
							13 & 1 & 5 & 0.44 & 1.2 & 0.44	& 0.24 \\
							\hline		
						\end{tabular}
						\captionof{table}{Numerical values for the two random realizations depicted in the graphs above.}
						\label{Tbl Random Realization Values}
					}
				\end{center}
				\normalsize
			\end{minipage}
			\caption{Two random realizations according to the probabilistic model introduced in \textsc{Definition} \ref{Def Random Problems}, capacity $ \delta = 12 $, number of items $ \mu = 13 $. Figure (a) displays the weights  $ ( \upw(i) )_{i = 1}^{13} $, while figure (b) depicts the profits  $ ( \upp(i) )_{i = 1}^{13} $ and (c) portrays the values of the efficiencies  $ ( \upg(i) )_{i = 1}^{13} $. The blue color indicates the first realization while the orange stands for the second realization. All the corresponding numerical values are summarized in the table \ref{Tbl Random Realization Values}.}
			\label{Fig Random Realizations Model} 
		\end{figure}
	\end{enumerate}
\end{remark}
In order to compute expected values for the Greedy Algorithm, two important random variables have to be introduced
\begin{definition}\label{Def Split Item, Slack and Packed Items Random Variables}
	Let $ \big\langle \delta, \big( \upw(i) \big)_{i = 1}^{\mu} , \big( \upp(i) \big)_{i = 1}^{\mu} \big\rangle $ be a random instance satisfying the hypothesis \ref{Hyp Random Problems}, define
	\begin{enumerate}[(i)]
		\item The \textbf{split item} random variable $ \ups $ is the value of the index $ s $ (introduced in \textsc{Definition} \ref{Def The split item} (i)) for the random instance.
		
		\item The \textbf{slack} random variable is defined by 
		\begin{equation}\label{Eq Slack Random Variable}
		\upk \defining \delta - \sum\limits_{j \, = \, 1}^{\ups- 1} \upw_{j} ,
		\end{equation}
		where $ \ups $ is the split item random variable.
		
		
	\end{enumerate}
	%
\end{definition}
%
%
%
%
\subsection{Expectations of the 0-1RKP and 0-1RLPK related variables
}\label{Sec Expectations Calculations}
%
%
In this section we compute the expectations of the most important random variables related to the probabilistic model introduced in \textsc{Section} \ref{Sec Probabilistic Model}. We begin presenting a result which turns out to be the cornerstone of our whole construction.
\begin{lemma}[Cornerstone Lemma]\label{Thm Cornerstone Lemma}
	Let $ \ups $ and $ \upk $ be the split item and the slack random variables defined above, then
	\begin{align}
	\prob\big( \upk  = k , \ups = s \big) & = 
	\frac{ \delta - k }{\delta^{ s }} {\delta - k - 1\choose s - 2} , 
	\label{Eq Knapsack Slack Joint Conditioning} 
	\end{align}
	for $ s = 2, \ldots, \mu $ and $ k = 0, \ldots, \delta - s + 1 $.
\end{lemma}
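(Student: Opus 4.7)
The plan is to recognize the joint event $\{\ups = s, \upk = k\}$ as the intersection of two independent events on the i.i.d.\ weights, count one side with a composition argument, and evaluate the other side directly from the uniform distribution on $[1,\delta]$.

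First I would unpack the definitions of $\ups$ and $\upk$. By \textsc{Definition} \ref{Def The split item} (i) and \textsc{Definition} \ref{Def Split Item, Slack and Packed Items Random Variables} (ii), the event $\{\ups = s, \upk = k\}$ is exactly
\begin{equation*}
\Big\{\, \sum_{i=1}^{s-1} \upw(i) = \delta - k \,\Big\} \;\cap\; \big\{\, \upw(s) > k \,\big\}.
\end{equation*}
Since the weights are i.i.d., these two events are independent (the first depends only on $\upw(1),\ldots,\upw(s-1)$, the second only on $\upw(s)$), so the joint probability factors as the product of their probabilities.

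Next I would evaluate each factor. For the second factor, $\upw(s)$ is uniform on $[1,\delta]$, so $\prob(\upw(s) > k) = (\delta - k)/\delta$. For the first factor, I need to count the number of integer vectors $(\upw(1),\ldots,\upw(s-1))$ with each entry in $[1,\delta]$ summing to $\delta - k$. Since each entry is at least $1$ and the total is $\delta - k \leq \delta$, no single entry can exceed $\delta - k \leq \delta$, so the upper bound $\leq \delta$ is automatically satisfied. Hence such vectors are precisely compositions of $\delta - k$ into $s - 1$ parts, and by \textsc{Theorem} \ref{Thm Theorems from Bona} (ii) there are $\binom{\delta - k - 1}{s - 2}$ of them. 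Dividing by the total number $\delta^{s-1}$ of possible weight assignments gives the first probability. Multiplying both factors,
\begin{equation*}
\prob(\ups = s, \upk = k) = \frac{\binom{\delta - k - 1}{s - 2}}{\delta^{s-1}} \cdot \frac{\delta - k}{\delta} = \frac{\delta - k}{\delta^{s}} \binom{\delta - k - 1}{s - 2},
\end{equation*}
which is the claimed identity.

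I do not foresee any real obstacle: the proof is essentially bookkeeping once one observes that the upper-bound constraint $\upw(i) \leq \delta$ is vacuous when the partial sum already equals $\delta - k$. The only point worth double-checking is the range: one needs $s - 2 \leq \delta - k - 1$, i.e.\ $k \leq \delta - s + 1$, which is exactly the range stated in the lemma, and $s \geq 2$ is needed so that $\upw(s)$ and the sum over $i < s$ are independent (for $s = 1$ the slack would be $\delta$ and the formula reduces correctly with the convention on the empty sum, but the lemma starts at $s = 2$ to avoid this edge case).
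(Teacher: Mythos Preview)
Your proposal is correct and follows essentially the same approach as the paper: both factor the joint event by independence into $\prob(\sum_{i=1}^{s-1}\upw(i)=\delta-k)\cdot\prob(\upw(s)>k)$, count the first factor via compositions (\textsc{Theorem}~\ref{Thm Theorems from Bona}~(ii)), and read off the second from the uniform distribution. Your explicit remark that the upper bound $\upw(i)\leq\delta$ is automatically satisfied is a nice clarification the paper leaves implicit.
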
 
\begin{proof}
	Observe the following equivalence of events
	\begin{equation*}
	\begin{split}
	\prob\big( \upk  = k , \ups = s \big) & = 
	\prob\Big( \delta - \sum\limits_{ j\, = \, 1} ^{ s - 1 } \upw(j)  = k  , \upw( s ) >  k \Big)
	= 
	\prob\Big( \sum\limits_{ j\, = \, 1 }^{ s - 1 } \upw(j)  = \delta - k \Big) \prob\Big( \upw(s) >  k  \Big) 
	.
	\end{split}
	\end{equation*}
	The last equality uses the independence of the weight random variables. For the first factor we observe that the event occurs if and only if $ (\upw(i))_{i = 1}^{s - 1} $ is a composition of $ \delta - k $. According to \textsc{Theorem} \ref{Thm Theorems from Bona} (ii) there are $ {\delta - k - 1\choose s - 2} $ of these compositions and since $ \upw(i) $ are uniformly distributed over $ [\delta] $, the probability for each of these compositions to occur is $ \frac{ 1 }{ \delta^{ s - 1} } $. Next, the event $ [ \upw(s) > k ] $ has probability $ \frac{ \delta - k }{ \delta } $ again due to the uniform distribution of the variable.  Combining the previous observations, the equality \eqref{Eq Knapsack Slack Joint Conditioning} follows. 
	
	Finally, for the range of the variables, it is direct to see that $ \ups, \upk $ are intertwined, then regarding $ \ups $ as the independent and $ \upk $ as the dependent, the first can range freely inside $ \{ 2, \ldots, \delta + 1 \} $ while the second only takes values within $ \{ 2, \ldots, \delta - s + 1 \} $ because 
	$ \sum_{ j\, = \, 1 }^{s - 1} \upw(j) \geq s -1 $.
\end{proof}
Next we compute the distribution, expectation and variance of $ \ups $.
\begin{theorem}\label{Thm Distribution Split Item}
	Let $ \ups $ be the splitting item random variable defined above, then its distribution and expectation are given by
	\begin{subequations}
		\begin{align}
		\prob(\ups = s) & = 
		\frac{ s - 1 }{ \delta^{ s } } 
		{ \delta + 1 \choose s }
		, 
		\quad\text{for all } s = 2, \ldots,  \mu ,
		\label{Eq Split Item Expectation Properties}
		\\
		%
		\Exp( \ups )
		& = 
		\big( 1 + \frac{ 1 }{ \delta } \big)^{ \delta } ,
		\label{Eq Split Expectation}
		\\
		%
		%
		%
		\Var( \ups )
		& = 
		\big( 3 + \frac{ 1 }{ \delta } \big)
		\big( 1 + \frac{ 1 }{ \delta }\big)^{ \delta - 1 } 
		- \big( 1 + \frac{ 1 }{ \delta }\big)^{ 2\delta } ,
		\label{Eq Split Variance}
		%
		\end{align}
		where $ \mu = \delta + 1 $. 
	\end{subequations}
\end{theorem}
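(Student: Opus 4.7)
The plan is to first obtain the marginal distribution of $\ups$ by summing out $k$ from the joint distribution furnished by the Cornerstone Lemma, and then compute $\Exp(\ups)$ and $\Exp(\ups^2)$ as binomial-type sums that can be collapsed via the identity \eqref{Eq Binomial Coefficients Indexes Shift} and the ordinary binomial theorem.

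\emph{Step 1: distribution of $\ups$.} Starting from the Cornerstone Lemma, I would write
\begin{equation*}
\prob(\ups = s) \;=\; \sum_{k \, = \, 0}^{\delta - s + 1} \frac{\delta - k}{\delta^{s}}\binom{\delta - k - 1}{s - 2},
\end{equation*}
then change variables $j = \delta - k$, obtaining $\prob(\ups = s) = \delta^{-s}\sum_{j \, = \, s - 1}^{\delta} j \binom{j - 1}{s - 2}$. Applying the identity $j\binom{j-1}{s-2} = (s-1)\binom{j}{s-1}$ (an instance of \eqref{Eq Binomial Coefficients Indexes Shift}) pulls out the factor $s-1$, leaving the sum $\sum_{j \, = \, s-1}^{\delta}\binom{j}{s-1}$, which by the hockey-stick identity \eqref{Eq Theorem from Bona 2} equals $\binom{\delta+1}{s}$. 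This yields \eqref{Eq Split Item Expectation Properties}.

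\emph{Step 2: expectation.} Writing $\Exp(\ups) = \sum_{s \, = \, 2}^{\delta+1} \frac{s(s-1)}{\delta^{s}}\binom{\delta + 1}{s}$, I would use the factorial identity
\begin{equation*}
s(s-1)\binom{\delta+1}{s} \;=\; (\delta+1)\delta \binom{\delta - 1}{s - 2}
\end{equation*}
and substitute $t = s - 2$. Pulling the constant $(\delta+1)\delta / \delta^{2}$ out of the sum leaves exactly $\sum_{t \, = \, 0}^{\delta - 1}\binom{\delta - 1}{t}\delta^{-t}$, which by the binomial theorem is $(1 + 1/\delta)^{\delta - 1}$. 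Multiplying gives $(\delta + 1)\delta^{-1}(1 + 1/\delta)^{\delta - 1} = (1 + 1/\delta)^{\delta}$, i.e.\ \eqref{Eq Split Expectation}.

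\emph{Step 3: variance.} This is the step I expect to require the most care. I would compute $\Exp(\ups^{2})$ via the decomposition $s^{2}(s-1) = s(s-1)(s-2) + 2\,s(s-1)$, splitting the sum into two pieces. The second piece is $2\,\Exp(\ups) = 2(1 + 1/\delta)^{\delta}$ by Step 2. For the first piece, I would apply
\begin{equation*}
s(s-1)(s-2)\binom{\delta+1}{s} \;=\; (\delta+1)\delta(\delta-1)\binom{\delta - 2}{s - 3},
\end{equation*}
change variables $t = s - 3$, and invoke the binomial theorem again to obtain $(\delta+1)(\delta-1)\delta^{-2}(1 + 1/\delta)^{\delta - 2}$. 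The main technical obstacle is then purely algebraic: to collapse the resulting sum
\begin{equation*}
\Exp(\ups^{2}) \;=\; \frac{(\delta+1)(\delta-1)}{\delta^{2}}\bigl(1 + \tfrac{1}{\delta}\bigr)^{\delta - 2} + 2\bigl(1 + \tfrac{1}{\delta}\bigr)^{\delta}
\end{equation*}
into the clean form $(3 + 1/\delta)(1 + 1/\delta)^{\delta - 1}$. Factoring $(1+1/\delta)^{\delta-2}$ and combining the bracket as $[(\delta-1)(\delta+1) + 2(\delta+1)^{2}]/\delta^{2} = (\delta+1)(3\delta+1)/\delta^{2}$, then recognising $(\delta+1)/\delta \cdot (1 + 1/\delta)^{\delta - 2} = (1 + 1/\delta)^{\delta - 1}$, does the job. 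Finally $\Var(\ups) = \Exp(\ups^{2}) - \Exp(\ups)^{2}$ yields \eqref{Eq Split Variance}.
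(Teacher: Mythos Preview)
Your proposal is correct and follows essentially the same route as the paper's proof: both sum out $k$ via the Cornerstone Lemma and the hockey-stick identity to get the distribution, then handle $\Exp(\ups)$ and $\Exp(\ups^2)$ via the decomposition $s^2(s-1)=s(s-1)(s-2)+2s(s-1)$ and binomial-theorem collapses. The only cosmetic difference is that the paper phrases the collapse as ``derivatives of the Newton binomial'' rather than your factorial-identity-plus-reindex, and it subtracts $\Exp(\ups)^2$ before simplifying rather than first putting $\Exp(\ups^2)$ into the form $(3+1/\delta)(1+1/\delta)^{\delta-1}$ as you do; the algebra is equivalent.
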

\begin{proof}
	Due to the cornerstone lemma \ref{Thm Cornerstone Lemma} if $ \ups = s $, the slack $ k = \delta - \sum_{j \, =\, 1}^{ s - 1 } \upw(j) $ runs from $ 0 $ to $ \delta - s + 1 $. Hence, we split the event $ \{ \ups = s \} $, according to the range of the slack, i.e. 
	\begin{equation*}
	\begin{split}
	\prob(\ups = s) = \sum\limits_{ k\, = \, 0 }^{ \delta - s + 1 } \prob( \upk = k, \ups = s ) 
	& =  \sum\limits_{ k\, = \, 0 }^{ \delta - s + 1 } \frac{ \delta - k }{ \delta^{ s } } { \delta - k - 1 \choose s - 2 } 
	=  \sum\limits_{ m \, = \, s - 1 }^{ \delta } \frac{ m }{ \delta^{ s } } { m - 1 \choose s - 2 }
	.
	\end{split}
	\end{equation*}
	%
	%
	The first equality holds due to the cornerstone identity \eqref{Eq Knapsack Slack Joint Conditioning} while the second is a mere reindexing of the sum. Recalling that $ (m + 1){ m \choose s - 2 } = ( s - 1 ) { m + 1 \choose s - 1} $ due to the identity \eqref{Eq Binomial Coefficients Indexes Shift}, we have
	\begin{equation*}
	\prob(\ups = s) =  
	\frac{ s - 1 }{ \delta^{ s } }
	\sum\limits_{ m \, = \, s - 2 }^{ \delta }  { m  \choose s - 1 }
	= \frac{ s - 1 }{ \delta^{ s } } { \delta + 1 \choose s } ,
	\end{equation*}
	where the last equality holds due to the combinatorial identity \eqref{Eq Theorem from Bona 2}. This proves the identity \eqref{Eq Split Item Expectation Properties}. Next, in order to compute $ \Exp(\ups) $, first recall that $ \mu = \delta + 1 $ and get
	\begin{equation*}
	\begin{split}
	\sum\limits_{ s \, = \, 2 }^{ \delta + 1 }
	s \prob(\ups = s) 
	=
	\sum\limits_{s\, = \, 2}^{ \delta + 1 } 
	\frac{ s ( s - 1 ) }{ \delta^{ s } } 
	{ \delta + 1 \choose s } 
	& 
	=
	\sum\limits_{s\, = \, 0}^{ \delta + 1 } 
	\frac{ s ( s - 1 ) }{ \delta^{ s } } 
	{ \delta + 1 \choose s } 
	=
	\frac{ ( \delta + 1 ) \delta }{ \delta^{ 2 } } 
	\big( 1 + \frac{ 1 }{ \delta } \big)^{ \delta - 1 } .
	\end{split}
	\end{equation*}
	Applying some basic algebraic manipulations, the identity \eqref{Eq Split Expectation} follows. Finally, for the variance, first we compute
	\begin{equation*}
	\begin{split}
	\Exp( \ups^{ 2 } ) 
	& = \sum\limits_{ s \, = \, 2}^{ \delta + 1 }
	s^{ 2}  \prob( \ups = s ) 
	\\
	& 
	= \sum\limits_{ s \, = \, 2}^{ \delta + 1 }
	s^{ 2}  \frac{ s - 1 }{ \delta^{ s } } { \delta + 1 \choose s } 
	\\
	& 
	= \sum\limits_{ s \, = \, 2}^{ \delta + 1 }
	\frac{ s ( s - 1) ( s - 2 ) }{ \delta^{ s } } { \delta + 1 \choose s } 
	+ 2 \sum\limits_{ s \, = \, 2}^{ \delta + 1 }
	\frac{ s ( s - 1 ) }{ \delta^{ s } } { \delta + 1 \choose s } 
	\\
	& 
	= 
	\frac{( \delta + 1 ) \delta ( \delta - 1 ) }{ \delta^{ 3 } } \big( 1 + \frac{ 1 }{ \delta }\big)^{ \delta - 2 }
	+ 2 \big( 1 + \frac{ 1 }{ \delta }\big)^{ \delta} .
	\end{split}
	\end{equation*}
	Here, the third equality is a convenient association of summands and the fourth equality simply uses the derivatives of the Newton's binomial identity.
	Therefore, 
	\begin{equation*}
	\begin{split}
	\Var( \ups ) = \Exp( \ups^{ 2 } ) - \Exp^{2}(\ups)
	& 
	= 
	\big( 1 - \frac{ 1 }{ \delta }\big) \big( 1 + \frac{ 1 }{ \delta }\big)^{ \delta - 1 }
	+ 2 \big( 1 + \frac{ 1 }{ \delta }\big)^{ \delta } 
	- \big( 1 + \frac{ 1 }{ \delta }\big)^{ 2\delta } 
	.
	\end{split}
	\end{equation*}
	From here, the identity \eqref{Eq Split Variance} follows directly.
\end{proof}
Before computing the expectation of $ \Z^{\gr} $ the next technical lemma is needed.
\begin{lemma}\label{Thm Weight Conditional Expectation}
	With the definitions above, the following identities hold
	\begin{subequations}
		\begin{align}
		\prob\big(\upw(j) = w, \ups = s \big) & = 
		\frac{ 1 }{ \delta^{s} }
		\frac{( s - 2 )( \delta + 1) + w}{ s - 1 } { \delta - w \choose s - 2 } , &
		j = 1, \ldots , s - 1 &,
		\label{Eq Weight Conditional Probability S}
		\\
		\Exp\big( \upw(j) \big\vert \ups = s \big)  & =
		\frac{ \delta s + s - 1 }{ s^{ 2 } - 1 } , &
		j = 1, \ldots, s - 1 . &
		\label{Eq Weight Conditional Expectation}
		\end{align}
	\end{subequations}
\end{lemma}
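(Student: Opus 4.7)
The plan is to prove the two identities in succession: first the joint distribution formula \eqref{Eq Weight Conditional Probability S}, then use a symmetry argument combined with the cornerstone lemma to extract the conditional expectation \eqref{Eq Weight Conditional Expectation} with minimal direct computation.

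For \eqref{Eq Weight Conditional Probability S}, I would decompose along the slack as in the proof of the cornerstone lemma:
\begin{equation*}
\prob\big(\upw(j) = w, \ups = s\big) = \sum_{k \, = \, 0}^{\delta - s - w + 2} \prob\big(\upw(j) = w, \upk = k, \ups = s\big).
\end{equation*}
Fixing $\upw(j) = w$ forces the remaining $s-2$ weights from $\{\upw(i):i\in[s-1]\setminus\{j\}\}$ to form a composition of $\delta - k - w$ into $s-2$ parts, which by \textsc{Theorem} \ref{Thm Theorems from Bona} (ii) contributes the factor $\binom{\delta-k-w-1}{s-3}$, each composition weighted by $1/\delta^{s-2}$. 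Multiplying by $1/\delta$ for $\upw(j)=w$ and by $(\delta-k)/\delta$ for the event $[\upw(s) > k]$ (independent of the rest) yields
\begin{equation*}
\prob\big(\upw(j) = w, \upk = k, \ups = s\big) = \frac{\delta - k}{\delta^{s}}\binom{\delta-k-w-1}{s-3}.
\end{equation*}
After reindexing with $m = \delta - k - w$, the sum splits into $\sum m\binom{m-1}{s-3}$ and $w\sum\binom{m-1}{s-3}$; applying identity \eqref{Eq Binomial Coefficients Indexes Shift} to convert $m\binom{m-1}{s-3}$ into $(s-2)\binom{m}{s-2}$, and using the hockey-stick identity \eqref{Eq Theorem from Bona 2}, both sums collapse and factor the common $\binom{\delta-w}{s-2}$. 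A final algebraic simplification using $\binom{\delta-w+1}{s-1} = \frac{\delta-w+1}{s-1}\binom{\delta-w}{s-2}$ recombines the terms into $\frac{(s-2)(\delta+1)+w}{s-1}$, producing exactly \eqref{Eq Weight Conditional Probability S}.

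For \eqref{Eq Weight Conditional Expectation}, rather than summing $w\,\prob(\upw(j)=w \mid \ups=s)$ against the formula just derived (which would require another binomial sum), I would exploit \textbf{exchangeability}. The joint mass $\prob(\upw(1)=w_1,\ldots,\upw(s-1)=w_{s-1},\ups=s) = \delta^{-s}(\delta - \sum w_i)$ (supported on $w_i\geq 1$, $\sum w_i\leq\delta$) is symmetric in $(w_1,\ldots,w_{s-1})$, so conditional on $\ups=s$ the variables $\upw(1),\ldots,\upw(s-1)$ are exchangeable. Therefore
\begin{equation*}
(s-1)\,\Exp\big(\upw(j)\big|\ups=s\big) = \Exp\Big(\sum_{i \, = \, 1}^{s-1}\upw(i)\Big|\ups=s\Big) = \delta - \Exp(\upk\mid\ups=s).
\end{equation*}
It thus suffices to evaluate $\Exp(\upk \mid \ups=s)$ via the cornerstone lemma \ref{Thm Cornerstone Lemma} and \eqref{Eq Split Item Expectation Properties}: substituting $m=\delta-k$ in $\sum_k k(\delta-k)\binom{\delta-k-1}{s-2}$, then applying \eqref{Eq Binomial Coefficients Indexes Shift} to rewrite $m\binom{m-1}{s-2}=(s-1)\binom{m}{s-1}$, and finally handling $\sum m\binom{m}{s-1}$ through the identity $m\binom{m}{s-1} = s\binom{m+1}{s} - \binom{m}{s-1}$ followed by hockey-stick \eqref{Eq Theorem from Bona 2}, gives $\Exp(\upk\mid\ups=s) = \frac{\delta-s+1}{s+1}$. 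Substituting back yields $\Exp(\upw(j)\mid\ups=s) = \frac{\delta s+s-1}{s^2-1}$.

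The main obstacle I anticipate is bookkeeping in the binomial-sum manipulations: the ranges of $k$ and $m$ must be tracked carefully so that the hockey-stick identity applies cleanly, and the two distinct uses of \eqref{Eq Binomial Coefficients Indexes Shift} (once to convert $m\binom{m-1}{s-3}$, once to convert $m\binom{m-1}{s-2}$) must not be confused. The exchangeability observation is what makes the expectation tractable; without it, the direct sum $\sum_w w[(s-2)(\delta+1)+w]\binom{\delta-w}{s-2}$ requires evaluating a binomial moment of second order, which is much messier than the first-moment computation for $\upk$.
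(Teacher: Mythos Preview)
Your proof of \eqref{Eq Weight Conditional Probability S} is essentially identical to the paper's: decompose along the slack, count compositions of $\delta-k-w$ in $s-2$ parts, reindex, and collapse via hockey-stick.

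For \eqref{Eq Weight Conditional Expectation}, however, you take a genuinely different and cleaner route. The paper computes the expectation directly by summing $w\,\prob(\upw(j)=w,\ups=s)$ against the formula from part~(a), which forces the polynomial decomposition $(s-2)(\delta+1)(\delta-u)+(\delta-u)^2 = (\delta+1)^2(s-1) - (s\delta+s+1)(u+1) + (u+1)(u+2)$ and three separate hockey-stick applications --- exactly the second-order binomial moment you flagged as messy. Your exchangeability argument sidesteps this entirely: since the conditional law of $(\upw(1),\ldots,\upw(s-1))$ given $\ups=s$ is symmetric, all marginals coincide and the problem reduces to computing $\Exp(\upk\mid\ups=s)$, a first-order calculation. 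The paper does in fact prove $\Exp(\upk\mid\ups=s)=\frac{\delta+1-s}{s+1}$ later (in \textsc{Theorem}~\ref{Thm Knapsack Slack Expectation Properties}), so your approach effectively reorders the logical dependencies and eliminates a redundant computation.

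One small correction: the joint mass you wrote should be $\delta^{-s}\sum_i w_i$, not $\delta^{-s}(\delta-\sum_i w_i)$, since the factor $\prob(\upw(s)>k)=(\delta-k)/\delta$ equals $(\sum_i w_i)/\delta$ when $k=\delta-\sum_i w_i$. This typo does not affect the symmetry argument.
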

\begin{proof} 
	For the first equality, observe that if $ \upw(j) = w $ then $ \sum_{ i \, = \, 1}^{s - 1} \upw(i) \geq ( s - 2 ) + w $ consequently, the slack $ \upk $ can take values only in the set $ \{ 0, \ldots, \delta - (s - 2) - w \} $. Hence,
	\begin{equation*}
	\begin{split}
	\prob\big(\upw(j) = w  , \ups = s \big) = &
	\sum\limits_{k \, = \, 0}^{\delta - (s - 2) - w} \prob\big(\upw(j) = w , \upk = k, \ups = s \big)
	\\
	= & 
	\sum\limits_{k \, = \, 0}^{\delta - s + 2 - w} \prob\Big(\upw(j) = w , \sum_{ m \, \in \, [s - 1] - j } \upw(m) = \delta - k - w, \upw(s) > k  \Big)
	\\
	= & 
	\sum\limits_{k \, = \, 0}^{\delta - s + 2 - w} \frac{ 1 }{ \delta } \frac{1}{\delta^{ s - 2 }} { \delta - k - w - 1 \choose s - 3 } \frac{ \delta - k }{ \delta }
	\\
	= & 
	\frac{ 1 }{ \delta^{ s } } \sum\limits_{\ell \, = \, s - 3}^{\delta - w - 1} 
	( \ell + 1 + w){ \ell \choose s - 3 } 
	.
	\end{split}
	\end{equation*}
	Here, the second equality is a direct interpretation of the event $ \prob\big(\upw(j) = w , \upk = k, \ups = s \big) $. The third equality is the application of the basic identity \eqref{Eq Knapsack Slack Joint Conditioning}, while the fourth equality is the reindexing of the sum by $ \ell \defining \delta - k - w - 1 $. We compute the latter sum as follows
	\begin{equation*}
	\begin{split}
	\sum\limits_{\ell \, = \, s - 3}^{\delta - w - 1} 
	( \ell + 1 + w){ \ell \choose s - 3 }  = 
	&
	\sum\limits_{\ell \, = \, s - 3}^{\delta - w - 1} 
	( \ell + 1){ \ell \choose s - 3 } 
	+
	w \sum\limits_{\ell \, = \, s - 3}^{\delta - w - 1} { \ell \choose s - 3 } 
	\\
	= & 
	(s - 2)\sum\limits_{\ell \, = \, s - 3}^{\delta - w - 1} 
	{ \ell + 1 \choose s - 2 } 
	+
	w \sum\limits_{\ell \, = \, s - 3}^{\delta - w - 1} { \ell \choose s - 3 } 
	\\
	= & 
	(s - 2)\sum\limits_{m \, = \, s - 2}^{\delta - w } 
	{ m \choose s - 2 } 
	+
	w \sum\limits_{\ell \, = \, s - 3}^{\delta - w - 1} { \ell \choose s - 3 } 
	\\
	= & 
	(s - 2)
	{ \delta - w + 1 \choose s - 1 } 
	+
	w  { \delta - w \choose s - 2 } .
	\end{split}
	\end{equation*}
	In the expression above, the second equality uses the identity \eqref{Eq Binomial Coefficients Indexes Shift} for shifting indexes, the third equality is a mere reindexing of the first sum and the fourth equality applies the identity \eqref{Eq Theorem from Bona 2}. From here, using again the identity $ { \delta - w + 1 \choose s - 1 }  = \frac{ \delta - w + 1}{ s - 1} { \delta - w \choose s - 2 } $ and performing further algebraic simplifications, the equation \eqref{Eq Weight Conditional Probability S} follows. 
	
	Next, we  prove the identity \eqref{Eq Weight Conditional Expectation}. Recalling the identity \eqref{Eq Conditional Expectation wrt Event} for conditional expectation, we get
	\begin{equation*}
	\begin{split}
	\Exp\big( \upw(j) \big\vert \ups = s \big)
	= &\sum\limits_{w \, = \, 1}^{ \delta - (s - 2) } w \prob\big(\upw(j) = w \big\vert \ups = s \big) 
	\\
	= & \frac{1}{ \frac{s - 1}{\delta^{s}} { \delta + 1 \choose s} }\sum\limits_{w \, = \, 1}^{ \delta - (s - 2) } w \prob\big(\upw(j) = w , \ups = s \big) 
	\\
	= & \frac{1}{ \frac{s - 1}{\delta^{s}} { \delta + 1 \choose s} }
	\frac{ 1 }{ \delta^{s} }\sum\limits_{w \, = \, 1}^{ \delta - (s - 2) } \frac{( s - 2 )( \delta + 1) w + w^{2}}{ s - 1 } { \delta - w \choose s - 2 } .
	\end{split}
	\end{equation*}
	Here, the second equality used the identity \eqref{Eq Conditional Probability} combined with \eqref{Eq Split Item Expectation Properties}, while the third used the identity \eqref{Eq Weight Conditional Probability S}. We focus on getting a closed form for the sum; by reindexing $ u \defining \delta - w $ we get
	\begin{equation*}
	\sum\limits_{w \, = \, 1}^{ \delta - (s - 2) } \frac{( s - 2 )( \delta + 1) w + w^{2}}{ s - 1 }  { \delta - w \choose s - 2 }
	=  
	\frac{1}{ s - 1 }\sum\limits_{u \, = \, s - 2}^{ \delta - 1 } \big\{ ( s - 2 )( \delta + 1) (\delta - u) + (\delta - u)^{2} \big\} { u \choose s - 2 }. 
	\end{equation*}
	Appealing to the polynomial identity
	\begin{equation*}
	( s - 2 )( \delta + 1) (\delta - u) + (\delta - u)^{2} 
	= 
	(\delta + 1)^{2}(s - 1)
	- (s \delta + s + 1) (u + 1) +
	(u + 1 )(u + 2) ,
	\end{equation*}
	we have,
	\begin{multline*}
	\sum\limits_{u \, = \, s - 2}^{ \delta - 1 } 
	\big\{ 
	(\delta + 1)^{2}(s - 1)
	-  (s \delta + s + 1)  (u + 1) 
	+
	(u + 1 )(u + 2)
	\big\}
	{ u \choose s - 2 }
	\\
	= 
	(\delta + 1)^{2}(s - 1)
	\sum\limits_{u \, = \, s - 2}^{ \delta - 1 } 
	{ u \choose s - 2 } 
	- (s \delta + s + 1)
	\sum\limits_{u \, = \, s - 2}^{ \delta - 1 } 
	(u + 1)
	{ u \choose s - 2 }
	+ \sum\limits_{u \, = \, s - 2}^{ \delta - 1 } 
	(u + 1 )(u + 2)
	{ u \choose s - 2 }
	. 
	\end{multline*}
	Now, listing the three sums of the left hand side we have
	\begin{align*}
	\sum\limits_{u \, = \, s - 2}^{ \delta - 1 } 
	{ u \choose s - 2 } & = { \delta \choose s - 1} , \\
	\sum\limits_{u \, = \, s - 2}^{ \delta - 1 } 
	(u + 1)
	{ u \choose s - 2 }
	& =
	( s - 1 )
	\sum\limits_{u \, = \, s - 2}^{ \delta - 1 } 
	{ u + 1 \choose s - 1 }
	=
	( s - 1 )
	\sum\limits_{r \, = \, s - 1}^{ \delta  } 
	{ r \choose s - 1 }
	= ( s - 1 ) { \delta  + 1\choose s } ,
	\\
	\sum\limits_{u \, = \, s - 2}^{ \delta - 1 } 
	(u + 2 )(u + 1)
	{ u \choose s - 2 }
	& = 
	s(s - 1)
	\sum\limits_{u \, = \, s - 2}^{ \delta - 1 } 
	{ u + 2 \choose s }
	= 
	s(s - 1)
	\sum\limits_{r \, = \, s }^{ \delta + 1 } 
	{ r \choose s } 
	= s( s- 1) { \delta + 2 \choose s + 1 } .
	\end{align*}
	Combining the above with the previous gives
	\begin{equation*}
		\begin{split}
	\frac{s - 1}{\delta^{s}} { \delta + 1 \choose s} \delta^{s} \Exp\big( \upw(j) \big\vert \ups = s \big)
	= &
	(\delta + 1)^{2} { \delta \choose s - 1}
	- ( s \delta + s + 1) { \delta + 1 \choose s}
	+ s { \delta + 2 \choose s + 1 } 
	\\
	= &
	s (\delta + 1) { \delta + 1 \choose s }
	- ( s \delta + s + 1) { \delta + 1 \choose s}
	+ s \frac{\delta + 2 }{ s + 1 }{ \delta + 1 \choose s  } 
	\\
	= & 
	\frac{ s \delta + s - 1}{s + 1}{ \delta + 1 \choose s } .
		\end{split}
	\end{equation*}
	Here, the second equality uses the identity \eqref{Eq Binomial Coefficients Indexes Shift} in the first and third summand, while the second equality is the mere algebraic sum of the previous line. Finally, a direct simplification of terms yields the identity \eqref{Eq Weight Conditional Expectation} and the result is complete.
\end{proof}
\begin{theorem}\label{Thm Greedy Expected Profit}
	Let  $ \ups $ and $ \Z^{\gr} \defining \sum\limits_{i \, = \, 1}^{\ups - 1} \upp(i) $ be the split item and the greedy algorithm profit random variables for the 0-1RKP \eqref{Eqn Random Integer Problem}. Then, 
	\begin{subequations}
		\begin{align}
		\Exp\big( \Z^{\gr}  \big\vert \ups = s \big) 
		= &  
		\frac{  2 \delta - s + 4 }{ 4 }
		\frac{ \delta s + s - 1 }{ s + 1 } , 
		\text{ for all } s = 2, \ldots, \mu 
		,
		\label{Eq Conditional Greedy Expected Profit} \\
		%
		\begin{split}
		\Exp(\Z^{\gr}) 
		= &  
		- \frac{  ( \delta + 1 )^{ 2 }  }{ 4\delta } \big( 1 + \frac{ 1 }{ \delta } \big)^{ \delta - 1 }
		+
		\frac{ ( 2\delta + 3 ) ( \delta + 2 ) ( \delta + 1 )  }{ 4 \delta }  
		\Big\{ 
		\big( 1 + \frac{ 1 }{\delta } \big)^{ \delta } - 1
		\Big\}
		\\
		& -
		( \delta + 2 )^{ 2 }  
		\Big\{
		\big( 1 + \frac{ 1 }{ \delta } \big)^{ \delta + 1 }
		- 
		\frac{ 2 \delta + 1 }{ \delta }
		\Big\}
		+
		\frac{ 2 \delta + 5 }{ 2 } \, \delta
		\Big\{
		\big( 1 + \frac{ 1 }{ \delta } \big)^{ \delta + 2 }
		- \frac{ 5 \delta^{ 2 } + 7 \delta + 2 }{ 2 \delta^{2} }
		\Big\}
		,
		\end{split}
		\label{Eq Greedy Expected Profit}
		\end{align}
	\end{subequations}
	with $ \mu = \delta + 1 $.
\end{theorem}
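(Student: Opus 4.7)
The plan is to reduce the proof to two nested computations: first a closed form for the conditional expectation $\Exp\big( \Z^{\gr} \big\vert \ups = s \big)$, and then the unconditional expectation via the partition formula $\Exp(\Z^{\gr}) = \sum_{s = 2}^{\mu} \Exp\big( \Z^{\gr} \big\vert \ups = s \big) \prob( \ups = s )$ from \textsc{Theorem} \ref{Thm Discrete Conditional Expectations} (ii), using the distribution of $\ups$ obtained in \textsc{Theorem} \ref{Thm Distribution Split Item}.

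For the conditional part, I would start from the linearity decomposition
\begin{equation*}
\Exp\big( \Z^{\gr} \big\vert \ups = s \big) = \sum_{i = 1}^{s - 1} \Exp\big( \upg(i)\, \upw(i) \big\vert \ups = s \big),
\end{equation*}
and exploit the crucial structural feature of \textsc{Hypothesis} \ref{Hyp Random Problems}: since $\upg(i)$ is a function only of the increments $\upt(i), \ldots, \upt(\mu)$, it is independent of the entire weight vector, and in particular of $\upw(i)$ as well as of the event $\{\ups = s\}$, which is measurable with respect to the weights alone. This factorizes each conditional expectation as $\Exp( \upg(i) )\cdot \Exp\big( \upw(i) \big\vert \ups = s \big)$. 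The first factor equals $(\mu - i + 1)/2 = (\delta + 2 - i)/2$ because the $\upt(t)$ are $\unif[0,1]$; the second is supplied by \textsc{Lemma} \ref{Thm Weight Conditional Expectation}, namely $(\delta s + s - 1)/\big((s - 1)(s + 1)\big)$. Plugging these in and evaluating the finite arithmetic sum $\sum_{i = 1}^{s - 1}(\delta + 2 - i) = (s - 1)(2\delta - s + 4)/2$ yields \eqref{Eq Conditional Greedy Expected Profit} after cancellation of the $(s-1)$ factors.

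For the unconditional expectation \eqref{Eq Greedy Expected Profit}, I would substitute the conditional formula together with $\prob(\ups = s) = \frac{s - 1}{\delta^{s}}\binom{\delta + 1}{s}$ to obtain
\begin{equation*}
\Exp(\Z^{\gr}) = \sum_{s = 2}^{\delta + 1} \frac{(\delta s + s - 1)(2\delta - s + 4)(s - 1)}{4(s + 1)\,\delta^{s}} \binom{\delta + 1}{s}.
\end{equation*}
The key algebraic step is to absorb the awkward factor $1/(s+1)$ via the identity $\frac{1}{s + 1}\binom{\delta + 1}{s} = \frac{1}{\delta + 2}\binom{\delta + 2}{s + 1}$; after expanding the cubic $(\delta s + s - 1)(2\delta - s + 4)(s - 1)$ as a polynomial in $s$ and reindexing by $r = s + 1$, the sum decomposes into a short linear combination of expressions of the form $\sum_{r} r^{k}\binom{\delta + 2}{r}\delta^{-r}$ for small $k$.

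Each such expression admits a closed form obtained by differentiating Newton's binomial identity $\bigl(1 + \tfrac{1}{\delta}\bigr)^{\delta + 2} = \sum_{r = 0}^{\delta + 2}\binom{\delta + 2}{r}\delta^{-r}$ a suitable number of times and evaluating at the appropriate point, after subtracting the boundary terms $r = 0, 1, 2$ that are absent from the original summation range. The main obstacle I anticipate is bookkeeping rather than conceptual: the four-term structure on the right-hand side of \eqref{Eq Greedy Expected Profit}, with exponents $\delta - 1, \delta, \delta + 1, \delta + 2$, reflects precisely the four distinct polynomial moments required, together with the rational corrections produced by the omitted low-index boundary terms. Matching the coefficients and the $1/\delta$ corrections correctly is the most error-prone part of the calculation, but it introduces no technique beyond the binomial-derivative manipulations already exploited in \textsc{Theorem} \ref{Thm Distribution Split Item}.
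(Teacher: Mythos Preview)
Your proposal is correct and mirrors the paper's own proof essentially step for step: the same independence factorization $\Exp(\upg(i)\upw(i)\mid\ups=s)=\Exp(\upg(i))\,\Exp(\upw(i)\mid\ups=s)$ using \textsc{Lemma}~\ref{Thm Weight Conditional Expectation}, the same arithmetic summation for \eqref{Eq Conditional Greedy Expected Profit}, and the same absorption $\frac{1}{s+1}\binom{\delta+1}{s}=\frac{1}{\delta+2}\binom{\delta+2}{s+1}$ followed by a polynomial decomposition and binomial-derivative evaluation for \eqref{Eq Greedy Expected Profit}. The only cosmetic difference is that the paper expands the cubic into falling factorials $m(m-1)(m-2),\,m(m-1),\,m,\,1$ rather than raw powers $r^{k}$, which slightly streamlines the match with the successive derivatives of $(1+1/\delta)^{\delta+2}$ but is otherwise equivalent to what you describe.
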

\begin{proof}
	We compute the identity \eqref{Eq Conditional Greedy Expected Profit} directly, using the definition of $ \upp(i) $ introduced in \textsc{Equation} \eqref{Eq Random Efficiencies}
	\begin{equation*}
	\Exp\big( \Z^{\gr}  \big\vert \ups = s \big) = \sum\limits_{ j \, = \, 1 }^{ s - 1 } \Exp\big( \upp(j) \big\vert \ups = s \big) 
	= \sum\limits_{ j \, = \, 1 }^{ s - 1 } \Exp\big(\upw(j) \, \upg(j) \big\vert \ups = s \big) 
	= \sum\limits_{ j \, = \, 1 }^{ s - 1 } \Exp\Big(\upw(j)  \sum\limits_{t \, = \, j}^{\mu}   \upt(t)  \big\vert \ups = s \Big) .
	\end{equation*}
	Recalling that the variables $ \big( \upw(i) \big)_{i = 1}^{\mu} $ and $ \big( \upt(i) \big)_{i = 1}^{\mu} $ are independent, we have
	\begin{equation*}
	\begin{split}
	\Exp\big( \Z^{\gr}  \big\vert \ups = s \big) 
	& = \sum\limits_{ j \, = \, 1 }^{ s - 1 } \Exp\big( \upw(j) \big\vert \ups = s \big) 
	\sum\limits_{t \, = \, j}^{\mu}  \Exp\big( \upt(t) \big\vert \ups = s \big)
	\\
	& = \sum\limits_{ j \, = \, 1 }^{ s - 1 } 
	\frac{ \delta s + s - 1 }{ s^{ 2 } - 1 } 
	\frac{\mu - j + 1}{2} 
	\\
	& = 
	\frac{ (s - 1 ) ( 2 \mu - s + 2 )}{ 4 }
	\frac{ \delta s + s - 1 }{ s^{ 2 } - 1 } .
	\end{split}
	\end{equation*}
	Here, the second equality holds due to the identity \eqref{Eq Weight Conditional Expectation} and the distribution of the increments $ ( \upt(i) )_{ i  = 1 }^{ \mu } $ introduced in \textsc{Hypothesis} \ref{Hyp Random Problems}. Simplifying the expression above, the \textsc{Equation} \eqref{Eq Conditional Greedy Expected Profit} follows.
	
	Next, we compute the expectation of $ \Z^{\gr} $ conditioning on the possible values of $ \ups $
	and combining with the identities \eqref{Eq Conditional Greedy Expected Profit}, \eqref{Eq Split Item Expectation Properties}; this gives
	%
	%
	\begin{equation*}
	\begin{split}
	\Exp(\Z^{\gr}) 
	& =
	\sum\limits_{ s \, = \, 2 }^{ \delta + 1 }  
	\Exp\big( \Z^{\gr}  \big\vert \ups = s \big)  \prob\big( \ups = s \big) 
	\\
	& = 
	\sum\limits_{ s \, = \, 2 }^{ \delta + 1 } 
	\frac{ 2\delta - s + 4}{ 4 }
	\frac{ \delta s + s - 1 }{ s + 1 } 
	\frac{  s - 1 }{ \delta^{ s } }
	{ \delta + 1 \choose s }
	\\
	& = \frac{ 1 }{ 4 ( \delta + 2 ) }
	\sum\limits_{ m \, = \, 3 }^{ \delta + 2 } 
	\frac{ ( m - 2 ) ( 2 \delta -  m + 5 ) \big( ( \delta + 1 ) m - \delta - 2 \big) }{ \delta^{ m - 1 } }
	{ \delta + 2 \choose m }
	\\
	& = \frac{ 1 }{ 4 ( \mu + 1 ) }
	\sum\limits_{ m \, = \, 3 }^{ \mu + 1 } 
	\frac{ ( m - 2 ) ( 2 \mu -  m + 3 ) \big( \mu m - \mu - 1 \big) }{ \delta^{ m - 1 } }
	{ \mu + 1 \choose m }
	.
	\end{split}
	\end{equation*}
	The third equality in the expression above is a convenient reindexing of the sum, while the last equality follows from the substitution
	$ \mu = \delta + 1 $. Next, consider the polynomial identity
	\begin{multline*}
	( m - 2 ) ( 2\mu - m + 3)(\mu m - \mu - 1) =
	\\
	- \mu m ( m - 1 ) ( m - 2 ) + ( 2 \mu + 1 )( \mu + 1 )m ( m - 1 ) 
	- 4(\mu + 1)^{2} m +
	( 4 \mu + 6 ) ( \mu + 1 ) ,
	\end{multline*}
	and combine it with the expression above. We get
	\begin{equation*}
	\begin{split}
	\Exp(\Z^{\gr}) 
	= &
	- \frac{ 1 }{ 4\delta ^{ 2 } } \frac{ \mu }{ \mu + 1 }
	\sum\limits_{ m \, = \, 3 }^{ \mu + 1 } 
	\frac{ m ( m - 1 ) ( m - 2 ) }{ \delta^{ m - 3 } } { \mu + 1 \choose m } 
	+
	\frac{ 2\mu + 1 }{ 4 \delta } \sum\limits_{ m \, = \, 3 }^{ \mu + 1 } 
	\frac{ m ( m - 1 ) }{ \delta^{ m - 2 } } { \mu + 1 \choose m } 
	\\
	& -
	( \mu + 1 )  \sum\limits_{ m \, = \, 3 }^{ \mu + 1 } 
	\frac{ m }{ \delta^{ m - 1 } } { \mu + 1 \choose m }
	+
	\delta \frac{ 4 \mu + 6 }{ 4 }
	\sum\limits_{ m \, = \, 3 }^{ \mu + 1 } 
	\frac{ 1 }{ \delta^{ m } } { \mu + 1 \choose m }
	\\
	= &
	- \frac{ 1 }{ 4\delta ^{ 2 } } \frac{ \mu }{ \mu + 1 }
	( \mu + 1 ) \mu ( \mu - 1 ) \big( 1 + \frac{ 1 }{ \delta } \big)^{ \mu - 2 }
	+
	\frac{ 2\mu + 1 }{ 4 \delta }  
	\Big\{ 
	( \mu + 1 ) \mu \big( 1 + \frac{ 1 }{\delta } \big)^{ \mu - 1 } - ( \mu + 1 ) \mu
	\Big\}
	\\
	& -
	( \mu + 1 )  
	\Big\{
	( \mu + 1 )\big( 1 + \frac{ 1 }{ \delta } \big)^{ \mu }
	- ( \mu + 1 ) 
	- \frac{ \mu ( \mu + 1 ) }{ \delta }
	\Big\}
	\\
	& 
	+
	\delta \frac{ 4 \mu + 6 }{ 4 }
	\Big\{
	\big( 1 + \frac{ 1 }{ \delta } \big)^{ \mu + 1 }
	- 1 - \frac{ \mu + 1 }{ \delta}   - \frac{ ( \mu+ 1 ) \mu }{ 2 \delta^{ 2 } }  
	\Big\}
	. 
	\end{split}
	\end{equation*}
	Simplifying the latter and replacing back $ \delta = \mu - 1 $, the equality \eqref{Eq Greedy Expected Profit} follows.
	%
	%
\end{proof}
Next we find the distribution, conditional expectation with respect to $ \ups $ and expectation for the slack $ \upk $.
\begin{theorem}\label{Thm Knapsack Slack Expectation Properties}
	The slack random variable $ \upk $, introduced in \textsc{Definition} \ref{Def Split Item, Slack and Packed Items Random Variables} (ii), satisfies
	%
	\begin{subequations}\label{Eq Knapsack Slack Expectation Properties}
		\begin{align}
		\prob\big( \upk  = k \big) = &
		\frac{ \delta - k }{ \delta^{ 2 } }
		\big( 1 + \frac{ 1 }{ \delta } \big)^{ \delta - k - 1 } , 
		\text{ for all } k = 0, \ldots , \delta .
		\label{Eq Knapsack Slack Probability Distribution} \\
		%
		\Exp\big( \upk \big\vert \ups = s \big)
		= & \frac{ \delta + 1 - s }{ s + 1 }  \, ,
		\text{ for all } s = 2, \ldots, \mu.
		\label{Eq Knapsack Slack Conditional Expectation} \\
		%
		\begin{split}
		\Exp\big( \upk \big) 
		= &  
		- \frac{ ( \delta + 1 ) }{ \delta } 
		\Big\{ 
		\big( 1 + \frac{ 1 }{ \delta } \big)^{ \delta } - 1
		\Big\} 
		+ ( \delta + 3 ) 
		\Big\{ 
		\big( 1 + \frac{ 1 }{ \delta } \big)^{ \delta + 1 } 
		- \frac{ 2 \delta + 1 }{ \delta } 
		\Big\} \\
		& - 2 \delta 
		\Big\{
		\big( 1 + \frac{ 1 }{ \delta } \big)^{ \delta + 2 }
		- \frac{ 5 \delta^{ 2 } + 7 \delta + 2 }{ 2 \delta^{2} }
		\Big\}
		.
		\label{Eq Knapsack Slack Expectation}
		\end{split}
		\end{align}
	\end{subequations}
\end{theorem}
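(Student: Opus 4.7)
\textbf{Proof proposal for Theorem \ref{Thm Knapsack Slack Expectation Properties}.}

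The plan is to handle the three identities in the order stated, using the Cornerstone Lemma \ref{Thm Cornerstone Lemma} as the entry point and exploiting the combinatorial identities \eqref{Eq Theorem from Bona 2} and \eqref{Eq Binomial Coefficients Indexes Shift} throughout.

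For \eqref{Eq Knapsack Slack Probability Distribution}, I would marginalize the joint distribution from the Cornerstone Lemma over the admissible range of $\ups$:
\begin{equation*}
\prob(\upk = k) = \sum_{s = 2}^{\delta - k + 1} \frac{\delta - k}{\delta^{s}} \binom{\delta - k - 1}{s - 2}.
\end{equation*}
Reindexing with $j = s - 2$ turns the right-hand side into $\frac{\delta - k}{\delta^{2}} \sum_{j = 0}^{\delta - k - 1} \binom{\delta - k - 1}{j} \delta^{-j}$, and Newton's binomial theorem collapses this to the closed form claimed.

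For \eqref{Eq Knapsack Slack Conditional Expectation}, I would use the definition of conditional expectation together with \eqref{Eq Conditional Probability} to write
\begin{equation*}
\Exp(\upk \mid \ups = s) = \frac{1}{\prob(\ups = s)} \sum_{k = 0}^{\delta - s + 1} k \prob(\upk = k, \ups = s),
\end{equation*}
plug in the Cornerstone Lemma and \eqref{Eq Split Item Expectation Properties}, and reindex via $m = \delta - k$. The key trick is the identity $m \binom{m-1}{s-2} = (s-1) \binom{m}{s-1}$ from \eqref{Eq Binomial Coefficients Indexes Shift}, which reduces the sum to $\delta \sum_{m} \binom{m}{s-1} - \sum_{m} m \binom{m}{s-1}$. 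The first sum telescopes via the hockey-stick identity \eqref{Eq Theorem from Bona 2} to $\binom{\delta + 1}{s}$, and the second, after writing $m \binom{m}{s-1} = s \binom{m+1}{s} - \binom{m}{s-1}$ and applying \eqref{Eq Theorem from Bona 2} again, equals $\binom{\delta+1}{s} \cdot \frac{s\delta + s - 1}{s+1}$. Dividing by $\binom{\delta+1}{s}$ and simplifying gives $\delta - \frac{s\delta + s - 1}{s+1} = \frac{\delta + 1 - s}{s+1}$.

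For \eqref{Eq Knapsack Slack Expectation}, I would condition on $\ups$ via \eqref{Eq Expectation wrt Partition}, substitute \eqref{Eq Knapsack Slack Conditional Expectation} and \eqref{Eq Split Item Expectation Properties}, and absorb the $s+1$ denominator using $\frac{1}{s+1} \binom{\delta+1}{s} = \frac{1}{\delta+2} \binom{\delta+2}{s+1}$ from \eqref{Eq Binomial Coefficients Indexes Shift}. After the reindexing $m = s+1$ the sum becomes
\begin{equation*}
\Exp(\upk) = \frac{1}{\delta + 2} \sum_{m = 3}^{\delta + 2} \frac{(\delta + 2 - m)(m - 2)}{\delta^{m - 1}} \binom{\delta + 2}{m}.
\end{equation*}
The decisive move is the polynomial decomposition $(\delta + 2 - m)(m - 2) = -m(m-1) + (\delta + 3) m - 2(\delta + 2)$, which splits the sum into three pieces, each of which is (up to the missing low-order terms $m = 0, 1, 2$) a Newton binomial evaluation, or a first/second derivative thereof, at $1 + 1/\delta$. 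Subtracting those missing terms produces exactly the three bracketed expressions appearing in the statement. This mirrors the calculation done for $\Exp(\Z^{\gr})$ in \textsc{Theorem} \ref{Thm Greedy Expected Profit} and will, I expect, be the most tedious part of the proof; the main obstacle is bookkeeping the low-order correction terms so that they line up with the constants $\frac{2\delta + 1}{\delta}$ and $\frac{5\delta^{2} + 7\delta + 2}{2\delta^{2}}$ in \eqref{Eq Knapsack Slack Expectation}, rather than any conceptual difficulty.
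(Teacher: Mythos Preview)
Your proposal is correct and follows essentially the same route as the paper: marginalize the Cornerstone Lemma for \eqref{Eq Knapsack Slack Probability Distribution}, compute \eqref{Eq Knapsack Slack Conditional Expectation} via the same reindexing $m=\delta-k$ and the index-shift identity \eqref{Eq Binomial Coefficients Indexes Shift}, and derive \eqref{Eq Knapsack Slack Expectation} by conditioning on $\ups$, absorbing the $s+1$ denominator into $\binom{\delta+2}{s+1}$, and using precisely the decomposition $(\delta+2-m)(m-2)=-m(m-1)+(\delta+3)m-2(\delta+2)$. The only cosmetic difference is that for \eqref{Eq Knapsack Slack Probability Distribution} you apply Newton's binomial directly after the substitution $j=s-2$, whereas the paper first shifts the binomial coefficient and then uses the derivative of Newton's expansion; your version is marginally more direct.
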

\begin{proof}
	Revisiting the cornerstone lemma \ref{Thm Cornerstone Lemma}, observe that fixing $ \upk = k $, the range of the split index $ s $ is $ \{ 2, \ldots, \delta - k + 1 \} $. Hence, 
	%
	\begin{equation*}
	\begin{split}
	\prob( \upk = k ) = 
	\sum\limits_{ s \, = \, 2 }^{ \delta - k + 1 } \prob( \upk = k , \ups = s) 
	= 
	&
	\sum\limits_{ s \, = \, 2 }^{ \delta - k + 1 }
	\frac{ s - 1 }{\delta^{ s }} { \delta - k \choose s - 1 } 
	= 
	\frac{ 1 }{ \delta^{ 2 } }
	\sum\limits_{ j \, = \, 1 }^{ \delta - k }
	\frac{ j }{\delta^{ j - 1 }} { \delta - k \choose j } 
	.
	\end{split}
	\end{equation*}
	Here, the second equality holds due to the cornerstone identity \eqref{Eq Knapsack Slack Joint Conditioning}, the third equality is a convenient reindexing and association of terms. 
	Then, applying the first derivative of the Newton's binomial expansion, the identity \eqref{Eq Knapsack Slack Probability Distribution} follows.  
	
	Next, we show the equality \eqref{Eq Knapsack Slack Conditional Expectation}. From the cornerstone lemma \ref{Thm Cornerstone Lemma} observe that if $ \ups = s $ the range of the slack $ \upk $ is $ \{ 0, \ldots, \delta - s  + 1 \} $. Hence, recalling the conditional expectation identity \eqref{Eq Conditional Expectation wrt Event} we get
	\begin{equation*}
	\Exp\big( \upk \big\vert \ups = s \big) = 
	\sum\limits_{ k \, = \, 0}^{\delta - s + 1} k \prob\big( \upk = k\big\vert \ups = s ) =
	\frac{ 1 }{\prob(\ups = s)} \sum\limits_{ k \, = \, 0}^{\delta - s + 1} k \prob\big( \upk = k , \ups = s ) .
	\end{equation*}
	Now, appealing to the basic identiy \eqref{Eq Knapsack Slack Joint Conditioning}, we have
	\begin{equation*}
	\begin{split}
	\Exp\big( \upk  \big\vert \ups = s \big)
	& = \frac{ 1 }{\prob(\ups = s)} \frac{ 1 }{ \delta^{ s } } \sum\limits_{ k \, = \, 0}^{\delta - s + 1} ( \delta - k ) k {\delta - k - 1 \choose s - 2}  
	\\
	& = \frac{ 1 }{\prob(\ups = s)} \frac{ 1 }{ \delta^{ s } } \sum\limits_{ j \, = \, s - 1 }^{ \delta } j ( \delta - j ) { j - 1 \choose s - 2}
	\\
	& = \frac{ 1 }{\prob(\ups = s)} \frac{ \delta + 1 }{ \delta^{ s } } \sum\limits_{ j \, = \, s - 1 }^{ \delta } j { j - 1 \choose s - 2 }
	-  \frac{ 1 }{\prob(\ups = s)} \frac{ 1 }{ \delta^{ s } } \sum\limits_{ j \, = \, s - 1 }^{ \delta }  ( j + 1 ) j { j - 1 \choose s - 2 }
	.
	\end{split}
	\end{equation*}
	Here, the second equality follows from reindexing $ j = \delta - k $, while the third equality is a mere convenient association of summands. Next from the identity \eqref{Eq Binomial Coefficients Indexes Shift}, we get the equalities $ \frac{ j }{ s - 1 } { j -1 \choose s - 2 } = { j \choose s - 1} $, $ \frac{ j ( j + 1) }{ s ( s - 1 ) } { j -1 \choose s - 2 } = { j + 1 \choose s}  $ for the first and second summands respectively.  From here, proceeding as in the proofs of \textsc{Lemma} \ref{Thm Weight Conditional Expectation} and \textsc{Theorem} \ref{Thm Greedy Expected Profit},
	the identity \eqref{Eq Knapsack Slack Conditional Expectation} follows.
	
	Finally, we pursue a closed form for $ \Exp( \upk ) $; to that end we apply the identity \eqref{Eq Expectation wrt Partition} and get
	\begin{equation*}
	\begin{split}
	\Exp\big( \upk \big)
	= 
	& 
	\sum\limits_{ s \, = \, 2 }^{ \delta + 1 } \Exp\big( \upk \big\vert \ups = s \big) \prob\big( \ups = s \big) 
	= 
	\sum\limits_{ s \, = \, 2 }^{ \delta + 1 } 
	\frac{ \delta + 1 - s }{ s + 1 } \frac{ s - 1 }{ \delta^{ s } } { \delta + 1 \choose s  }
	\\
	= 
	& 
	\frac{ 1 }{ \delta + 2 }
	\sum\limits_{ s \, = \, 2 }^{ \delta + 1 } 
	\frac{ ( \delta + 1 - s ) ( s - 1 ) }{ \delta^{ s } }
	{ \delta + 2 \choose s + 1 }
	= 
	\frac{ 1 }{ \delta + 2 }
	\sum\limits_{ j \, = \, 3 }^{ \delta + 2 } 
	\frac{ ( \delta + 2 - j ) ( j - 2 ) }{ \delta^{ j - 1 } }
	{ \delta + 2 \choose j }
	.
	\end{split}
	\end{equation*}
	Here, the second equality follows by replacing the equations \eqref{Eq Knapsack Slack Conditional Expectation} and \eqref{Eq Split Item Expectation Properties}. The third equality uses the identity \eqref{Eq Binomial Coefficients Indexes Shift} and the fourth equality follows from the convenient reindexing $ j = s + 1 $. Next, we replace the polynomial identity 
	\begin{equation*}
	(\delta + 2 - j) ( j - 2 ) = - j ( j + 1 ) + ( \delta + 3 ) j - 2 ( \delta + 2 ) ,
	\end{equation*}
	in the expression above and get
	\begin{equation*}
	\begin{split}
	( \delta + 2 ) \Exp\big( \upk \big)
	= & 
	- \frac{ 1 }{ \delta } 
	\sum\limits_{ j \, = \, 3 }^{ \delta + 2 } 
	\frac{ j ( j - 1 ) }{ \delta^{ j - 2 } }
	{ \delta + 2 \choose j }
	+ ( \delta + 3 ) \sum\limits_{ j \, = \, 3 }^{ \delta + 2 } 
	\frac{ j }{ \delta^{ j - 1 } }
	{ \delta + 2 \choose j }
	- 2 \delta ( \delta + 2 ) 
	\sum\limits_{ j \, = \, 3 }^{ \delta + 2 } 
	\frac{ 1 }{ \delta^{ j } }
	{ \delta + 2 \choose j } \\
	= &
	- \frac{ 1 }{ \delta } 
	\Big\{ ( \delta + 2 )( \delta + 1 )
	\big( 1 + \frac{ 1 }{ \delta } \big)^{ \delta } -  ( \delta + 2 ) ( \delta + 1 ) 
	\Big\} \\
	& + ( \delta + 3 ) 
	\Big\{ ( \delta + 2 )
	\big( 1 + \frac{ 1 }{ \delta } \big)^{ \delta + 1 } - ( \delta + 2 ) -  \frac{ ( \delta + 2 ) ( \delta + 1 ) }{ \delta }
	\Big\} \\
	& - 2 \delta ( \delta + 2 ) 
	\Big\{
	\big( 1 + \frac{ 1 }{ \delta } \big)^{ \delta + 2 }
	- 1 
	- \frac{ \delta + 2 }{ \delta }
	- \frac{ ( \delta + 2 ) ( \delta + 1 ) }{ 2 \delta^{ 2 } }
	\Big\}
	.
	\end{split}
	\end{equation*}
	Here, the second equality uses the Newton's binomial expansion, together with its first and second derivatives. 
	Finally, simplifying the latter expression the equality \eqref{Eq Knapsack Slack Expectation} follows.
\end{proof}
\begin{theorem}\label{Thm Linear Relaxation Expected Profit}
	Let  $ \Z^{\lp}
	$ be the optimal profit value given by the linear relaxation of the 0-1RKP \eqref{Eqn Random Integer Problem}. Then, its expected value is given by 
	\begin{equation}\label{Eq Linear Relaxation Expected Profit}
	\begin{split}
	\Exp(\Z^{\lp}) = & \Exp(\Z^{\gr}) 
	+ 
	\frac{ \delta + 1 }{ 2 \delta }
	\big( 1 + \frac{ 1 }{ \delta } \big)^{ \delta - 1 }
	- 
	\frac{ ( \delta + 2 ) ( \delta + 1 )  }{ \delta }
	\Big\{
	( 1 + \frac{ 1 }{ \delta }\big)^{ \delta }
	- 1
	\Big\}
	\\
	& 
	+ \frac{ ( \delta + 5 ) ( \delta + 2 ) }{ 2 }
	\Big\{
	\big( 1 + \frac{ 1 }{ \delta } \big)^{ \delta - 1 }
	-
	\frac{ 2 \delta + 1 }{ \delta }
	\Big\}
	-  ( \delta + 3 ) \delta
	\Big\{
	\big( 1 + \frac{ 1 }{ \delta } \big)^{ \delta + 2 }
	- \frac{ 5 \delta^{ 2 } + 7 \delta + 2 }{ 2 \delta^{2} }
	\Big\} 
	\\
	= & 
	- \frac{  ( \delta + 1 ) ( \delta - 1 )   }{ 4\delta } \big( 1 + \frac{ 1 }{ \delta } \big)^{ \delta - 1 }
	+
	\frac{ ( 2\delta - 1  ) ( \delta + 2 ) ( \delta + 1 )  }{ 4 \delta }  
	\Big\{ 
	\big( 1 + \frac{ 1 }{\delta } \big)^{ \delta } - 1
	\Big\}
	\\
	& 
	-
	\frac{ ( \delta + 2 ) ( \delta - 1 ) }{ 2 }
	\Big\{
	\big( 1 + \frac{ 1 }{ \delta } \big)^{ \delta + 1 }
	- 
	\frac{ 2 \delta + 1 }{ \delta }
	\Big\}
	-
	\frac{ 1 }{ 2 } \, \delta
	\Big\{
	\big( 1 + \frac{ 1 }{ \delta } \big)^{ \delta + 2 }
	- \frac{ 5 \delta^{ 2 } + 7 \delta + 2 }{ 2 \delta^{2} }
	\Big\}
	.
	\end{split}
	\end{equation}
	Here, $ \Z^{\gr} $ is the profit of the solution furnished by the greedy algorithm, whose expectation $ \Exp(\Z^{\gr}) $ is given by the identity \eqref{Eq Greedy Expected Profit}.
\end{theorem}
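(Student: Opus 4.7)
The plan is to exploit the explicit formula \eqref{Eq LR Objective} for the linear relaxation. Comparing the structure of $z^{\lp}$ with $z^{\gr}$ and with the slack $\upk = \delta - \sum_{j=1}^{\ups-1}\upw(j)$, and recalling that $p(s)/w(s) = g(s)$, the random version of the identity \eqref{Eq LR Objective} reads
\begin{equation*}
\Z^{\lp} = \Z^{\gr} + \upk\,\upg(\ups).
\end{equation*}
Taking expectations, the task reduces to computing the cross term $\Exp\big(\upk\,\upg(\ups)\big)$ in closed form.

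First I would condition on the split item and invoke \eqref{Eq Expectation wrt Partition}, so that
\begin{equation*}
\Exp\big(\upk\,\upg(\ups)\big) = \sum_{s=2}^{\mu} \Exp\big(\upk\,\upg(s)\,\big\vert\,\ups=s\big)\,\prob(\ups=s).
\end{equation*}
The key structural observation is that, under the random model of \textsc{Hypothesis}~\ref{Hyp Random Problems}, the weights $\big(\upw(i)\big)$ and the increments $\big(\upt(i)\big)$ are mutually independent collections, while $\upk$ is a function of $\upw(1),\ldots,\upw(\ups-1)$ and $\upg(\ups)=\sum_{t=\ups}^{\mu}\upt(t)$ is a function of the increments. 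The event $\{\ups = s\}$ is measurable with respect to the weights alone, so conditionally on $\{\ups=s\}$ the random variables $\upk$ and $\upg(s)$ remain independent. Consequently,
\begin{equation*}
\Exp\big(\upk\,\upg(s)\,\big\vert\,\ups=s\big)
= \Exp\big(\upk\,\big\vert\,\ups=s\big)\cdot \Exp\big(\upg(s)\big)
= \frac{\delta+1-s}{s+1}\cdot\frac{\mu-s+1}{2},
\end{equation*}
where the first factor comes from \eqref{Eq Knapsack Slack Conditional Expectation} and the second is immediate from the uniform distribution of the increments on $[0,1]$, using $\mu=\delta+1$.

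Plugging in the probability mass function \eqref{Eq Split Item Expectation Properties} of $\ups$ yields the single sum
\begin{equation*}
\Exp\big(\upk\,\upg(\ups)\big)
= \sum_{s=2}^{\delta+1}\frac{(\delta+1-s)(\delta+2-s)(s-1)}{2(s+1)\,\delta^{s}}\binom{\delta+1}{s}.
\end{equation*}
From here the plan is routine but delicate: apply the shift identity \eqref{Eq Binomial Coefficients Indexes Shift} to absorb the factor $1/(s+1)$ into a binomial coefficient $\binom{\delta+2}{s+1}$, reindex $j=s+1$, and expand the numerator polynomial $(\delta+2-j)(\delta+3-j)(j-2)$ as a linear combination of the falling factorials $1$, $j$, $j(j-1)$, $j(j-1)(j-2)$. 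Each of the resulting four sums is then collapsed via the Newton binomial expansion of $(1+1/\delta)^{\delta+2}$ and its first three derivatives, exactly as in the closing steps of \textsc{Theorem}~\ref{Thm Greedy Expected Profit} and \textsc{Theorem}~\ref{Thm Knapsack Slack Expectation Properties}. The main obstacle is purely algebraic: carrying out this polynomial-to-binomial reduction so that the result combines cleanly with the formula \eqref{Eq Greedy Expected Profit} for $\Exp(\Z^{\gr})$ and, after collecting terms of the form $\big(1+\tfrac{1}{\delta}\big)^{\delta-1},\ldots,\big(1+\tfrac{1}{\delta}\big)^{\delta+2}$, yields the two equivalent expressions claimed in \eqref{Eq Linear Relaxation Expected Profit}.
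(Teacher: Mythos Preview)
Your proposal is correct and follows essentially the same approach as the paper: both start from $\Z^{\lp}=\Z^{\gr}+\upk\,\upg(\ups)$, condition on $\ups=s$, use independence of weights and increments to factor the conditional expectation into $\frac{\delta+1-s}{s+1}\cdot\frac{\mu-s+1}{2}$, absorb $1/(s+1)$ via \eqref{Eq Binomial Coefficients Indexes Shift} into $\binom{\delta+2}{s+1}$, reindex, expand the cubic polynomial in falling factorials, and collapse with Newton's binomial and its derivatives. Your outline matches the paper's proof step for step.
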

\begin{proof}
	Due to \textsc{Theorem} \ref{Thm Greed and LP Solutions} (ii), equation \eqref{Eq LR Objective}, we know that
	\begin{equation}\label{Eqn Random Linear Relaxation Solution}
	\Z^{\lp} =  \sum\limits_{j \, = \, 1}^{\ups - 1} \upp(j)
	+ \dfrac{1}{\upw(\ups)} \Big( \delta - \sum\limits_{j \, = \, 1}^{\ups - 1} \upw(j) \Big) \upp(\ups) 
	=  \Z^{\gr} +  
	\upk \cdot \upg(\ups) 
	=  \Z^{\gr} +  
	\upk \sum\limits_{\ell \, = \, \ups}^{\mu} \upt(\ell) .
	\end{equation}
	Hence, conditioning on $ \ups $ through its range and recalling the equalities \eqref{Eq Knapsack Slack Conditional Expectation}, \eqref{Eq Split Item Expectation Properties}, we have

	\begin{equation*}
	\begin{split}
	\Exp(\Z^{\lp} ) 
	= & \Exp(\Z^{\gr}) +  
	\sum_{ s\, = \, 2 }^{ \mu } 
	\Exp\Big(
	\upk
	\sum\limits_{ \ell \, = \, \ups}^{ \mu } \upt( \ell ) \Big\vert \ups = s \Big)
	\prob( \ups = s ) \\
	= & \Exp(\Z^{\gr}) +  
	\sum_{ s \, = \, 2 }^{ \mu } 
	\Exp\big( 
	\upk
	\big\vert \ups = s \big)
	\Exp\Big( \sum\limits_{\ell \, = \, \ups}^{ \mu } \upt(\ell) \Big\vert \ups = s \Big)
	\prob( \ups = s ) 
	\\
	= & \Exp(\Z^{\gr}) +  
	\sum_{ s \, = \, 2 }^{ \mu } 
	\frac{ \delta + 1 - s }{ s + 1 }
	\frac{ \mu - s + 1 }{ 2 }
	\frac{ s - 1 }{ \delta^{ s } } { \delta + 1 \choose s }
	\\
	= & \Exp(\Z^{\gr}) +  
	\frac{ 1 }{ 2 }
	\frac{ 1 }{ \delta + 2 }
	\sum_{ s \, = \, 2 }^{ \mu } 
	\frac{ ( \mu - s ) ( \mu + 1 - s ) ( s - 1 ) }{ \delta^{ s } } 
	{ \delta + 2 \choose s + 1 }
	\\
	= & \Exp(\Z^{\gr}) + 
	\underbrace{ \frac{ 1 }{ 2 }
		\frac{ 1 }{ \mu + 1 }
		\sum_{ m \, = \, 3 }^{ \mu + 1 } 
		\frac{ ( \mu + 1 - m  ) ( \mu + 2 - m ) ( m - 2 ) }{ \delta^{ m - 1 } } 
		{ \mu + 1 \choose m } }_{ \defining \Sigma }
	\end{split}
	\end{equation*}
	In the expression above, the fourth equality uses the identities $ \mu = \delta + 1 $ and \eqref{Eq Binomial Coefficients Indexes Shift}. The fifth equality follows from reindexing $ m = s + 1 $; here we also denote the second summand term by $ \Sigma $. Next, we focus on deriving a closed form for $ \Sigma $, to that end, we appeal to the polynomial identity
	\begin{multline*}
	( \mu + 1 - m  ) ( \mu + 2 - m ) ( m - 2 ) 
	\\
	=
	m ( m - 1 ) ( m - 2 ) - 2 ( \mu + 1 ) m ( m - 1 ) +
	( \mu + 4 ) ( \mu + 1) m 
	- 2 ( \mu + 2 ) ( \mu + 1 ) .
	\end{multline*}
	Replacing the latter in the second summand $ \Sigma $, it transforms in
	\begin{equation*}
	\begin{split}
	\Sigma \equiv &
	\frac{ 1 }{ 2 \delta^{ 2 } }
	\frac{ 1 }{ \mu + 1 }
	\sum_{ m \, = \, 3 }^{ \mu + 1 } 
	\frac{m ( m - 1 ) ( m - 2 ) }{ \delta^{ m - 3 } } 
	{ \mu + 1 \choose m } 
	- 
	\frac{ 1 }{ \delta }
	\sum_{ m \, = \, 3 }^{ \mu + 1 } 
	\frac{m ( m - 1 )}{ \delta^{ m - 2 } } 
	{ \mu + 1 \choose m } 
	\\
	& + \frac{ \mu + 4 }{ 2 }
	\sum_{ m \, = \, 3 }^{ \mu + 1 } 
	\frac{ m }{ \delta^{ m - 1 } } 
	{ \mu + 1 \choose m }
	- \delta ( \mu + 2 )
	\sum_{ m \, = \, 3 }^{ \mu + 1 } 
	\frac{ 1 }{ \delta^{ m } } 
	{ \mu + 1 \choose m } 
	\\
	= &  \frac{ \mu }{ 2 \delta }
	\big( 1 + \frac{ 1 }{ \delta } \big)^{ \mu - 2 }
	- 
	\frac{ ( \mu + 1 ) \mu  }{ \delta }
	\Big\{
	( 1 + \frac{ 1 }{ \delta }\big)^{ \mu - 1 }
	- 1
	\Big\}
	+ \frac{ ( \mu + 4 ) ( \mu + 1 ) }{ 2 }
	\Big\{
	\big( 1 + \frac{ 1 }{ \delta } \big)^{ \mu }
	- 1 - 
	\frac{ \mu }{ \delta }
	\Big\}
	\\
	& 
	- \delta ( \mu + 2 )
	\Big\{
	\big( 1 + \frac{ 1 }{ \delta } \big)^{ \mu + 1 }
	- 1 - \frac{ \mu + 1 }{ \delta}   - \frac{ ( \mu+ 1 ) \mu }{ 2 \delta^{ 2 } }  
	\Big\}
	\\
	= &  \frac{ \delta + 1 }{ 2 \delta }
	\big( 1 + \frac{ 1 }{ \delta } \big)^{ \delta - 1 }
	- 
	\frac{ ( \delta + 2 ) ( \delta + 1 )  }{ \delta }
	\Big\{
	( 1 + \frac{ 1 }{ \delta }\big)^{ \delta }
	- 1
	\Big\}
	+ \frac{ ( \delta + 5 ) ( \delta + 2 ) }{ 2 }
	\Big\{
	\big( 1 + \frac{ 1 }{ \delta } \big)^{ \delta - 1 }
	- 1 - 
	\frac{ \delta + 1 }{ \delta }
	\Big\}
	\\
	& 
	- \delta (\delta + 3 )
	\Big\{
	\big( 1 + \frac{ 1 }{ \delta } \big)^{ \delta + 2 }
	- 1 - \frac{ \delta + 2 }{ \delta}   - \frac{ ( \delta + 2 ) ( \delta + 1 ) }{ 2 \delta^{ 2 } }  
	\Big\}
	.
	\end{split}
	\end{equation*}
	Here, the second equality was attained using Newton's binomial identity, together with its first three derivatives. The last equality was attained by replacing $ \delta = \mu - 1 $. Performing further simplifications we get the first equality in the identity \eqref{Eq Linear Relaxation Expected Profit} and replacing \eqref{Eq Greedy Expected Profit} in it, we obtain the second equality.
\end{proof}
\begin{definition}\label{Def Linear Programming Post-G}
	Define the \textbf{post-greedy profit} random variable, associated with the 0-1RLPK, as
	\begin{equation}\label{Eq Linear Programming Post-G}
	\Y^{\lp} \defining \upk \cdot \upg(\ups) .
	\end{equation}
\end{definition}
\begin{theorem}[Asymptotic Relations]\label{Thm Split Item Asymptotic}
	Let $ \ups, \upk, \Z^{\gr} $ and $ \Z^{\lp} $ be the random variables defined so far, then the following limits hold
	\begin{subequations}
		\begin{align}
		\lim\limits_{ \delta \, \rightarrow \, \infty } 
		\Exp\big( \ups\big) & = e , 
		\label{Stmt Asymptotic Split Item}\\
		\lim\limits_{ \delta \, \rightarrow \, \infty } 
		\Var\big( \ups\big)  
		& = e \big(3e - e \big) 
		\label{Stmt Asymptotic Variance}, \\
		\lim\limits_{ \delta \, \rightarrow \, \infty }  \frac{ \Exp\big( \upk \big) }{ \delta } &= 3 - e ,
		\label{Eqn Relative Slack}
		\\
		%
		\lim\limits_{ \delta \, \rightarrow \, \infty }\frac{ \Exp(\Z^{\gr}) + \dfrac{ \delta - \Exp( \ups ) + 1 }{ 2 } \Exp(\upk) }
		{ \Exp\big( \Z^{\lp} \big) } & = 1 ,
		\label{Eqn Relative LP Approximation}
		\\
		%
		\lim\limits_{ \delta \, \rightarrow \, \infty }\frac{ \Exp\big( \Z^{\gr} \big) }
		{ \Exp\big( \Z^{\lp} \big) } & = e - 2 ,
		\label{Eqn Greedy LP ratio}
		\\
		%
		\lim\limits_{ \delta \, \rightarrow \, \infty }\frac{ \dfrac{ \delta - \Exp( \ups ) + 1 }{ 2 } \Exp(\upk) }
		{ \Exp\big( \Z^{\lp} \big) } & = 3 - e .
		\label{Eqn Post-G Fraction}
		\end{align}
	\end{subequations}
\end{theorem}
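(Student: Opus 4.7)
The plan is to derive every limit directly from the closed-form formulas established in \textsc{Theorems} \ref{Thm Distribution Split Item}, \ref{Thm Knapsack Slack Expectation Properties} and \ref{Thm Linear Relaxation Expected Profit}, combined with the elementary fact that $\lim_{\delta \to \infty} \big(1 + \tfrac{1}{\delta}\big)^{\delta + c} = e$ for every fixed integer $c$. The first two limits are immediate: identity \eqref{Eq Split Expectation} gives $\Exp(\ups) = (1+1/\delta)^{\delta} \to e$, and identity \eqref{Eq Split Variance} rewrites as $(3+1/\delta)(1+1/\delta)^{\delta-1} - \big[(1+1/\delta)^{\delta}\big]^{2}$, whose limit is $3e - e^{2} = e(3-e)$. (The expression ``$e(3e-e)$'' printed in the statement appears to be a typographical slip for $e(3-e)$.)

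The third limit follows by dividing the closed form \eqref{Eq Knapsack Slack Expectation} for $\Exp(\upk)$ by $\delta$ and examining each of the three summands separately. The first summand is $O(1/\delta)$ and vanishes; the second contributes $e-2$ in the limit, since $(1+1/\delta)^{\delta+1} - (2\delta+1)/\delta \to e - 2$ and the prefactor $(\delta+3)/\delta$ tends to $1$; the third contributes $-2(e-5/2) = 5-2e$, since $(1+1/\delta)^{\delta+2} - (5\delta^{2}+7\delta+2)/(2\delta^{2}) \to e - 5/2$ and the prefactor divided by $\delta$ equals $-2$. Summing yields $(e-2) + (5-2e) = 3-e$, which is \eqref{Eqn Relative Slack}.

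For the three ratio limits the essential observation is that both $\Exp(\Z^{\lp})$ and $\Exp(\Z^{\gr})$ grow like a constant multiple of $\delta^{2}$. In formula \eqref{Eq Linear Relaxation Expected Profit} for $\Exp(\Z^{\lp})$ the first and fourth summands are $O(\delta)$, while the second summand contributes $\delta^{2}(e-1)/2$ and the third contributes $-\delta^{2}(e-2)/2$ at leading order, so $\Exp(\Z^{\lp})/\delta^{2} \to 1/2$. A parallel extraction of leading coefficients in formula \eqref{Eq Greedy Expected Profit} gives $\Exp(\Z^{\gr})/\delta^{2} \to (e-1)/2 - (e-2) + (e-5/2) = (e-2)/2$, from which \eqref{Eqn Greedy LP ratio} follows. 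For \eqref{Eqn Post-G Fraction}, substitute $\Exp(\upk) \sim \delta(3-e)$ from \eqref{Eqn Relative Slack} and $\Exp(\ups) = O(1)$ to obtain $\tfrac{\delta-\Exp(\ups)+1}{2}\Exp(\upk) \sim \delta^{2}(3-e)/2$, so the ratio tends to $3-e$. Finally, \eqref{Eqn Relative LP Approximation} is the sum $(e-2) + (3-e) = 1$ of the two previous limits.

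The main obstacle is the bookkeeping in the third paragraph: each of the formulas for $\Exp(\upk)$, $\Exp(\Z^{\gr})$ and $\Exp(\Z^{\lp})$ is a linear combination of four terms of the form $A_{i}(\delta) \cdot \big\{(1+1/\delta)^{\delta+c_{i}} - R_{i}(\delta)\big\}$, where $R_{i}(\delta)$ is the rational function engineered precisely so that the bracket converges to $e - R_{i}(\infty)$. One must expand each bracket, track which terms contribute at order $\delta^{2}$ versus lower orders, and verify that the potentially-divergent $O(\delta^{3})$ contributions cancel across summands, which they do because the same combinatorial identities that produced the closed forms force this cancellation. Once this accounting is carried out carefully for one of the three expectations, the remaining computations are structurally identical and each limit reduces to a short arithmetic check.
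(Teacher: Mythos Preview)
Your proposal is correct and follows exactly the approach the paper takes: the paper's own proof is merely a one-line sketch stating that ``an elementary calculation of limits on the corresponding closed formulas developed above gives all the desired results,'' and you have carried out precisely that calculation in detail. Your identification of the typographical slip in \eqref{Stmt Asymptotic Variance} (the limit should read $e(3-e)=3e-e^{2}$, not $e(3e-e)$) is also correct. One minor remark: your final paragraph about cancelling ``potentially-divergent $O(\delta^{3})$ contributions'' is overcautious, since the closed formulas are already arranged so that each bracketed factor converges to a finite constant and each prefactor is at most $O(\delta^{2})$; no $O(\delta^{3})$ terms actually arise.
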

\begin{proof}[Sketch of the proof.]
	An elementary calculation of limits on the corresponding closed formulas developed above gives all the desired results.
\end{proof}
\begin{remark}\label{Rem Split Item Asymptotic}
	Observe that if we approximate $ \Exp( \upk \sum_{ t \, = \, \ups }^{\mu} \upt( t ) ) $ with $ \frac{ \delta - \Exp( \ups ) + 1 }{ 2 } \Exp(\upk) $ then, the expression $ \Exp(\Z^{\gr}) + \frac{ \delta - \Exp( \ups ) + 1 }{ 2 } \Exp(\upk) $ is an approximation of $ \Exp(\Z^{\lp}) $ as the equation \eqref{Eqn Random Linear Relaxation Solution} shows. Hence, the statement \eqref{Eqn Relative LP Approximation} proves that this is a good approximation.
\end{remark}
%
%
\subsection{The expected performance of the eligible-first algorithm}
\label{Sec Full-First and Eligible-First algorithms' expected performance}
%
%
We close this section presenting the computation of the eligible-first algorithm expectation $ \Exp(\Z^{\ef}) $. Given that the proofs are remarkably similar to those presented in the previous section, we only present sketches of them with some important highlights. 
\begin{definition}\label{Def Post-G Packable Item}
	Let $ \upk $ and $ \ups $ be the random variables introduced in \textsc{Definition} \ref{Def Split Item, Slack and Packed Items Random Variables}  
	\begin{enumerate}[(i)]

		\item Let $ E $ be the set defined in \eqref{Eq Eligible First Set}. We say an item $ i \in [\mu] $ is eligible-first \textbf{eF} if it is the least element of the set $ E $ i.e., if it is the first eligible item, once the greedy algorithm has stopped packing items. 
		
		\item For the eligible-first algorithm, we define its corresponding \textbf{post-greedy profit} random variable as follows
		\begin{align}\label{Eq Full-First, Eligible-First Increment}
		& \Y^{\ef} = 
		\begin{cases}
		\upp(i) & i \text{ is } \ef ,\\
		0 , & E = \emptyset .
		\end{cases}
		\end{align}

	\end{enumerate}
\end{definition}
\begin{lemma}\label{Thm Full-First, Eligible First, Conditional Expectations}
	With the definitions above we have
	%
	%
	\begin{subequations}
		\begin{align}
		\prob\big( i \text{ is } \ef, \upk = k, \ups = s \big) & = \frac{ \delta - k }{\delta^{ s + 1 }} k \big( 1 - \frac{ k }{ \delta } \big)^{ i - s - 1 }
		{\delta - k - 1\choose s - 2} , 
		\label{Eq Probabilities Eligible-First i, k, s}
		\end{align}
		for $ i = s + 1 , \ldots, \mu $, $ k = 0, \ldots, \delta - s - 1 $, $ s = 2, \ldots, \mu $.
		\begin{align}
		%
		\begin{split}
		\Exp\big( \Y^{\ef} \big\vert  \upk = k, \ups = s\big) 
		= &
		\frac{ k }{ 4 }( \delta - s + 1 ) 
		\Big\{1 - \big(1 - \frac{ k }{ \delta } \big)^{ \delta - s + 1 } \Big\}
		- \frac{ \delta }{ 4 k }
		\big( 1 - \frac{ k }{ \delta }\big)
		\Big\{ 
		1 - 
		\big(1 + \frac{ \delta - s }{\delta} k \big)
		\big( 1 - \frac{ k }{ \delta }\big)^{ \delta - s }
		\Big\}
		.
		\label{Eq Expectation Eligible-First k, s}
		\end{split}
		\end{align}
	\end{subequations}
	%
	%
\end{lemma}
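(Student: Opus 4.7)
The plan is to prove the two identities in sequence, reducing everything to the cornerstone \textsc{Lemma} \ref{Thm Cornerstone Lemma} combined with the independence structure of the random model. For part (i), the event $\{i \text{ is } \ef, \upk = k, \ups = s\}$ factors as the intersection of the cornerstone event $\{\upk = k, \ups = s\}$, which depends only on $\upw(1), \ldots, \upw(s)$, with the eligibility event $\{\upw(s+1) > k, \ldots, \upw(i-1) > k, \upw(i) \leq k\}$, which depends on the disjoint block $\upw(s+1), \ldots, \upw(i)$. By the i.i.d.\ hypothesis on the weights these two events are independent. The cornerstone lemma supplies the first factor, while the second factor equals $\bigl(\tfrac{\delta - k}{\delta}\bigr)^{i-s-1} \cdot \tfrac{k}{\delta}$ because each weight is uniform on $[1,\delta]$. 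Multiplying yields \eqref{Eq Probabilities Eligible-First i, k, s} directly.

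For part (ii), I would start from
\begin{equation*}
\Y^{\ef} = \sum_{i = s + 1}^{\mu} \upp(i)\, \ind_{\{i \text{ is } \ef\}}
\end{equation*}
and apply linearity of conditional expectation. Writing $\upp(i) = \upw(i)\,\upg(i)$ and using that the increments $(\upt(t))_{t}$ are independent of every weight, $\upg(i)$ separates out with expectation $(\mu - i + 1)/2$. The remaining factor $\Exp\bigl(\upw(i)\, \ind_{\{i \text{ is } \ef\}} \bigm\vert \upk = k, \ups = s\bigr)$ is obtained by combining the conditional probability $\prob(i \text{ is } \ef \mid \upk = k, \ups = s) = \tfrac{k}{\delta}(1 - k/\delta)^{i-s-1}$, which follows from dividing \eqref{Eq Probabilities Eligible-First i, k, s} by the cornerstone probability, with the appropriate conditional mean of $\upw(i)$ given $\upw(i) \leq k$. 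Each summand is thus proportional to $(\mu - i + 1)(1 - k/\delta)^{i-s-1}$.

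The main computational step is evaluating this sum. After the reindexing $j = i - s - 1$ it takes the form $\sum_{j=0}^{m}(m+1-j)\,q^{j}$ with $m = \delta - s$ and $q = 1 - k/\delta$, which can be closed either by multiplying through by $(1-q)$ and telescoping, or equivalently by superposing a geometric series with its first derivative. The resulting closed form is polynomial in $(1 - q^{m+1})$ and $q^m$, and the final algebraic task is to rearrange it into the particular two-term grouping of \eqref{Eq Expectation Eligible-First k, s}, pairing the $\frac{k}{4}(\delta - s + 1)\{1 - (1 - k/\delta)^{\delta - s + 1}\}$ piece against the $\frac{\delta}{4k}(1 - k/\delta)\{1 - (1 + \frac{\delta - s}{\delta}k)(1 - k/\delta)^{\delta - s}\}$ piece, using the substitution $k = \delta(1 - q)$ to move freely between the two representations.

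I expect the main obstacle to be precisely this last rearrangement: the canonical closed form of the geometric sum does not naturally land on the grouping displayed in \eqref{Eq Expectation Eligible-First k, s}, so one must reverse-engineer the intended recombination of terms. No ideas beyond the independence structure and the cornerstone lemma are required, but careful bookkeeping of polynomial and geometric factors will be the delicate point.
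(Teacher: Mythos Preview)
Your proposal is correct and follows essentially the same route as the paper: factor the joint event via independence of the weight blocks $\upw(1),\ldots,\upw(s)$ and $\upw(s+1),\ldots,\upw(i)$, apply the cornerstone \textsc{Lemma}~\ref{Thm Cornerstone Lemma} to the first block and the uniform law to the second to obtain \eqref{Eq Probabilities Eligible-First i, k, s}; then for \eqref{Eq Expectation Eligible-First k, s} use linearity, separate $\upg(i)$ via independence of the increments (yielding $(\mu-i+1)/2$), take $\Exp(\upw(i)\mid i\text{ is }\ef,\upk=k,\ups=s)=k/2$, and close the resulting sum $\sum_{i}(\mu-i+1)(1-k/\delta)^{i-s-1}$ as a geometric series plus its derivative. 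The paper proceeds identically, and your anticipated difficulty---reverse-engineering the two-term grouping displayed in \eqref{Eq Expectation Eligible-First k, s}---is indeed the only nontrivial bookkeeping step.
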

\begin{proof}[Sketch of the proof.]
	In order to prove \eqref{Eq Probabilities Eligible-First i, k, s} first notice that 
	\begin{equation*}
	\prob\big( i \text{ is } \ef \big\vert \upk = k, \ups = s  \big) = \frac{ k }{ \delta } \big( 1 - \frac{ k }{ \delta } \big)^{ i - s - 1 } , 
	\end{equation*}
	because $ \upw(j) $ must be bigger than $ k $ for $ j = s + 1, \ldots , i -1 $ and $ \upw(i) $ must be less or equal than $ k $. Each of the former events has probability $ 1 - \frac{ k }{ \delta } $, which must take place $ i - s - 1 = ( i - 1 ) - ( s + 1 ) + 1 $ times, while the latter event has probability $ \frac{ k }{ \delta } $. Recalling that $ \prob\big( i \text{ is } \ef , \upk = k , \ups = s \big) = \prob\big( i \text{ is } \ef \big\vert \upk = k , \ups = s \big) \prob\big( \upk = k , \ups = s \big) $ together with the cornerstone identity \eqref{Eq Knapsack Slack Joint Conditioning}, the equation \eqref{Eq Probabilities Eligible-First i, k, s} follows. It is also important to stress that $ i $ is $ \ef $ only if $ i > s $.
	
	For the proof of identity \eqref{Eq Expectation Eligible-First k, s} observe that \newline 
	$ \Exp\big( \upw(i) \big\vert i \text{ is } \ef, \upk = k , \ups = s \big) = \frac{ k }{ 2 } $, because the event $ [ \, i \text{ is } \ef \, ] $ implies the event $ [ \, \upw(i) \leq k \, ] $. Hence,
	\begin{equation*}
	\begin{split}
	\Exp\big(\upp(i) \big\vert i \text{ is } \ef,  \upk = k, \ups = s  \big) = & 
	\Exp\big(\upg(i) \upw(i) \big\vert i \text{ is } \ef, \upk = k, \ups = s  \big) 
	\\
	= & 
	\Exp\big(\upg(i) \big\vert i \text{ is } \ef, \upk = k, \ups = s  \big) 
	\Exp\big(\upw(i) \big\vert i \text{ is } \ef, \upk = k, \ups = s  \big) 
	\\
	= & \frac{ \mu - i + 1 }{ 2 } \frac{ k }{ 2 } .
	\end{split}
	\end{equation*}
	From here, we get the identity \eqref{Eq Expectation Eligible-First k, s} using the same preivous reasoning.
\end{proof}
\begin{theorem}[Expected values of $ \Z^{\ef} $]\label{Thm Expectation Full-First, Eligible First}
	With the definitions above, the following expectation holds
	%
	%
	%
	\begin{equation}
		\begin{split}
	\Exp( \Z^{\ef} ) = 
	\Exp( \Z^{\gr} ) 
	& + 
	\sum\limits_{ s \, = \, 2 }^{ \mu }
	\sum\limits_{ k \, = \, 0 }^{ \mu - s + 1 }
	\frac{ k }{ 4 }( \delta - s + 1 ) 
	\Big\{1 - \big(1 - \frac{ k }{ \delta } \big)^{ \delta - s + 1 } \Big\}
	\frac{ \delta - k }{ \delta^{ s } } { \delta - k - 1 \choose s - 2 }
	\\
	& - 
	\sum\limits_{ s \, = \, 2 }^{ \mu }
	\sum\limits_{ k \, = \, 0 }^{ \mu - s + 1 }
	\frac{ \delta }{ 4 k }
	\big( 1 - \frac{ k }{ \delta }\big)
	\Big\{ 
	1 - 
	\big(1 + \frac{ \delta - s }{\delta} k \big)
	\big( 1 - \frac{ k }{ \delta }\big)^{ \delta - s }
	\Big\}
	\frac{ \delta - k }{ \delta^{ s } } { \delta - k - 1 \choose s - 2 }
	.
	\label{Eq Expectation Eligible-First}
		\end{split}
	\end{equation}
	%
	%
	Here $ \Z^{ \ef} $ is the value of the objective function furnished by the eligible-first algorithm, introduced in \textsc{Definition} \ref{Def The split item} part (iv).
\end{theorem}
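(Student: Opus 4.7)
The plan is to combine the decomposition $\Z^{\ef} = \Z^{\gr} + \Y^{\ef}$ (which is immediate from the definition of the eligible-first algorithm in \textsc{Definition} \ref{Def The split item} (iv) together with the post-greedy profit $\Y^{\ef}$ in \textsc{Definition} \ref{Def Post-G Packable Item}) with the law of total expectation over the partition generated by the joint random variable $(\ups,\upk)$. Thus, by linearity
\begin{equation*}
\Exp(\Z^{\ef}) \;=\; \Exp(\Z^{\gr}) \;+\; \Exp(\Y^{\ef}),
\end{equation*}
and the first summand is already controlled by \textsc{Theorem} \ref{Thm Greedy Expected Profit}, so the whole task reduces to producing a closed expression for $\Exp(\Y^{\ef})$.

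For $\Exp(\Y^{\ef})$ I would apply the discrete expectation-over-partition identity \eqref{Eq Expectation wrt Partition} with the partition $\{[\ups=s,\upk=k]\}$, whose admissible ranges ($s=2,\ldots,\mu$ and $k=0,\ldots,\delta-s+1$) come from the \textsc{Cornerstone Lemma} \ref{Thm Cornerstone Lemma}. This yields
\begin{equation*}
\Exp(\Y^{\ef}) \;=\; \sum_{s\,=\,2}^{\mu}\sum_{k\,=\,0}^{\delta-s+1}
\Exp\big(\Y^{\ef}\big\vert \upk=k,\ups=s\big)\,\prob\big(\upk=k,\ups=s\big).
\end{equation*}
Then I simply substitute $\prob(\upk=k,\ups=s)=\tfrac{\delta-k}{\delta^{s}}\binom{\delta-k-1}{s-2}$ from \eqref{Eq Knapsack Slack Joint Conditioning}, and the conditional expectation \eqref{Eq Expectation Eligible-First k, s} from \textsc{Lemma} \ref{Thm Full-First, Eligible First, Conditional Expectations}; distributing over the two summands in \eqref{Eq Expectation Eligible-First k, s} produces exactly the two double sums that appear in \eqref{Eq Expectation Eligible-First}.

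The main obstacle I foresee is the boundary behaviour at $k=0$: the second piece of \eqref{Eq Expectation Eligible-First k, s} carries a $\tfrac{\delta}{4k}$ factor, so the summand is formally $0/0$ at $k=0$. I would justify including $k=0$ in the summation either by expanding $1-(1+\tfrac{\delta-s}{\delta}k)(1-\tfrac{k}{\delta})^{\delta-s}$ in Taylor series around $k=0$ to confirm it vanishes at order $k^{2}$ (so the product with $\tfrac{1}{k}$ admits a continuous extension valued zero at $k=0$), or equivalently by noting that when $k=0$ the event $E=\emptyset$ almost surely, hence $\Y^{\ef}=0$ and the contribution is literally zero, so one may restrict the sum to $k\geq 1$ without loss. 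A minor secondary point is reconciling the index ranges: \textsc{Lemma} \ref{Thm Full-First, Eligible First, Conditional Expectations} tacitly needs $s\leq \mu-1$ and $k\geq 1$ for a non-trivial eligible item to exist, but the terms corresponding to $s=\mu$ or $k=0$ all vanish, so adjoining them to match the cornerstone range is harmless and yields the clean double sum displayed in \eqref{Eq Expectation Eligible-First}.
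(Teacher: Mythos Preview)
Your proposal is correct and follows essentially the same route as the paper: decompose $\Z^{\ef}=\Z^{\gr}+\Y^{\ef}$, then expand $\Exp(\Y^{\ef})$ via the law of total expectation over the partition $\{[\ups=s,\upk=k]\}$, substituting the cornerstone identity \eqref{Eq Knapsack Slack Joint Conditioning} and the conditional expectation \eqref{Eq Expectation Eligible-First k, s}. Your treatment of the $k=0$ boundary is actually more careful than the paper's own sketch, which silently starts the inner sum at $k=1$ while the stated formula begins at $k=0$.
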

\begin{proof}[Sketch of the proof.]
	Recalling 
	\begin{equation*}
	\begin{split}
	\Exp(\Y^{\ef}) = &
	\sum\limits_{ s \, = \, 2 }^{ \mu }
	\sum\limits_{ k \, = \, 1 }^{ \mu - ( s - 1 ) }
	\Exp\big(\Y^{\ef} \big\vert \upk = k, \ups = s \big) 
	\prob\big( \upk = k, \ups = s \big) ,
	\end{split}
	\end{equation*}
	together with the fundamental identity \eqref{Eq Knapsack Slack Joint Conditioning}, the equation \eqref{Eq Expectation Eligible-First} follows.
\end{proof}
%
%
%
%
\begin{corollary}[Approximation of $ \Exp(\Z^{\ef}) $]\label{Thm Approximation Expectation Full-First, Eligible First}
	With the definitions above, the following estimate holds
	%
	%
	%
	%
	\begin{equation}\label{Eq Approximation Expectation Eligible-First}
		\begin{split}
	\Exp( \Z^{\ef} ) \sim 
	\ef(\delta) \defining 
	\Exp( \Z^{\gr} ) 
	+ & \frac{ \Exp( \upk ) }{ 4 }( \delta - \Exp( \ups ) + 1 ) 
	\Big\{1 - \big(1 - \frac{ \Exp( \upk ) }{ \delta } \big)^{ \delta - \Exp( \ups ) + 1 } \Big\}
	\\
	- & \frac{ \delta }{ 4 \Exp( \upk ) }
	\big( 1 - \frac{ \Exp( \upk)  }{ \delta }\big)
	\Big\{ 
	1 - 
	\big(1 + \frac{ \delta - \Exp( \ups ) }{\delta} \expk \big)
	\big( 1 - \frac{ \Exp( \upk)  }{ \delta }\big)^{ \delta - \Exp( \ups ) }
	\Big\}
	.
		\end{split}
	\end{equation}
	%
	%
	%
\end{corollary}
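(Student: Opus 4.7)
The plan is to derive the stated approximation by combining the exact formula \eqref{Eq Expectation Eligible-First} from \textsc{Theorem} \ref{Thm Expectation Full-First, Eligible First} with a mean-field substitution of $\upk$ and $\ups$ by their expectations $\Exp(\upk)$ and $\Exp(\ups)$, respectively. Concretely, I would first observe that $\Z^{\ef} = \Z^{\gr} + \Y^{\ef}$ by \textsc{Definition} \ref{Def Post-G Packable Item}, so that $\Exp(\Z^{\ef}) = \Exp(\Z^{\gr}) + \Exp(\Y^{\ef})$. Next, invoking the tower property together with \textsc{Lemma} \ref{Thm Full-First, Eligible First, Conditional Expectations}, namely equation \eqref{Eq Expectation Eligible-First k, s}, I would write
\begin{equation*}
\Exp(\Y^{\ef}) \, = \, \Exp \bigl[ g(\upk, \ups) \bigr], \qquad \text{with } g(k, s) \defining \Exp\big(\Y^{\ef}\big\vert \upk = k, \ups = s\big),
\end{equation*}
where $g(k, s)$ is precisely the bracketed expression appearing in the summand of \eqref{Eq Expectation Eligible-First}.

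Next, I would approximate $\Exp[g(\upk, \ups)] \sim g\bigl( \Exp(\upk), \Exp(\ups) \bigr)$. This is the first-order mean-field approximation and it would be motivated by a Taylor expansion of $g$ around the point $\bigl( \Exp(\upk), \Exp(\ups) \bigr)$, retaining only the zeroth-order term; the higher-order correction involves the covariance matrix of $(\upk, \ups)$ weighted against the Hessian of $g$. Substituting this approximation into the identity above and replacing the factor $\Exp(\upk)$ and $\delta - \Exp(\ups) + 1$ in the explicit formula for $g$ yields exactly the expression $\ef(\delta)$ in \eqref{Eq Approximation Expectation Eligible-First}.

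The hard part will be justifying that this substitution is indeed a good approximation, i.e., that the higher-order terms in the Taylor expansion are asymptotically negligible. In principle one would want a concentration argument showing that $\upk/\delta$ and $\ups$ behave tightly around their means; \textsc{Theorem} \ref{Thm Split Item Asymptotic} already provides evidence in this direction, since $\Exp(\ups) \to e$ while $\Var(\ups)$ remains bounded, and $\Exp(\upk)/\delta \to 3 - e$. A rigorous bound on the error would require controlling $\Exp[(\upk - \Exp(\upk))^2]$ and $\Exp[(\ups - \Exp(\ups))^2]$ together with uniform bounds on the second partial derivatives of $g$ on the relevant rectangle, which is a nontrivial calculation that I would most likely leave as empirical corroboration (consistent with the statement of the result being written with $\sim$ rather than $=$). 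Accordingly, the proof would be presented as a sketch stressing the conditional-expectation identity and the mean-field substitution, with the quality of the approximation to be verified later via the numerical experiments announced in \textsc{Section} \ref{Sec Full-First and Eligible-First algorithms' expected performance}.
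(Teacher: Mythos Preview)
Your proposal is correct and follows essentially the same route as the paper: decompose $\Exp(\Z^{\ef}) = \Exp(\Z^{\gr}) + \Exp(\Y^{\ef})$, express $\Exp(\Y^{\ef})$ via the conditional expectation $g(k,s)=\Exp(\Y^{\ef}\mid \upk=k,\ups=s)$ from \eqref{Eq Expectation Eligible-First k, s}, and then perform the mean-field substitution $\Exp[g(\upk,\ups)]\sim g(\Exp(\upk),\Exp(\ups))$. The paper's proof is in fact more heuristic than yours---it simply passes through the rounded values $k_0=\lfloor\Exp(\upk)\rfloor$, $s_0=\lceil\Exp(\ups)\rceil$ and then replaces them back by the expectations, without any Taylor-expansion or concentration discussion---so your framing is, if anything, a slight upgrade.
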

\begin{proof}
	Let $ k_{0} \defining \big\lfloor \Exp(\upk)\big\rfloor $ and $ s_{0} \defining \big\lceil \Exp(\ups)\big\rceil $ and notice the approximation
	\begin{equation*}
	\begin{split}
	\Exp( \Y^{\ef} ) \sim & \Exp\big( \Y^{\ef}\big\vert \upk = k_{0}, \ups = s_{0}  \big)\\
	\sim &  
	\frac{ k_{0} }{ 4 }( \delta - s_{0} + 1 ) 
	\Big\{1 - \big(1 - \frac{ k_{0} }{ \delta } \big)^{ \delta - s_{0} + 1 } \Big\} 
	- \frac{ \delta }{ 4 k_{0} }
	\big( 1 - \frac{ k_{0} }{ \delta }\big)
	\Big\{ 
	1 - 
	\big(1 + \frac{ \delta - s_{0} }{\delta} k_{0} \big)
	\big( 1 - \frac{ k_{0} }{ \delta }\big)^{ \delta - s_{0} }
	\Big\}
	\\
	\sim & 
	\frac{ \expk }{ 4 }( \delta - \exps + 1 ) 
	\Big\{1 - \big(1 - \frac{ \expk }{ \delta } \big)^{ \delta - \exps + 1 } \Big\}
	\\
	&
	- \frac{ \delta }{ 4 \expk }
	\big( 1 - \frac{ \expk }{ \delta }\big)
	\Big\{ 
	1 - 
	\big(1 + \frac{ \delta - \exps }{\delta} \expk \big)
	\big( 1 - \frac{ \expk }{ \delta }\big)^{ \delta - \exps }
	\Big\}
	.
	\end{split}
	\end{equation*}
	Here, the first approximation follows by assuming that $ \upk, \ups $ are constant and equal to $ k_{ 0 } $, $ s_{ 0 } $ respectively. The second line follows from the equality \eqref{Eq Expectation Eligible-First k, s} and the third line follows by merely replacing back $ k_{ 0 } , s_{ 0 } $ by the corresponding expected values $ \Exp( \upk ) $ and $ \Exp( \ups ) $ respectively. Next, recalling that $ \Exp( \Z^{\ef}) = \Exp( \Z^{\gr} ) + \Exp( \Y^{\ef} ) $ and using the approximation above, the estimate \eqref{Eq Approximation Expectation Eligible-First} follows. 
\end{proof}
\begin{remark}\label{Rem Approximation Expectation Full-First, Eligible-First}
	Observe that we denote the approximation $ \ef(\delta) $, as a function depending only on the capacity $ \delta $. This is a correct statement because $ \Exp(\ups) $ and $ \Exp(\upk) $ are both functions, exclusively depending on $ \delta $ as the equations \eqref{Eq Split Expectation} and \eqref{Eq Knapsack Slack Expectation} show.
\end{remark}
%
%
%
%
%
%
\section{Probabilistic Analysis of a D\&C Pair}\label{Sec Probabilistic Estimates of the Divide-and-Conquer Algorithm}
%
%
%
%
With the current probabilistic setting it is not possible to get exact expressions for the expected value of $ \Z^{ * } $ (not to mention closed formulas), because it is not possible to give explicit expressions for the optimal solution $ z^{*} $ as we were able to attain for $ z^{\gr} $ in  \eqref{Eq Greedy Decision Variables} and $ z^{\lp} $ in \eqref{Eq LR Objective}. Furthermore, it is not possible to give such explicit descriptions for 
$ z^{\fg} $ or even $ z^{\eg} $, therefore we use the greedy algorithm and the eligible-first algorithm introduced in \textsc{Definition} \ref{Def The split item}, to estimate the expected performance of the Divide-and-Conquer method. 
%
%
%
\subsection{Setting the  $ \Pi_{\lt} $ and $ \Pi_{\rt} $ random subproblems}\label{Sec Left and Right Subproblems Capacities}
%
%
For the analysis of the Divide-and-Conquer method, the induced problems  $ \Pi_{\lt} $ and $ \Pi_{\rt} $ must be analyzed independently. To that end we introduce the random setting for each of these problems  
\begin{definition}\label{Def left-right capacities}
	Define the following elements introduced by one iteration of the Divide-and-Conquer method
	\begin{enumerate}[(i)]
		\item The left and right capacity random variables are given by
		\begin{align}\label{Eq left-right capacities}
		\cleft \defining & \sum\limits_{ i \, \odd }^{\ups - 1 } \upw(i) + \Big\lceil \frac{ \upk }{ 2 } \Big\rceil, &
		\cright \defining & \sum\limits_{  i \, \even }^{\ups - 1 } \upw(i) + \Big\lfloor \frac{ \upk }{ 2 } \Big\rfloor . 
		\end{align}
		\item The left and right subproblems are defined by
		\begin{align}
		\Pi_{\lt} & \defining \big\langle \cleft, (\upp(i) )_{ i \in V_{\lt} } , (\upw(i))_{i \in V_{\lt}} \big\rangle, &
		\Pi_{\rt} & \defining \big\langle \cright, (\upp(i) )_{ i \in V_{\rt} } , (\upw(i))_{i \in V_{\rt}} \big\rangle ,
		\end{align}
		with $ V_{ \lt } \defining \{ i \in [\mu]: i \text{ is odd} \} $ and $ V_{ \rt } \defining \{ i \in [\mu]: i \text{ is even} \} $. 
		
		\item We denote by $ \Z^{ \alg }_{\lt}, \Z^{ \alg }_{\rt} $, the corresponding objective function values to $ \Pi_{\lt}, \Pi_{\rt} $ furnished by the algorithms $ \alg = \ast, \gr, \ff, \ef, \lp $. (Recall that the case $ \alg = * $, stands for the optimal solution, i.e., the optimal value generated by an exact algorithm, e.g., dynamic programming.)
		
		\item We denote by $ \Y^{ \ef }_{\lt}, \Y^{ \ef }_{\rt} $, the corresponding post-greedy profit random variables of $ \Pi_{\lt}, \Pi_{\rt} $ respectively, furnished by the algorithms $ \alg = \ef, \lp $ and according to the definitions \ref{Def Post-G Packable Item} (ii) and \ref{Def Linear Programming Post-G} respectively.
		
		\item Denote by $ \sleft $ ($ \sright $), $ \kleft $ ($ \kright $), the split item and the slack of the $ \Pi_{\lt} $ ($ \Pi_{\rt} $) problem. 
		
	\end{enumerate}
\end{definition}
Before proceeding to the next results, we reduce the cases of analysis adopting the next hypothesis. 
\begin{hypothesis}\label{Hyp Analysis of left and right subproblems}
	From now on it will be assumed that $ \mu = 2\lambda $, i.e., the quantity of eligible items is even. In particular, each subproblem $ \Pi_{\lt} $ and $ \Pi_{\rt} $ has $ \lambda $ eligible items.
\end{hypothesis}
\begin{theorem}\label{Thm Balanced capacities}
	Let $ \cleft, \cright $ be the random variables introduced in \textsc{Definition} \ref{Def left-right capacities} above
	\begin{enumerate}[(i)]
		\item If $ s $ is an odd number, then
		\begin{subequations}\label{Eq Balanced Capacities s odd}
			\begin{align}
			\Exp\big(\cleft \big\vert \upk = k , \ups = s\big) 
			= & \frac{ \delta - k }{ 2 } + \big\lceil \frac{ k }{ 2 }\big\rceil ,
			\label{Eq Left Capacity s odd}
			\\
			\Exp\big(\cright \big\vert \upk = k , \ups = s\big)
			= & \frac{ \delta - k }{ 2 } + \big\lfloor \frac{ k }{ 2 } \big\rfloor .
			\label{Eq Right Capacity s odd}
			\end{align}
		\end{subequations}
		\item If $ s $ is an even number, then
		\begin{subequations}\label{Eq Balanced Capacities s even}
			\begin{align}
			\Exp\big(\cleft \big\vert \upk = k , \ups = s\big) 
			= & \frac{ \delta - k }{ 2 } + \big\lceil \frac{ k }{ 2 }\big\rceil + \frac{ 1 }{ 4 } \frac{ \delta - k }{ s - 1 } ,
			\label{Eq Left Capacity s even}
			\\
			\Exp\big(\cright \big\vert \upk = k , \ups = s\big)
			= & \frac{ \delta - k }{ 2 }  + \big\lfloor \frac{ k }{ 2 } \big\rfloor - \frac{ 1 }{ 4 } \frac{ \delta - k }{ s - 1 }.
			\label{Eq Right Capacity s even}
			\end{align}
		\end{subequations}
	\end{enumerate}
\end{theorem}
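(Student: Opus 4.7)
The plan is to condition on $\{\ups = s,\upk = k\}$, identify the induced joint distribution of $(\upw(1),\ldots,\upw(s-1))$ as uniform over compositions of $\delta-k$ into $s-1$ parts, and then invoke \textsc{Proposition} \ref{Thm Balance Even-Odd sums in compositions} to compare the odd- and even-indexed sums. Mirroring the factorization used in the cornerstone \textsc{Lemma} \ref{Thm Cornerstone Lemma}, the event $\{\ups = s,\upk = k\}$ coincides with $\{\sum_{i=1}^{s-1}\upw(i)=\delta-k\}\cap\{\upw(s)>k\}$. Because the weights are i.i.d.\ uniform on $[1,\delta]$ and the constraint on $\upw(s)$ is independent of the preceding ones, the conditional joint law of $(\upw(1),\ldots,\upw(s-1))$ is uniform on the set $\A$ of compositions of $n\defining\delta-k$ into $m\defining s-1$ parts (the cap $\upw(i)\leq\delta$ is automatic, since each part is at most $n-(m-1)\leq\delta$).

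Under this uniform law I would write the conditional expectations as
\begin{align*}
\Exp(\cleft\mid\upk=k,\ups=s) & = \frac{\Sigma_{\odd}}{\#\A}+\Big\lceil\frac{k}{2}\Big\rceil, \\
\Exp(\cright\mid\upk=k,\ups=s) & = \frac{\Sigma_{\even}}{\#\A}+\Big\lfloor\frac{k}{2}\Big\rfloor,
\end{align*}
using the notation $\Sigma_{\odd},\Sigma_{\even}$ of \textsc{Proposition} \ref{Thm Balance Even-Odd sums in compositions} with $n$ and $m$ as above. The auxiliary identity $\Sigma_{\odd}+\Sigma_{\even}=n\,\#\A$ follows at once, since every composition contributes $n$ to the combined sum.

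For part (i), $s$ odd makes $m=2\ell$ even, so \textsc{Proposition} \ref{Thm Balance Even-Odd sums in compositions}(i) forces $\Sigma_{\odd}=\Sigma_{\even}$; combined with the sum identity this gives $\Sigma_{\odd}/\#\A=\Sigma_{\even}/\#\A=(\delta-k)/2$, and adding the slack contributions $\lceil k/2\rceil$ and $\lfloor k/2\rfloor$ delivers \eqref{Eq Balanced Capacities s odd}. For part (ii), $s$ even makes $m=2\ell+1$ odd with $\ell+1=s/2$, and \textsc{Proposition} \ref{Thm Balance Even-Odd sums in compositions}(ii) supplies a closed form for $\Sigma_{\odd}-\Sigma_{\even}$. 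Solving the $2\times 2$ linear system formed by this difference together with $\Sigma_{\odd}+\Sigma_{\even}=n\,\#\A$ determines both ratios; adding the slack terms then yields \eqref{Eq Balanced Capacities s even}.

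The main obstacle will be the algebraic bookkeeping in the even-$s$ case: substituting $(n,\ell)=(\delta-k,(s-2)/2)$ into the formula of \textsc{Proposition} \ref{Thm Balance Even-Odd sums in compositions}(ii) and reconciling it with the explicit correction $\tfrac{1}{4}\tfrac{\delta-k}{s-1}$ of \eqref{Eq Left Capacity s even}. Apart from this simplification the argument is routine, and the only probabilistic ingredient is the exchangeability that produces the uniform distribution on $\A$.
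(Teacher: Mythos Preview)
Your proposal is correct and follows essentially the same route as the paper: both arguments identify the conditional law of $(\upw(1),\ldots,\upw(s-1))$ given $\{\upk=k,\ups=s\}$ as uniform on compositions of $\delta-k$ into $s-1$ parts, invoke \textsc{Proposition}~\ref{Thm Balance Even-Odd sums in compositions} to relate $\Sigma_{\odd}$ and $\Sigma_{\even}$, and solve the resulting $2\times2$ linear system (sum $=\delta-k$, difference given by the proposition). You are slightly more explicit than the paper in justifying the uniform distribution on $\A$, but the structure and the key lemma are identical.
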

\begin{proof}
	Recall that it $ \upk = k $ and $ \ups = s $ then $ \sum_{i \, = \, 1}^{s - 1} \upw(i) = \delta - k $ and $ \upw(s) > \delta - k $; hence $ \big(\upw(i)\big)_{i = 1}^{s - 1} $
	is a composition of $ \delta - k $ in $ s - 1 $ parts. 
	\begin{enumerate}[(i)]
		\item 
		If $ s $ is odd, then $ s - 1 $ is even and due to \textsc{Theorem} \ref{Thm Balance Even-Odd sums in compositions} (i) about compositions, it follows that 
		\begin{equation*}
		\begin{split}
		\Exp\big(\sum\limits_{ i \, \odd  }^{\ups - 1 } \upw(i)  \big\vert \upk = k , \ups = s\big) 
		= 
		\Exp\big(\sum\limits_{ i \, \even }^{\ups - 1 } \upw(i)  \big\vert \upk = k , \ups = s\big) .
		\end{split}
		\end{equation*}
		Recalling that $ \Exp\big(\sum_{ i \, = \, 1 }^{\ups - 1 } \upw(i)  \big\vert \upk = k , \ups = s\big)  = \delta - k $, the result follows.
		
		\item If $ s $ is even, then $ s - 1 = 2 \ell + 1 $ is odd and due to \textsc{Theorem} \ref{Thm Balance Even-Odd sums in compositions} (ii) about compositions, it follows that 
		\begin{equation*}
		\begin{split}
		\Exp\big(\sum\limits_{ i \, \odd  }^{\ups - 1 } \upw(i)  \big\vert \upk = k , \ups = s\big) 
		= &
		\Exp\big(\sum\limits_{ i \, \even }^{\ups - 1 } \upw(i)  \big\vert \upk = k , \ups = s\big) 
		+ \frac{ 1 }{ 2 }\frac{ \delta - k }{ 2 \ell + 1 } \\
		= &
		\Exp\big(\sum\limits_{ i \, \even }^{\ups - 1 } \upw(i)  \big\vert \upk = k , \ups = s\big) 
		+ \frac{ 1 }{ 2 }\frac{ \delta - k }{ s - 1 }.
		\end{split}
		\end{equation*}
		The second equality is a mere replacement of $ s = 2 \ell + 2 $. Hence, recalling that \newline $ \Exp\big(\sum_{ i \, = \, 1 }^{\ups - 1 } \upw(i)  \big\vert \upk = k , \ups = s\big)  = \delta - k $ and solving the $ 2 \times 2 $ linear system, the result follows.
	\end{enumerate}
\end{proof}
\begin{lemma}\label{Thm Expected Ceil and Floor}
	Let $ \upk $ be the slack variable introduced in \textsc{Definition} \ref{Def Split Item, Slack and Packed Items Random Variables} then
	\begin{subequations}\label{Eq Expected Ceil and Floor}
		\begin{align}
		\Exp\Big( \Big\lceil \frac{ \upk }{ 2 }\Big\rceil\Big) 
		= &
		\frac{ \Exp( \upk) }{ 2 }
		+ \frac{ 1 }{ 2\delta^{ 2 } }
		\sum\limits_{k \, \even }^{ \delta }
		k
		\big( 1 + \frac{ 1 }{ \delta }\big)^{ k - 1} ,
		\label{Eq Expected Ceil}  \\
		\Exp\Big( \Big\lfloor \frac{ \upk }{ 2 }\Big\rfloor\Big) 
		= &
		\frac{ \Exp( \upk) }{ 2 } 
		-
		\frac{ 1 }{ 2\delta^{ 2 } }
		\sum\limits_{k \, \even }^{ \delta }
		k
		\big( 1 + \frac{ 1 }{ \delta }\big)^{ k - 1} .
		\label{Eq Expected Floor} 
		\end{align}
	\end{subequations}
\end{lemma}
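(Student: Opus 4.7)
The plan is to exploit the elementary pointwise identities $\lceil k/2 \rceil = k/2 + \frac{1}{2}\ind_{\{k \text{ odd}\}}$ and $\lfloor k/2 \rfloor = k/2 - \frac{1}{2}\ind_{\{k \text{ odd}\}}$, valid for every integer $k$. Taking expectations and applying linearity reduces both identities \eqref{Eq Expected Ceil} and \eqref{Eq Expected Floor} to computing the single quantity $\prob(\upk \text{ odd})$, since
\begin{equation*}
\Exp\!\Big(\Big\lceil\tfrac{\upk}{2}\Big\rceil\Big) = \tfrac{1}{2}\Exp(\upk) + \tfrac{1}{2}\prob(\upk \text{ odd}),
\qquad
\Exp\!\Big(\Big\lfloor\tfrac{\upk}{2}\Big\rfloor\Big) = \tfrac{1}{2}\Exp(\upk) - \tfrac{1}{2}\prob(\upk \text{ odd}).
\end{equation*}
Once $\prob(\upk \text{ odd})$ is put in the claimed form, both equations follow at once, so the whole task reduces to one summation identity.

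Next, I would evaluate $\prob(\upk \text{ odd})$ directly from the distribution \eqref{Eq Knapsack Slack Probability Distribution}, writing
\begin{equation*}
\prob(\upk \text{ odd}) = \sum_{\substack{k \text{ odd} \\ 0 \leq k \leq \delta}} \frac{\delta - k}{\delta^{2}}\Big(1 + \frac{1}{\delta}\Big)^{\delta - k - 1}.
\end{equation*}
Performing the reindexing $m \defining \delta - k$ shifts the sum to
\begin{equation*}
\prob(\upk \text{ odd}) = \frac{1}{\delta^{2}}\sum_{\substack{m : \delta - m \text{ odd} \\ 0 \leq m \leq \delta}} m \Big(1 + \frac{1}{\delta}\Big)^{m - 1}.
\end{equation*}
At this point I would invoke \textsc{Hypothesis} \ref{Hyp Analysis of left and right subproblems}, which imposes $\mu = 2\lambda$ and hence $\delta = 2\lambda - 1$ is odd; under this parity, the constraint ``$\delta - m$ odd'' becomes ``$m$ even'', yielding exactly the sum appearing in the statement.

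Assembling the pieces, $\frac{1}{2}\prob(\upk \text{ odd})$ becomes $\frac{1}{2\delta^{2}}\sum_{m \text{ even}}^{\delta} m(1 + 1/\delta)^{m-1}$, and substituting into the two displayed expectation formulas above produces \eqref{Eq Expected Ceil} and \eqref{Eq Expected Floor}. The main obstacle is really the bookkeeping of parity: one has to be careful that the reindexing $m = \delta - k$ flips odd to even precisely because $\delta$ is odd in the working hypothesis; in the opposite parity case, the two sums would switch roles, so the identity as stated is tied to the standing assumption $\mu$ even. No heavy algebra or binomial manipulation is needed beyond the basic distribution formula from the cornerstone lemma, which makes this essentially a parity-accounting argument rather than a computational one.
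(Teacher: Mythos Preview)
Your proof is correct and follows essentially the same route as the paper: both reduce the computation to $\tfrac{1}{2}\Exp(\upk) + \tfrac{1}{2}\prob(\upk \text{ odd})$, evaluate $\prob(\upk \text{ odd})$ via the distribution \eqref{Eq Knapsack Slack Probability Distribution}, and reindex $m = \delta - k$ using the parity $\delta$ odd (from \textsc{Hypothesis}~\ref{Hyp Analysis of left and right subproblems}) to convert the odd-$k$ sum into the even-$m$ sum. Your use of the pointwise identity $\lceil k/2\rceil = k/2 + \tfrac{1}{2}\ind_{\{k \text{ odd}\}}$ compresses what the paper does by explicit case-splitting over $k=2\ell$ and $k=2\ell+1$, but the substance is identical.
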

\begin{proof}
	We prove the statement using the definition $ \Exp( \lceil \frac{ \upk }{ 2 }\rceil) 
	= \sum_{k \, = \, 0}^{ \delta }
	\lceil \frac{ k }{ 2 }\rceil \prob(\upk = k)  $. Hence, separating even and odd indexes we get
	\begin{equation*}
	\begin{split}
	\Exp\Big( \Big\lceil \frac{ \upk }{ 2 } \Big\rceil\Big) 
	= & \sum\limits_{ \ell \, = \, 0}^{ \lambda - 1}
	\Big\lceil \frac{ 2\ell }{ 2 }\Big\rceil \prob(\upk = 2\ell)
	+ 
	\sum\limits_{\ell \, = \, 0}^{ \lambda - 1 }
	\Big\lceil \frac{ 2 \ell + 1 }{ 2 }\Big\rceil \prob(\upk = 2 \ell + 1)
	\\
	= & \sum\limits_{ \ell \, = \, 0}^{ \lambda - 1}
	\ell  \prob(\upk = 2\ell)
	+ 
	\sum\limits_{\ell \, = \, 0}^{ \lambda - 1 }
	( \ell + 1 ) \prob(\upk = 2 \ell + 1)
	\\
	= & \sum\limits_{ \ell \, = \, 0}^{ \lambda - 1}
	\frac{ 2\ell }{ 2 }  \prob(\upk = 2\ell)
	+ 
	\sum\limits_{\ell \, = \, 0}^{ \lambda - 1 }
	\frac{ 2\ell + 1}{ 2 }  \prob(\upk = 2 \ell + 1)
	+ 
	\frac{ 1 }{ 2 }\sum\limits_{\ell \, = \, 0}^{ \lambda - 1 }
	\prob(\upk = 2 \ell + 1)
	\\
	= &
	\frac{\Exp( \upk ) }{ 2 }
	+ 
	\frac{ 1 }{ 2 }\sum\limits_{\ell \, = \, 0}^{ \lambda - 1 }
	\prob(\upk = 2 \ell + 1) .
	\end{split}
	\end{equation*}
	Here, the second equality is the computation of the ceiling function $ \lceil \cdot \rceil $, the third equality is a convenient association of terms and the fourth equality merely recovers the expectation of the slack random variable $ \upk $. Next we focus in the last sum,
	\begin{equation*}
	\begin{split}
	\frac{ 1 }{ 2 }\sum\limits_{\ell \, = \, 0}^{ \lambda - 1 }
	\prob(\upk = 2 \ell + 1)
	= & 
	\frac{ 1 }{ 2 }\sum\limits_{k \, \odd }^{ \delta }
	\frac{ \delta - k }{ \delta^{ 2 } }
	\big( 1 + \frac{ 1 }{ \delta }\big)^{ \delta - k - 1}
	= 
	\frac{ 1 }{ 2\delta^{ 2 } }
	\sum\limits_{m \, \even }^{ \delta }
	m
	\big( 1 + \frac{ 1 }{ \delta }\big)^{ m - 1} .
	\end{split}
	\end{equation*}
	%
	Here, the first equality comes from  the identity \eqref{Eq Knapsack Slack Probability Distribution}. The second equality is the reindexing $ m = \delta - k $ and recalling that $ \delta $ and $ k $ are odd, it follows that $ m $ is even. Combining with the previous expression, the identity \eqref{Eq Expected Ceil} follows. 
	
	In order to prove the identity \eqref{Eq Expected Floor}, it suffices to note that $ \lfloor \frac{ \upk }{2} \rfloor = \upk -  \lceil \frac{ \upk }{2} \rceil$ and use \eqref{Eq Expected Ceil} to conclude the result. 
\end{proof}
\begin{theorem}\label{Thm Expected Right and Left Capacities}
	The random variable capacities of the left and right problems have the following expectations
	\begin{subequations}\label{Eq Expected Right and Left Capacities}
		\begin{align}
		\begin{split}
		\Exp(\cleft) = & 
		\frac{ \delta }{ 2 } 
		+ \frac{ 1 }{ 2\delta^{ 2 } }
		\sum\limits_{k \, \even }^{ \delta }
		k
		\big( 1 + \frac{ 1 }{ \delta }\big)^{ k - 1}
		\\
		&
		+ \frac{ \mu }{ 8 }
		\Big\{
		\big( 1 + \frac{ 1 }{ \delta } \big)^{ \mu }
		+
		\big( 1 - \frac{ 1 }{ \delta } \big)^{ \mu }
		\Big\} 
		-
		\frac{ \delta }{ 8 }
		\Big\{
		\big( 1 + \frac{ 1 }{ \delta } \big)^{ \mu + 1 }
		-
		\big( 1 - \frac{ 1 }{ \delta } \big)^{ \mu + 1}
		\Big\} . 
		\label{Eq Capacity Left Expectation}
		\end{split}
		\\
		\begin{split}
		\Exp(\cright) = &
		\frac{ \delta }{ 2 } -
		\frac{ 1 }{ 2\delta^{ 2 } }
		\sum\limits_{k \, \even }^{ \delta }
		k
		\big( 1 + \frac{ 1 }{ \delta }\big)^{ k - 1}
		\\
		&
		- \frac{ \mu }{ 8 }
		\Big\{
		\big( 1 + \frac{ 1 }{ \delta } \big)^{ \mu }
		+
		\big( 1 - \frac{ 1 }{ \delta } \big)^{ \mu }
		\Big\} 
		+
		\frac{ \delta }{ 8 }
		\Big\{
		\big( 1 + \frac{ 1 }{ \delta } \big)^{ \mu + 1 }
		-
		\big( 1 - \frac{ 1 }{ \delta } \big)^{ \mu + 1}
		\Big\} . 
		\label{Eq Capacity Right Expectation}
		\end{split}
		\end{align}
	\end{subequations}
\end{theorem}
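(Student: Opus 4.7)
The plan is to condition on the joint law of $(\upk, \ups)$ and exploit the parity split already isolated in Theorem \ref{Thm Balanced capacities}. I would write
\begin{equation*}
\Exp(\cleft) = \sum_{s \, = \, 2}^{\mu} \sum_{k \, = \, 0}^{\delta - s + 1} \Exp\big(\cleft \big\vert \upk = k, \ups = s\big)\, \prob\big(\upk = k, \ups = s\big),
\end{equation*}
and separate the contribution $\tfrac{\delta - k}{2} + \lceil k/2 \rceil$, common to all $s$, from the extra term $\tfrac{1}{4}\tfrac{\delta - k}{s - 1}$ that appears only when $s$ is even. Linearity combined with Lemma \ref{Thm Expected Ceil and Floor} reduces the common part to $\tfrac{\delta - \Exp(\upk)}{2} + \Exp\lceil \upk / 2\rceil = \tfrac{\delta}{2} + \tfrac{1}{2 \delta^{2}} \sum_{k \text{ even}} k \,(1 + 1/\delta)^{k - 1}$, reproducing the first two summands of \eqref{Eq Capacity Left Expectation}.

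The work then concentrates on the even-$s$ correction $S \defining \tfrac{1}{4}\sum_{s \text{ even}} \sum_{k} \tfrac{\delta - k}{s - 1}\,\prob(\upk = k, \ups = s)$. I would substitute the cornerstone identity \eqref{Eq Knapsack Slack Joint Conditioning}, apply the shift $(\delta - k)\binom{\delta - k - 1}{s - 2} = (s - 1)\binom{\delta - k}{s - 1}$ from \eqref{Eq Binomial Coefficients Indexes Shift} to clear the factor $1/(s - 1)$, swap the order of summation, and reindex $m \defining \delta - k$. Grouping the resulting inner sum over even $s$ and invoking the generating-function identity $\sum_{r \text{ odd}} \binom{m}{r} x^{r} = \tfrac{1}{2}\bigl[(1 + x)^{m} - (1 - x)^{m}\bigr]$ at $x = 1/\delta$ collapses that sum into a combination of $(1 \pm 1/\delta)^{m}$. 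A final application of the classical formula $\sum_{m \, = \, 0}^{\delta} m\, y^{m} = \tfrac{y - \mu y^{\mu} + \delta y^{\mu + 1}}{(1 - y)^{2}}$ evaluated at $y = 1 \pm 1/\delta$, where $(1 - y)^{2} = 1/\delta^{2}$ in both cases, yields the $(1 \pm 1/\delta)^{\mu}$ and $(1 \pm 1/\delta)^{\mu + 1}$ terms with the coefficients appearing in \eqref{Eq Capacity Left Expectation}.

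The formula \eqref{Eq Capacity Right Expectation} then follows at once from the deterministic identity $\cleft + \cright = \sum_{i \, = \, 1}^{\ups - 1} \upw(i) + \upk = \delta$, which forces $\Exp(\cright) = \delta - \Exp(\cleft)$, so that subtracting \eqref{Eq Capacity Left Expectation} from $\delta$ directly gives the stated expression. The main obstacle will be the bookkeeping in the double-sum manipulation of $S$: choosing the right parity split and reindexing so that the inner sum over $s$ factors as an odd-power binomial generating function, and then carefully separating the contributions at the conjugate pair $y = 1 \pm 1/\delta$ so that the $+$ and $-$ combinations of exponentials in the target statement emerge with the correct coefficients $\mu/8$ and $\delta/8$.
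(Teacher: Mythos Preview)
Your plan mirrors the paper's proof: the same head/tail split based on Theorem~\ref{Thm Balanced capacities}, the same reduction of the head via Lemma~\ref{Thm Expected Ceil and Floor}, and the same use of \eqref{Eq Knapsack Slack Joint Conditioning} together with the shift \eqref{Eq Binomial Coefficients Indexes Shift} on the tail. The only methodological difference is the order in which you collapse the double sum for $S$: the paper sums over $k$ first (obtaining a closed form in $s$) and then over even $s$ using $F(x)=\sum_{\ell\text{ odd}}\binom{n}{\ell}x^{\ell}$ and its derivative, whereas you swap and sum over the parity variable first. Your shortcut $\cleft+\cright=\delta$ for \eqref{Eq Capacity Right Expectation} is also cleaner than the paper's ``repeat with the tail subtracted.''

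There is, however, a genuine obstacle at your final step. Carrying your computation through gives
\begin{equation*}
S=\frac{1}{8\delta}\sum_{m=1}^{\delta}m\Bigl[\bigl(1+\tfrac{1}{\delta}\bigr)^{m}-\bigl(1-\tfrac{1}{\delta}\bigr)^{m}\Bigr]
=\frac{1}{4}-\frac{\mu\delta}{8}\Bigl[\bigl(1+\tfrac{1}{\delta}\bigr)^{\mu}-\bigl(1-\tfrac{1}{\delta}\bigr)^{\mu}\Bigr]+\frac{\delta^{2}}{8}\Bigl[\bigl(1+\tfrac{1}{\delta}\bigr)^{\mu+1}-\bigl(1-\tfrac{1}{\delta}\bigr)^{\mu+1}\Bigr],
\end{equation*}
which differs from the second line of \eqref{Eq Capacity Left Expectation} by exactly $\tfrac{1}{4}$ (e.g.\ at $\delta=3$ your $S=43/108$, the stated tail equals $16/108$). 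The discrepancy originates in the paper's own derivation: after reindexing $\ell=s+1$ with $s$ even and $s\geq 2$, the odd index $\ell$ should start at $3$, but the paper's sums $\sum_{\ell\text{ odd}}^{\mu+1}$ silently include $\ell=1$, and that spurious term contributes precisely $-\tfrac{1}{4}$. So your route is sound and will return the correct value of $\Exp(\cleft)$ given Theorem~\ref{Thm Balanced capacities} as input, but it will \emph{not} reproduce the coefficients in the stated formula; expect a constant offset and trace it to the paper's reindexing rather than to your own algebra.
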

\begin{proof}
	We focus on the calculation of $ \Exp( \cleft ) $ using the definition, i.e., 
	\begin{align*}
	\Exp(\cleft) = & \sum\limits_{ s }\sum\limits_{ k } \Exp\big(\cleft \big\vert \upk = k, \ups = s\big)\prob\big( \upk = k, \ups = s\big) .
	\end{align*}
	According to the expressions \eqref{Eq Left Capacity s odd} and \eqref{Eq Left Capacity s even}, there are two paramount parts: the ``head" $ \frac{ \delta - k }{ 2 } + \big\lceil \frac{ k } { 2 } \big\rceil $, present in both cases and the ``tail" $ \frac{ 1 }{ 4 } \frac{ \delta - k }{ s - 1 } $, present only in the case where $ s $ is even. We compute these separately, for the ``head" we recall the cornerstone identity \eqref{Eq Knapsack Slack Joint Conditioning}  and get 
	\begin{equation}\label{Eq Head Left Capacity Expectation}
	\begin{split}
	\sum\limits_{k \, = \, 0}^{ \delta } \sum\limits_{s \, = \, 2}^{ \delta - k + 1 }  
	\Big\{\frac{ \delta - k }{ 2 } + \big\lceil \frac{ k } { 2 } \big\rceil \Big\}
	\frac{ \delta - k }{ \delta^{s} }  { \delta - k - 1 \choose s - 2} 
	= &
	\sum\limits_{k \, = \, 0}^{ \delta } 
	\Big\{\frac{ \delta - k }{ 2 } + \big\lceil \frac{ k } { 2 } \big\rceil \Big\}
	\sum\limits_{s \, = \, 2}^{ \delta - k + 1 }  
	\frac{ \delta - k }{ \delta^{s} } { \delta - k - 1 \choose s - 2}
	\\
	= & \sum\limits_{k \, = \, 0}^{ \delta } 
	\Big\{\frac{ \delta - k }{ 2 } + \big\lceil \frac{ k } { 2 } \big\rceil \Big\} 
	\prob(\upk = k ) 
	\\
	= &
	\frac{ \delta - \Exp(\upk) }{ 2 } + 
	\Exp\Big( 
	\Big\lceil \frac{ \upk } { 2 } \Big\rceil 
	\Big) 
	\\
	= &
	\frac{ \delta }{ 2 } 
	+ \frac{ 1 }{ 2\delta^{ 2 } }
	\sum\limits_{k \, \even }^{ \delta }
	k
	\big( 1 + \frac{ 1 }{ \delta }\big)^{ k - 1}.
	\end{split}
	\end{equation}
	Here, the first equality is direct, the second holds by definition of $ \prob (\upk = k ) $ (see the proof of \eqref{Eq Knapsack Slack Probability Distribution} in \textsc{Lemma} \ref{Thm Knapsack Slack Expectation Properties} ), the third equality holds by definition of expectation and the fourth equality is obtained combining the latter with \eqref{Eq Expected Ceil}. Next we compute the ``tail", recalling the identities \eqref{Eq Knapsack Slack Joint Conditioning} and \eqref{Eq Binomial Coefficients Indexes Shift}, we have
	\begin{equation*}
	\begin{split}
	\sum\limits_{s \, \even }  \sum\limits_{k \, = \, 0}^{ \delta - s + 1}
	\frac{ 1 }{ 4 } \frac{ \delta - k }{ s - 1 } \frac{ \delta - k }{ \delta^{s} } 
	{\delta - k - 1 \choose s - 2}
	= &
	\sum\limits_{s \, \even } 
	\frac{ 1 }{ 4 \delta^{ s } }
	\sum\limits_{k \, = \, 0}^{ \delta - s + 1}
	( \delta - k )
	{\delta - k  \choose s - 1}
	.
	\end{split}
	\end{equation*}
	We focus on the internal sum
	\begin{equation*}
	\sum\limits_{k \, = \, 0}^{ \delta - s + 1}
	( \delta - k )
	{\delta - k  \choose s - 1}
	= 
	s \sum\limits_{k \, = \, 0}^{ \delta - s + 1}
	{\delta - k + 1 \choose s }
	-
	\sum\limits_{k \, = \, 0}^{ \delta - s + 1}
	{\delta - k  \choose s - 1} 
	= 
	s { \delta + 2 \choose s + 1 } - { \delta + 1 \choose s }
	= 
	\frac{ s \mu - 1}{ s + 1 } {\mu \choose s }
	.
	\end{equation*}
	Then, back to the ``tail" term we have
	\begin{equation*}
	\begin{split}
	\sum\limits_{s \, \even }^{ \mu }
	\frac{ 1 }{ 4 \delta^{ s } } \frac{ s \mu - 1}{ s + 1 } {\mu \choose s }
	= & 
	\frac{ 1 }{ 4 ( \mu + 1) }
	\sum\limits_{s \, \even }^{ \mu } 
	\frac{ s \mu - 1}{ \delta^{ s } } {\mu + 1 \choose s + 1 }
	\\
	= & 
	\frac{ 1 }{ 4 ( \mu + 1) }
	\sum\limits_{s \, \even }^{ \mu + 1} 
	\frac{ s \mu - 1}{ \delta^{ s } } {\mu + 1 \choose s + 1 }
	\\
	= &
	\frac{ \mu }{ 4 ( \mu + 1)  }
	\sum\limits_{s \, \even }^{ \mu + 1 } 
	\frac{ s + 1 }{ \delta^{ s } } {\mu + 1 \choose s + 1 }
	-
	\frac{ \delta }{ 4 } 
	\sum\limits_{s \, \even }^{ \mu + 1 }  
	\frac{ 1 }{ \delta^{ s + 1 } } {\mu + 1 \choose s + 1 } .
	\\
	= &
	\frac{ \mu }{ 4 ( \mu + 1)  }
	\sum\limits_{\ell \, \odd }^{ \mu + 1 } 
	\frac{ \ell }{ \delta^{ \ell - 1 } } {\mu + 1 \choose \ell }
	-
	\frac{ \delta }{ 4 } 
	\sum\limits_{\ell \, \odd }^{ \mu + 1 }  
	\frac{ 1 }{ \delta^{ \ell } } {\mu + 1 \choose \ell } .
	\end{split}
	\end{equation*}
	In the expression above, the first equality is the adjustment of the binomial coefficient using the identity \eqref{Eq Binomial Coefficients Indexes Shift}. The second equality extends the upper limit sum from $ \mu $ to $ \mu + 1 $, which can be done without picking up new summands, because we have assumed that $ \mu $ is even and we are adding over $ s $ even. The third equality is a convenient association of terms. Next, recall that 
	\begin{equation*}
	\begin{split}
	F( x ) \defining \sum\limits_{\ell \text{ odd } }^{ n } { n \choose \ell } x^{ \ell } = &
	\frac{(1 + x)^{ n } - (1 - x )^{n}}{2} .
	\end{split}
	\end{equation*}
	Hence, using the function $ F( \cdot ) $ and its first derivative, the tail term gives
	\begin{equation}\label{Eq Tail Left Capacity Expectation}
		\begin{split}
	\sum\limits_{s \, \even } 
	\frac{ 1 }{ 4 \delta^{ s } } \frac{ s \mu - 1}{ s + 1 } {\mu \choose s }
	= &
	\frac{ \mu }{ 8 ( \mu + 1 ) }
	\Big\{
	(\mu + 1) \big( 1 + \frac{ 1 }{ \delta } \big)^{ \mu }
	+
	(\mu + 1) \big( 1 - \frac{ 1 }{ \delta } \big)^{ \mu }
	\Big\} 
	\\
		&
	-
	\frac{ \delta }{ 8 }
	\Big\{
	\big( 1 + \frac{ 1 }{ \delta } \big)^{ \mu + 1 }
	-
	\big( 1 - \frac{ 1 }{ \delta } \big)^{ \mu + 1}
	\Big\}
	\\
	= &
	\frac{ \mu }{ 8 }
	\Big\{
	\big( 1 + \frac{ 1 }{ \delta } \big)^{ \mu }
	+
	\big( 1 - \frac{ 1 }{ \delta } \big)^{ \mu }
	\Big\} 
	-
	\frac{ \delta }{ 8 }
	\Big\{
	\big( 1 + \frac{ 1 }{ \delta } \big)^{ \mu + 1 }
	-
	\big( 1 - \frac{ 1 }{ \delta } \big)^{ \mu + 1}
	\Big\}
	.
		\end{split}
	\end{equation}
	Putting together the ``head" of the sum \eqref{Eq Head Left Capacity Expectation} and the ``tail" \eqref{Eq Tail Left Capacity Expectation}, the identity 
	\eqref{Eq Capacity Left Expectation} follows. 
	
	We compute the expectation of $ \cright $ given by the expression \eqref{Eq Capacity Right Expectation} using the previous procedure but, keeping in mind that the ``tail" \eqref{Eq Tail Left Capacity Expectation}, has to be subtracted rather than added.
\end{proof}
In oder to ease future calculations, we will use the following estimates
\begin{lemma}\label{Thm Distribution Split Item Left Right}
	Let $ \sleft, \sright $ be the splitting item random variable defined above for the problems $ \Pi_{\lt}, \Pi_{\rt} $ then, their expectations satisfy
	\begin{subequations}
		\begin{align}
		\Exp\big( \sright \big\vert \cright = c \big) & =
		\Exp\big( \sleft \big\vert \cleft = c \big)
		= 
		\big( 1 + \frac{ 1 }{ c } \big)^{ c } ,
		\label{Eq Split Expectation Left Right}
		\\
		\Exp\big( \sleft  \big)
		& \sim 
		\big( 1 + \frac{ 1 }{ \expcl } \big)^{ \expcl } ,
		\label{Est Approximation Split Expectation Left}
		\\
		\Exp\big( \sright \big)
		& \sim 
		\big( 1 + \frac{ 1 }{ \expcr } \big)^{ \expcr } .
		\label{Est Approximation Split Expectation Right}
		\end{align}
	\end{subequations}
\end{lemma}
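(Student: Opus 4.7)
The plan is to reduce all three identities to the already-established split-item expectation formula \eqref{Eq Split Expectation} from \textsc{Theorem} \ref{Thm Distribution Split Item}, plus a standard first-order/Jensen-type approximation for Parts 2 and 3.

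For Part 1, I would argue that, conditional on $\cleft = c$, the subproblem $\Pi_{\lt}$ is distributionally an instance of the random 0-1KP of \textsc{Hypothesis} \ref{Hyp Random Problems} with capacity $c$. Indeed, the items in $V_{\lt}$ form a sorted list of size $\lambda$, which under the scaling $\mu = 2\lambda = \delta + 1$ is at least $c + 1$; their weights are i.i.d. integer-valued; and their profits are built from the same efficiency/increment mechanism \eqref{Eq Random Efficiencies}. Parallelling the proof of the Cornerstone \textsc{Lemma} \ref{Thm Cornerstone Lemma} and of \eqref{Eq Split Item Expectation Properties} but with $\delta$ replaced by $c$, one obtains the distribution of $\sleft$ under this conditioning, and summing $s \cdot \prob(\sleft = s \mid \cleft = c)$ in exactly the same way as in the proof of \eqref{Eq Split Expectation} yields
\begin{equation*}
\Exp(\sleft \mid \cleft = c) = \big(1 + \tfrac{1}{c}\big)^c.
\end{equation*}
The identical argument applied to $\Pi_{\rt}$ gives the same formula for $\sright$.

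For Parts 2 and 3, I would use the tower property $\Exp(\sleft) = \Exp[\Exp(\sleft \mid \cleft)] = \Exp[f(\cleft)]$, with $f(c) \defining (1+1/c)^c$. Since $f$ is smooth, bounded, and tends to $e$ with $f'(c) = O(1/c)$, a first-order Taylor expansion around $\expcl$ yields $\Exp[f(\cleft)] = f(\expcl) + O\!\big(f''(\expcl)\,\Var(\cleft)\big)$. Provided that $\Var(\cleft)$ grows strictly slower than the rate at which $f''$ decays at $\expcl$ (which will hold for large $\delta$ in light of the concentration of $\cleft$ around its mean), the error term is $o(1)$ and one concludes $\Exp(\sleft) \sim (1+1/\expcl)^{\expcl}$, and analogously $\Exp(\sright) \sim (1+1/\expcr)^{\expcr}$.

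The main obstacle is the rigorous justification in Part 1. Strictly, the weights of items in $V_{\lt}$ are uniform on $[\delta]$ rather than on $[\cleft]$, and $\cleft$ is correlated with the weights of items in $V_{\lt}$ with index at most $\ups - 1$ (which enter its definition). To close this gap I would either (i) truncate each weight to $[1,c]$ via the observation that any weight exceeding the residual capacity cannot contribute to the split-item determination once the cumulative weight reaches $c$, or (ii) treat the statement as the definitional identification of the induced subproblem as a fresh random 0-1KP instance, which is the implicit modelling assumption of the Divide-and-Conquer analysis. This subtlety, rather than the algebra, is the delicate step; the rest is a routine re-run of the proofs of \textsc{Theorem} \ref{Thm Distribution Split Item}.
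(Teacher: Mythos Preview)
Your proposal is correct and follows essentially the same approach as the paper. The paper's own proof is only a sketch: for \eqref{Eq Split Expectation Left Right} it says the argument is ``analogous to the proof of \textsc{Theorem} \ref{Thm Distribution Split Item}, because once $\cleft$ is known/fixed, the conditional expectations depend strictly on the capacity \ldots\ as well as the weight random variables \ldots\ whose distribution is uniform and independent''; for the two estimates it simply invokes the plug-in reasoning of \textsc{Corollary} \ref{Thm Approximation Expectation Full-First, Eligible First}. Your identification of the subtlety in Part~1 --- that the weights are uniform on $[\delta]$ rather than $[c]$ and that $\cleft$ is correlated with some of them --- is more careful than the paper itself, which silently adopts your option~(ii): treating the induced subproblem as a fresh random 0-1KP instance with capacity $c$ as a modelling convention for the Divide-and-Conquer analysis.
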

\begin{proof}[Sketch of the proof.]
	The proof of equation \eqref{Eq Split Expectation Left Right} is analogous to the proof of \textsc{Lemma} \ref{Thm Distribution Split Item}, because once $ \cleft $ is known/fixed, the conditional expectations depend strictly on the capacity of the particular 0-1KP, as well as the weight random variables $ \big( \upw(2i -1) \big)_{i = 1}^{\lambda} $, $ \big( \upw(2i) \big)_{i = 1}^{\lambda} $ for $ \Pi_{\lt}$ and $ \Pi_{\rt} $ respectively, whose distribution is uniform and independent from each other.
	
	The estimates \eqref{Est Approximation Split Expectation Left} and \eqref{Est Approximation Split Expectation Right} follow directly using the same reasoning of \textsc{Corollary} \ref{Thm Approximation Expectation Full-First, Eligible First}.
\end{proof}
\begin{lemma}\label{Thm Knapsack Slack Expectation Left Right}
	The slack random variables $ \kleft, \kright $, introduced in \textsc{Definition} \ref{Def Split Item, Slack and Packed Items Random Variables} (ii), satisfy
	\begin{subequations}\label{Eq Left Right Knapsack Slack Expectation Properties}
	\begin{align}
	\begin{split}
	\Exp\big( \kright \big\vert \cright = c\big)  = &
	\Exp\big( \kleft \big\vert \cleft = c\big) 
	\\
	= &  
	- \frac{ ( c + 1 ) }{ c } 
	\Big\{ 
	\big( 1 + \frac{ 1 }{ c } \big)^{ c } - 1
	\Big\} 
	+ ( c + 3 ) 
	\Big\{ 
	\big( 1 + \frac{ 1 }{ c } \big)^{ c + 1 } 
	- \frac{ 2 c + 1 }{ c } 
	\Big\} \\
	& - 2 c 
	\Big\{
	\big( 1 + \frac{ 1 }{ c } \big)^{ c + 2 }
	- \frac{ 5 c^{ 2 } + 7 c + 2 }{ 2 c^{2} }
	\Big\}
	.
	\label{Eq Knapsack Slack Expectation Left Right}
	\end{split}
	\\
	\begin{split}
	\Exp\big( \kleft \big) 
	\sim &  
	- \frac{ ( \expcl + 1 ) }{ \expcl } 
	\Big\{ 
	\big( 1 + \frac{ 1 }{ \expcl } \big)^{ \expcl } - 1
	\Big\} 
	\\
	& 
	+ ( \expcl + 3 ) 
	\Big\{ 
	\big( 1 + \frac{ 1 }{ \expcl } \big)^{ \expcl + 1 } 
	- \frac{ 2 \expcl + 1 }{ \expcl } 
	\Big\} \\
	& - 2 \expcl
	\Big\{
	\big( 1 + \frac{ 1 }{ \expcl } \big)^{ \expcl + 2 }
	- \frac{ 5 \expcl^{ 2 } + 7 \expcl + 2 }{ 2 \expcl^{2} }
	\Big\}
	.
	\label{Est Approximation Knapsack Slack Expectation Left}
	\end{split}
	\\
	\begin{split}
	\Exp\big( \kright \big) 
	\sim &  
	- \frac{ ( \expcr + 1 ) }{ \expcr } 
	\Big\{ 
	\big( 1 + \frac{ 1 }{ \expcr } \big)^{ \expcr } - 1
	\Big\} 
	\\
	& 
	+ ( \expcr + 3 ) 
	\Big\{ 
	\big( 1 + \frac{ 1 }{ \expcr } \big)^{ \expcr + 1 } 
	- \frac{ 2 \expcr + 1 }{ \expcr } 
	\Big\} \\
	& - 2 \expcr 
	\Big\{
	\big( 1 + \frac{ 1 }{ \expcr } \big)^{ \expcr + 2 }
	- \frac{ 5 \expcr^{ 2 } + 7 \expcr + 2 }{ 2 \expcr^{2} }
	\Big\}
	.
	\label{Est Approximation Knapsack Slack Expectation Right}
	\end{split}
	\end{align}
\end{subequations}
\end{lemma}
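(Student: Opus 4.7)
The plan is to mirror the structure of Theorem \ref{Thm Knapsack Slack Expectation Properties} (which delivered $\Exp(\upk)$ for the original 0-1RKP) and to follow the same two-stage strategy already used in Lemma \ref{Thm Distribution Split Item Left Right}: first establish an exact conditional formula given a fixed value of the induced subproblem capacity, and then produce the unconditional expectation by the plug-in heuristic of substituting $\expcl$ or $\expcr$ for the random capacity.

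For the conditional identity \eqref{Eq Knapsack Slack Expectation Left Right}, I would argue that once $\cleft = c$ is fixed, the left subproblem $\Pi_{\lt}$ has capacity $c$ together with the independent, uniformly distributed weight variables $(\upw(2i-1))_{i=1}^{\lambda}$ inherited from the base model; the conditioning on eligibility renders this configuration, in the same sense as in Lemma \ref{Thm Distribution Split Item Left Right}, a replica of the hypothesis of Theorem \ref{Thm Knapsack Slack Expectation Properties} with $\delta$ replaced by $c$. Consequently, the entire computation of Theorem \ref{Thm Knapsack Slack Expectation Properties} can be rerun verbatim starting from the Cornerstone Lemma \ref{Thm Cornerstone Lemma} applied with capacity $c$: one obtains the analogue of the joint distribution $\prob(\kleft = k, \sleft = s \mid \cleft = c)$, sums over $s$ to get the marginal $\prob(\kleft = k \mid \cleft = c)$ with $\delta \leftarrow c$, and then reproduces the closed-form algebra that led to \eqref{Eq Knapsack Slack Expectation}. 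Exactly the same argument applied to the even-indexed weights and capacity $\cright$ delivers the equality between $\Exp(\kleft \mid \cleft = c)$ and $\Exp(\kright \mid \cright = c)$.

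For the unconditional approximations \eqref{Est Approximation Knapsack Slack Expectation Left} and \eqref{Est Approximation Knapsack Slack Expectation Right}, I would invoke the same plug-in heuristic already deployed in Corollary \ref{Thm Approximation Expectation Full-First, Eligible First} and Lemma \ref{Thm Distribution Split Item Left Right}. Writing
\[
\Exp(\kleft) \;=\; \sum_{c} \Exp\bigl(\kleft \mid \cleft = c\bigr)\, \prob(\cleft = c),
\]
and observing that the conditional expectation $c \mapsto \Exp(\kleft \mid \cleft = c)$ is smooth and slowly varying near $c_{0} \defining \lceil \expcl \rceil$, one approximates the sum by $\Exp(\kleft \mid \cleft = c_{0})$ and then substitutes $c_{0} \sim \expcl$ into the closed form produced in the previous step; this yields \eqref{Est Approximation Knapsack Slack Expectation Left}. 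The analogous substitution $c_{0} \sim \expcr$ in $\Exp(\kright \mid \cright = c)$ gives \eqref{Est Approximation Knapsack Slack Expectation Right}.

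The main subtlety I expect is the same one that underlies Lemma \ref{Thm Distribution Split Item Left Right}: justifying rigorously that, after conditioning on $\cleft = c$, the left subproblem is distributionally equivalent to a fresh 0-1RKP of capacity $c$, even though the underlying weights $\upw(2i-1)$ are originally uniform on $[1,\delta]$ rather than on $[1,c]$. Once the equivalence used to derive \eqref{Eq Split Expectation Left Right} is granted here in the same form, all remaining work is the purely algebraic rerun of the Cornerstone Lemma computation plus the expectation-substitution heuristic, both of which have already been carried out in the earlier sections.
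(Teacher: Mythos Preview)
Your proposal is correct and follows essentially the same route as the paper: establish \eqref{Eq Knapsack Slack Expectation Left Right} by rerunning the earlier slack computation with $\delta$ replaced by the conditioned capacity $c$ (invoking the distributional-equivalence argument from \textsc{Lemma} \ref{Thm Distribution Split Item Left Right}), and then obtain \eqref{Est Approximation Knapsack Slack Expectation Left}--\eqref{Est Approximation Knapsack Slack Expectation Right} via the plug-in heuristic of \textsc{Corollary} \ref{Thm Approximation Expectation Full-First, Eligible First}. Your pointer to \textsc{Theorem} \ref{Thm Knapsack Slack Expectation Properties} as the result being mirrored is in fact more precise than the paper's own cross-reference (which cites \textsc{Lemma} \ref{Thm Weight Conditional Expectation}), and your explicit flagging of the $[1,\delta]$ versus $[1,c]$ weight-support issue is exactly the caveat the paper leaves implicit.
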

\begin{proof}[Sketch of the proof.]
	The proof of the equation \eqref{Eq Knapsack Slack Expectation Left Right} is analogous to that of \textsc{Lemma} \ref{Thm Weight Conditional Expectation}, adjusting the arguments presented in the proof of \textsc{Lemma} \ref{Thm Distribution Split Item Left Right}. 
	
	The estimates \eqref{Est Approximation Knapsack Slack Expectation Left} and \eqref{Est Approximation Knapsack Slack Expectation Right} follow directly applying the same reasoning of \textsc{Corollary} \ref{Thm Approximation Expectation Full-First, Eligible First}.
\end{proof}
%
%
%
%
\subsection{Expectations of eligible-first algorithm for the $ \Pi_{\lt} $ and $ \Pi_{\rt} $ subproblems}\label{Sec Left and Right Subproblems Eligible-First}
%
%
%
%
In the present section we compute the conditional expectation of $ \Z^{\gr}_{\lt}, \Y^{\ef}_{\lt} $ and $ \Z^{\gr}_{\rt}, \Y^{\ef}_{\rt} $ with respect to $ \cleft $ and $ \cright $ respectively.
\begin{theorem}\label{Thm Greedy Left Right Expected Profit}
	Let  $ \Pi_{\lt} $, $ \Pi_{\rt} $ be the left and right subproblems introduced in \textsc{Definition} \ref{Def left-right capacities} and let $ \Z^{\gr}_{\lt} $, $ \Z^{\gr}_{\rt} $ be their corresponding solutions furnished by the greedy algorithm. Then, 

\begin{subequations}
	\begin{align}
	\Exp\big( \Z^{\gr}_{\lt}  \big\vert \cleft = c , \sleft = s \big) 
	= &  
	\frac{  \mu - s + 2 }{ 2 }
	\frac{ s c  + s - 1 }{ s + 1 } , 
	\text{ for all } s = 2, \ldots, \mu 
	,
	\label{Eq Conditional Greedy Expected Profit Left} \\
	%
	\begin{split}
	\Exp\big(\Z^{\gr}_{\lt} \big\vert \cleft = c \big) 
	= &  
	- \frac { c }{ 2 } \big( 1 + \frac{ 1 }{ c }\big)^{ c + 1 } 
	-
	\frac{ ( \mu + 3 ) ( c + 2 ) }{ 2 }   
	\Big\{ 
	\big( 1 + \frac{ 1 }{ c }\big)^{ c + 1 }
	- \frac{ c + 1 }{ c }
	\Big\}
	\\
	& 
	+ \frac{ 2\mu c + 6 c + 3\mu + 10 }{ 2 } 
	+
	c ( \mu + 3 )
	\Big\{
	\big( 1 + \frac{ 1 }{ c }\big)^{ c + 2 }
	- 1 - \frac{ c + 2 }{ c } - \frac{ ( c + 1 ) ( c + 2 )}{ 2 c^{ 2 } }
	\Big\}
	,
	\end{split}
	\label{Eq Greedy Expected Profit Left}
	\\
	\Exp\big( \Z^{\gr}_{\rt}  \big\vert \cleft = c , \sleft = s \big) 
	= &  
	\frac{  \mu - s + 1 }{ 2 }
	\frac{ s c  + s - 1 }{ s + 1 } , 
	\text{ for all } s = 2, \ldots, \mu 
	,
	\label{Eq Conditional Greedy Expected Profit Right} \\
	%
	\begin{split}
	\Exp\big(\Z^{\gr}_{\rt} \big\vert \cleft = c \big) 
	= &  
	- \frac { c }{ 2 } \big( 1 + \frac{ 1 }{ c }\big)^{ c + 1 } 
	-
	\frac{ ( \mu + 2 ) ( c + 2 ) }{ 2 }   
	\Big\{ 
	\big( 1 + \frac{ 1 }{ c }\big)^{ c + 1 }
	- \frac{ c + 1 }{ c }
	\Big\}
	\\
	& 
	+ \frac{ 2\mu c + 4 c + 3\mu + 7 }{ 2 } 
	+
	c ( \mu + 2 )
	\Big\{
	\big( 1 + \frac{ 1 }{ c }\big)^{ c + 2 }
	- 1 - \frac{ c + 2 }{ c } - \frac{ ( c + 1 ) ( c + 2 )}{ 2 c^{ 2 } }
	\Big\}
	,
	\end{split}
	\label{Eq Greedy Expected Profit Right}
	\end{align}
\end{subequations}
	with $ \mu = \delta + 1 $.
\end{theorem}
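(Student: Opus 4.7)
The four identities are the left/right analogues of Theorem \ref{Thm Greedy Expected Profit}, and the plan is to mimic its proof almost verbatim, tracking only where the parity of the original indices modifies the constants. The key observation that makes everything go through is that the efficiency $\upg(i)=\sum_{t=i}^{\mu}\upt(t)$ still ranges up to $\mu$ even when we restrict to the odd- or even-indexed items; so although the subproblem's effective capacity is $c$, the number $\mu$ keeps appearing alongside it.

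\textbf{Step 1: conditional expectation given (capacity, split item).} Write
\[
\Z^{\gr}_{\lt}=\sum_{j=1}^{\sleft-1}\upp(2j-1)=\sum_{j=1}^{\sleft-1}\upw(2j-1)\sum_{t=2j-1}^{\mu}\upt(t).
\]
The weights $\{\upw(i)\}$ and the increments $\{\upt(t)\}$ are mutually independent, and $(\cleft,\sleft)$ is measurable with respect to the odd-indexed weights only; hence each conditional summand factors as a weight-term times an increment-term. The weight-term is
\[
\Exp\bigl(\upw(2j-1)\bigm|\cleft=c,\sleft=s\bigr)=\frac{cs+s-1}{s^{2}-1},
\]
which is the natural left-analog of Lemma \ref{Thm Weight Conditional Expectation}, obtained by rerunning its proof with $(\delta,\ups)$ replaced by $(c,\sleft)$ via the restricted Cornerstone Lemma that underlies Lemma \ref{Thm Distribution Split Item Left Right}. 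The increment-term is unconditionally $(\mu-2j+2)/2$. Using the arithmetic sum $\sum_{j=1}^{s-1}(\mu-2j+2)=(s-1)(\mu-s+2)$ yields \eqref{Eq Conditional Greedy Expected Profit Left}. For $\Pi_{\rt}$ the $j$-th item sits at position $2j$, so the increment factor becomes $(\mu-2j+1)/2$ and the arithmetic sum becomes $(s-1)(\mu-s+1)$, producing \eqref{Eq Conditional Greedy Expected Profit Right}; here the conditioning event printed in the statement should naturally be read as $\{\cright=c,\sright=s\}$, since it is the right-side split item that controls $\Z^{\gr}_{\rt}$.

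\textbf{Step 2: integrating out the split item.} Using the conditional distribution of $\sleft$ given $\cleft$ (the analog of \eqref{Eq Split Item Expectation Properties} with $\delta\to c$ supporting Lemma \ref{Thm Distribution Split Item Left Right}), compute
\[
\Exp\bigl(\Z^{\gr}_{\lt}\bigm|\cleft=c\bigr)=\sum_{s=2}^{c+1}\frac{(\mu-s+2)(cs+s-1)}{2(s+1)}\cdot\frac{s-1}{c^{s}}\binom{c+1}{s}.
\]
Absorb the $1/(s+1)$ factor by the identity $(c+2)\binom{c+1}{s}=(s+1)\binom{c+2}{s+1}$ from \eqref{Eq Binomial Coefficients Indexes Shift}, reindex $m=s+1$, and decompose the cubic polynomial $(\mu-m+3)\bigl((c+1)m-c-2\bigr)(m-2)$ in the basis $\{1,\ m,\ m(m-1),\ m(m-1)(m-2)\}$. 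Each of the four resulting sums is then closed by Newton's binomial expansion of $(1+1/c)^{c+2}$ together with its first, second and third derivatives. Simplifying reproduces \eqref{Eq Greedy Expected Profit Left}. The calculation for $\Pi_{\rt}$ is identical in shape—only the constants from the polynomial decomposition change, reflecting the replacement of $\mu-s+2$ by $\mu-s+1$—and delivers \eqref{Eq Greedy Expected Profit Right}.

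\textbf{Main obstacle.} The combinatorial-arithmetic bookkeeping in Step 2 is the genuine difficulty: because now two parameters $\mu$ and $c$ appear simultaneously (rather than the single parameter $\delta$ of Theorem \ref{Thm Greedy Expected Profit}), the polynomial must be rearranged carefully so that after telescoping only terms of the form $(1+1/c)^{c+k}$ with $k\in\{1,2\}$ and the explicit Newton-binomial remainders survive, with precisely the coefficients $(\mu+3)$ and $(\mu+2)$ showing up in the right places. No new probabilistic idea beyond what is already used in Section \ref{Sec Expectations Calculations} is required, but the algebraic computation has to be carried out twice and in two variables, which is where mistakes are most likely to enter.
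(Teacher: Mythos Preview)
Your proposal is correct and matches the paper's proof essentially line for line: the paper also writes $\Z^{\gr}_{\lt}=\sum_{j=1}^{s-1}\upp(2j-1)$, factors by independence into the weight term $\frac{cs+s-1}{s^{2}-1}$ and the increment term $\frac{\mu-2j+2}{2}$, sums to obtain \eqref{Eq Conditional Greedy Expected Profit Left}, and then integrates out $\sleft$ via the $c$-analogue of \eqref{Eq Split Item Expectation Properties}, explicitly writing the same sum you do and saying only ``repeating the same techniques presented in the proof of \textsc{Theorem} \ref{Thm Greedy Expected Profit}'' for the closure. Your remark that the right-side conditioning should read $\{\cright=c,\sright=s\}$ rather than $\{\cleft=c,\sleft=s\}$ is also on point; this is indeed a typo in the statement.
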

\begin{proof}
	Recall that $ V_{\lt} $ has only odd indexes, then
	\begin{equation*}
	\begin{split}
	\Exp\big( \Z^{\gr}_{\lt}  \big\vert \cleft = c,  \sleft = s \big) = & \sum\limits_{ j \, = \, 1 }^{ s - 1 } \Exp\big( \upp(2j - 1) \big\vert \cleft = c, \sleft = s \big) \\
	= &\sum\limits_{ j \, = \, 1 }^{ s - 1 } \Exp\big(\upw(2j - 1) \, \upg(2j - 1) \big\vert \cleft = c, \sleft = s \big) \\
	= & \sum\limits_{ j \, = \, 1 }^{ s - 1 } \Exp\Big(\upw(2j - 1)  \sum\limits_{t \, = \, 2j - 1}^{\mu}   \upt(t)  \big\vert \cleft = c, \sleft = s \Big) .
	\end{split}
	\end{equation*}
	Recalling that the variables $ \big( \upw(i) \big)_{i = 1}^{\mu} $ and $ \big( \upt(i) \big)_{i = 1}^{\mu} $ are independent, we have
	\begin{equation*}
	\begin{split}
	\Exp\big( \Z^{\gr}_{\lt}  \big\vert  \cleft = c,  \sleft = s \big) 
	& 
	= \sum\limits_{ j \, = \, 1 }^{ s - 1 } \Exp\big( \upw(2j - 1) \big\vert \cleft = c, \sleft = s \big) 
	\sum\limits_{t \, = \, 2j - 1}^{\mu}  \Exp\big( \upt(t) \big\vert \cleft = c, \sleft = s \big)
	\\
	& = \sum\limits_{ j \, = \, 1 }^{ s - 1 } 
	\frac{ c s + s - 1 }{ s^{ 2 } - 1 } 
	\frac{\mu -( 2 j - 1 ) + 1}{2} 
	\\
	& = 
	\frac{ (s - 1 ) ( \mu - s + 2 )}{ 2 }
	\frac{ c s + s - 1 }{ s^{ 2 } - 1 } .
	\end{split}
	\end{equation*}
	Here, the second equality holds due to the identity \eqref{Eq Weight Conditional Expectation}, while the third is its sum. Simplifying the expression above, the \textsc{Equation} \eqref{Eq Conditional Greedy Expected Profit Left} follows. 
	
	Next, in order to prove \eqref{Eq Greedy Expected Profit Left}, observe that due to the expression \eqref{Eq Expectation wrt Partition} we get
	\begin{equation*}
	\Exp\big(\Z^{\gr}_{\lt} \big\vert \cleft = c \big) =
	\sum\limits_{ s \, = \, 2 }^{ c + 1 }  
	\Exp\big( \Z^{\gr}_{\lt}  \big\vert \cleft = c, \sleft = s \big)  \prob\big( \sleft = s \big\vert \cleft = d \big) .
	\end{equation*}
	Combining the equation \eqref{Eq Conditional Greedy Expected Profit Left} with the cornerstone identity \eqref{Eq Knapsack Slack Joint Conditioning} in the expression above, gives
	%
	\begin{equation*}
	\begin{split}
	\Exp\big(\Z^{\gr}_{\lt} 
	\big\vert \cleft = c \big) 
	& = 
	\sum\limits_{ s \, = \, 2 }^{ c + 1 } 
	\frac{  \mu - s + 2 }{ 2 }
	\frac{ c s + s - 1 }{ s + 1 } 
	\frac{  s - 1 }{ c^{ s } }
	{ c + 1 \choose s }
	.
	\end{split}
	\end{equation*}
	From here the closed formula \eqref{Eq Greedy Expected Profit Left} is derived using the same techniques presented in the proof of \textsc{Theorem} \ref{Thm Greedy Expected Profit}.

	Finally, repeating the procedure above, used for the analysis of $ \Z^{ \gr }_{ \lt } $, the equations\eqref{Eq Conditional Greedy Expected Profit Right} and \eqref{Eq Greedy Expected Profit Right}, involving $ \Z^{ \gr }_{ \rt } $ are attained and the result is complete.
\end{proof}
%
%
Observe that \textsc{Theorem} \ref{Thm Greedy Left Right Expected Profit} computes only the conditional expectations. 
In order to find the expectation we should compute,
\begin{subequations}
	\begin{align}\label{Eq Greedy Left Right Expectation}
	\Exp\big( \Z^{\gr}_{\lt} \big)  = & \sum\limits_{ c }\Exp\big( \Z^{\gr}_{\lt} \big\vert \cleft = c \big)  \prob( \cleft = c ) ,\\
	\Exp\big( \Z^{\gr}_{\rt} \big)  = & \sum\limits_{ c }\Exp\big( \Z^{\gr}_{\rt} \big\vert \cright = c \big)  \prob( \cright = c ) .
	\end{align}
\end{subequations}
However, as it has been shown above, that the random variables $ \cleft $ and $ \cright $ are really wild to be used in this calculation (see the proof of \textsc{Theorem} \ref{Thm Expected Right and Left Capacities}). Hence, we adopt the following estimate 
%
%
%
\begin{corollary}\label{Thm Approximation Greedy Left Right Expected Profit}
	Let  $ \Pi_{\lt} $, $ \Pi_{\rt} $ be the left and right subproblems introduced in \textsc{Definition} \ref{Def left-right capacities} and let $ \Z^{\gr}_{\lt} $, $ \Z^{\gr}_{\rt} $ be their corresponding solutions furnished by the greedy algorithm. Then, the following estimates hold
	\begin{subequations}
		\begin{align}
		\begin{split}
		\Exp\big(\Z^{\gr}_{\lt} \big) 
		& \sim   
		- \frac { \expcl }{ 2 } \Big( 1 + \frac{ 1 }{ \expcl }\Big)^{ \expcl + 1 } 
		\\
		& -
		\frac{ ( \mu + 3 ) ( \expcl + 2 ) }{ 2 }   
		\Big\{ 
		\Big( 1 + \frac{ 1 }{ \expcl }\Big)^{ \expcl + 1 }
		- \frac{ \expcl + 1 }{ \expcl }
		\Big\}
		\\
		& 
		+ \frac{ 2\mu \expcl + 6 \expcl + 3\mu + 10 }{ 2 } 
		\\
		& 
		+
		\expcl ( \mu + 3 )
		\Big\{
		\big( 1 + \frac{ 1 }{ \expcl }\big)^{ \expcl + 2 }
		- 1 - \frac{ \expcl + 2 }{ d } - \frac{ ( \expcl + 1 ) ( \expcl + 2 )}{ 2 \expcl^{ 2 } }
		\Big\}
		,
		\end{split}
		\label{Eq Approximation Greedy Expected Profit Left}
		\end{align}
		\begin{align}
		\begin{split}
		\Exp\big(\Z^{\gr}_{\rt} \big) 
		& \sim   
		- \frac { \expcr }{ 2 } \Big( 1 + \frac{ 1 }{ \expcr }\Big)^{ \expcr + 1 } 
		\\
		& 
		-
		\frac{ ( \mu + 2 ) ( \expcr + 2 ) }{ 2 }   
		\Big\{ 
		\Big( 1 + \frac{ 1 }{ \expcr }\Big)^{ \expcr + 1 }
		- \frac{ \expcr + 1 }{ \expcr }
		\Big\}
		\\
		& 
		+ \frac{ 2\mu \expcr + 4 \expcr + 3\mu + 7 }{ 2 } 
		\\
		+
		\expcr  ( \mu + 2 ) &
		\Big\{
		\Big( 1 + \frac{ 1 }{ \expcr }\Big)^{ \expcr + 2 }
		- 1 - \frac{ \expcr + 2 }{ \expcr } - \frac{ ( \expcr + 1 ) ( \expcr + 2 )}{ 2 \expcr^{ 2 } }
		\Big\}
		,
		\end{split}
		\label{Eq Approximation Greedy Expected Profit Right}
		\end{align}
	\end{subequations}
	with $ \mu = \delta + 1 $.
\end{corollary}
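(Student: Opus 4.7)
The plan is to follow the exact same template used in Corollary \ref{Thm Approximation Expectation Full-First, Eligible First}, the only difference being that the conditioning random variable is now $\cleft$ (resp.\ $\cright$) instead of $\upk$ and $\ups$. First I would write out the defining identity via the total expectation formula \eqref{Eq Expectation wrt Partition}:
\begin{equation*}
\Exp\big(\Z^{\gr}_{\lt}\big) = \sum_{c} \Exp\big(\Z^{\gr}_{\lt}\big\vert \cleft = c\big)\,\prob(\cleft = c),
\end{equation*}
and the analogous identity for $\Z^{\gr}_{\rt}$. This immediately reduces the task to controlling the summand, for which Theorem \ref{Thm Greedy Left Right Expected Profit}, equations \eqref{Eq Greedy Expected Profit Left} and \eqref{Eq Greedy Expected Profit Right}, already supply explicit closed forms in the variable $c$.

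Next I would perform the one-point approximation: set $c_{\lt} \defining \lceil \Exp(\cleft) \rceil$ and $c_{\rt} \defining \lceil \Exp(\cright) \rceil$, and replace the full conditional expectation by its value at the deterministic point $c_{\lt}$ (resp.\ $c_{\rt}$), i.e.,
\begin{equation*}
\Exp\big(\Z^{\gr}_{\lt}\big) \sim \Exp\big(\Z^{\gr}_{\lt}\big\vert \cleft = c_{\lt}\big), \qquad \Exp\big(\Z^{\gr}_{\rt}\big) \sim \Exp\big(\Z^{\gr}_{\rt}\big\vert \cright = c_{\rt}\big).
\end{equation*}
This is the exact heuristic used in the proof of Corollary \ref{Thm Approximation Expectation Full-First, Eligible First}, where $k_0 = \lfloor \Exp(\upk)\rfloor$ and $s_0 = \lceil \Exp(\ups)\rceil$ were plugged in to approximate $\Exp(\Y^{\ef})$. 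Finally, I would replace $c_{\lt}$ by $\expcl$ and $c_{\rt}$ by $\expcr$ on the right-hand side (a cosmetic step, discarding the ceiling, which is acceptable since we are producing an asymptotic estimate rather than an identity) and then substitute these values into the closed formulas \eqref{Eq Greedy Expected Profit Left}--\eqref{Eq Greedy Expected Profit Right}. The resulting expressions coincide term by term with \eqref{Eq Approximation Greedy Expected Profit Left}--\eqref{Eq Approximation Greedy Expected Profit Right}.

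The main obstacle is conceptual rather than computational: the random variables $\cleft$ and $\cright$ have the rather unwieldy distribution revealed in the proof of Theorem \ref{Thm Expected Right and Left Capacities}, so directly summing $\Exp(\Z^{\gr}_{\lt}\vert \cleft = c)\prob(\cleft = c)$ in closed form appears intractable. Replacing $\cleft$ by $\expcl$ sidesteps this difficulty, but at the cost of introducing an approximation whose error is not quantified here. I would therefore present the argument using the symbol $\sim$ (as opposed to $=$), exactly as in Corollary \ref{Thm Approximation Expectation Full-First, Eligible First}, and invoke Remark \ref{Rem Approximation Expectation Full-First, Eligible-First} (suitably adapted) to note that the resulting estimate depends only on $\delta$ through $\expcl$ and $\expcr$, which are themselves explicit functions of $\delta$ via \eqref{Eq Capacity Left Expectation} and \eqref{Eq Capacity Right Expectation}. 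No new identities are needed; every ingredient is already in place.
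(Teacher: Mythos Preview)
Your proposal is correct and matches the paper's approach exactly: the paper's proof reads in its entirety ``It follows directly from \textsc{Theorem} \ref{Thm Greedy Left Right Expected Profit},'' and your argument spells out precisely what that means, namely the one-point substitution $c \mapsto \expcl$ (resp.\ $\expcr$) into the closed formulas \eqref{Eq Greedy Expected Profit Left}--\eqref{Eq Greedy Expected Profit Right}, mirroring the heuristic of Corollary \ref{Thm Approximation Expectation Full-First, Eligible First}.
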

\begin{proof}
	The proof follows by approximating $ \cleft \sim \expcl $ and $ \cright \sim \expcr $ in \textsc{Theorem} \ref{Thm Greedy Left Right Expected Profit}.
\end{proof}
Next we compute some convenient conditional expectations of the post-greedy profit random variables $ \Y_{\lt} $ and $ \Y_{\rt} $. 
\begin{theorem}\label{Thm Full-First, Eligible First, Conditional Probabilities and Conditional Expectations Right Left}
	With the definitions above, we have
	%
	%
	\begin{subequations}
		\begin{equation}\label{Eq Probabilities Full-First i, k, s Left}
		\prob\big( 2 i - 1 \text{ is } \lt \ef , \kleft = k, \sleft = s \big\vert \cleft = c \big) 
		= \frac{ c - k }{ c^{ s + 1 }} k \big( 1 - \frac{ k }{  c } \big)^{ i - s - 1 } 
		{ c - k - 1\choose s - 2} , 
		\end{equation}
		\begin{equation}\label{Eq Probabilities Eligible-First i, k, s Right}
		\prob\big( 2i \text{ is } \rt \ef , \kright = k, \sright = s \big\vert \cright = c \big) 
		= \frac{ c - k }{ c^{ s + 1 }} k \big( 1 - \frac{ k }{ c } \big)^{ i - s - 1 }
		{ c - k - 1\choose s - 2} , 
		\end{equation}
		for $ i = s + 1 , \ldots, \lambda $, $ k = 0, \ldots, \delta - s - 1 $, $ s = 2, \ldots, \lambda $.
		\begin{align}
		\begin{split}
		\Exp \big( \Y^{\ef}_{\lt}  \big\vert  \kleft = k, \sleft = s, \cleft = c \big) 
		= &  
		\frac{ k }{ 4 }( \mu - 2s ) 
		\Big\{1 - \big(1 - \frac{ k }{ c } \big)^{ \lambda - s } \Big\}
		\\
		& 
		- \frac{ c }{ 2 k }
		\big( 1 - \frac{ k }{ c }\big)
		\Big\{ 
		1 - 
		\big(1 + \frac{ \lambda - s - 1 }{ c } k \big)
		\big( 1 - \frac{ k }{ c }\big)^{ \lambda - s - 1} 
		\Big\}
		\end{split}
		\label{Eq Conditional Expectation Eligible-First k, s Left}
		\\
		%
		\begin{split}
		\Exp \big( \Y^{\ef}_{\rt}  \big\vert  \kright = k, \sright = s, \cright = c \big) 
		= &  
		\frac{ k }{ 4 }( \mu - 2s - 1 ) 
		\Big\{1 - \big(1 - \frac{ k }{ c } \big)^{ \lambda - s } \Big\}
		\\
		& 
		- \frac{ c }{ 2 k }
		\big( 1 - \frac{ k }{ c }\big)
		\Big\{ 
		1 - 
		\big(1 + \frac{ \lambda - s - 1 }{ c } k \big)
		\big( 1 - \frac{ k }{ c }\big)^{ \lambda - s - 1} 
		\Big\}
		.
		\label{Eq Conditional Expectation Eligible-First k, s Right}
		\end{split}
		\end{align}
	\end{subequations}
	%
	%
	Here $ \Y_{\lt}^{\ef} $ and $ \Y_{\rt}^{\ef} $ are the post-greedy profit random variables introduced in \textsc{Definition} \ref{Def left-right capacities} (iv).
\end{theorem}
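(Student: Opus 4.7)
The proof plan follows the blueprint of \textsc{Lemma} \ref{Thm Full-First, Eligible First, Conditional Expectations}, with the original problem $\Pi$ replaced by the left (resp.\ right) subproblem $\Pi_{\lt}$ (resp.\ $\Pi_{\rt}$) conditioned on its random capacity $\cleft$ (resp.\ $\cright$). The key observation is that, once we condition on $\cleft = c$, the subproblem $\Pi_{\lt}$ is itself a 0-1KP instance of capacity $c$ whose eligible weights are $\big(\upw(2i-1)\big)_{i = 1}^{\lambda}$, i.i.d.\ uniform on $[1,\delta]$ and independent of the even-indexed family; consequently, the cornerstone identity \eqref{Eq Knapsack Slack Joint Conditioning} transfers verbatim and yields $\prob\big(\kleft = k, \sleft = s \big\vert \cleft = c\big) = \frac{c - k}{c^{s}}{c - k - 1 \choose s - 2}$. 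The mirror statement holds for $\Pi_{\rt}$ with the even-indexed weights in the role of the odd ones.

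To obtain the joint probabilities \eqref{Eq Probabilities Full-First i, k, s Left} and \eqref{Eq Probabilities Eligible-First i, k, s Right}, I would argue exactly as in the proof of \eqref{Eq Probabilities Eligible-First i, k, s}: the event that index $2i-1$ is $\lt\ef$, conditional on $\kleft = k$ and $\sleft = s$, is the intersection of the independent events $\{\upw(2j-1) > k\}$ for $j = s+1, \ldots, i-1$ and $\{\upw(2i-1) \leq k\}$, which has probability $\big(1-\frac{k}{c}\big)^{i-s-1}\frac{k}{c}$ under the uniform law on $[1,\delta]$ intersected with the composition structure imposed by $\cleft = c$. Multiplying this conditional probability by $\prob(\kleft=k,\sleft=s\mid\cleft=c)$ and then by the marginal $\prob(\cleft = c)$ (which cancels after taking the final conditional probability on $\cleft = c$) yields the stated formulas. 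The right-hand identity is strictly parallel.

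For the conditional expectations \eqref{Eq Conditional Expectation Eligible-First k, s Left} and \eqref{Eq Conditional Expectation Eligible-First k, s Right}, I would start from
\begin{equation*}
\Exp\big(\Y^{\ef}_{\lt}\big\vert \kleft = k, \sleft = s, \cleft = c\big) = \sum_{i \, = \, s + 1}^{\lambda} \Exp\big(\upp(2i-1)\big\vert 2i-1 \text{ is }\lt\ef, \kleft=k, \sleft=s, \cleft=c\big)\,\prob\big(2i-1 \text{ is }\lt\ef\big\vert \kleft=k, \sleft=s, \cleft=c\big),
\end{equation*}
decompose the profit as $\upp(2i-1) = \upw(2i-1)\,\upg(2i-1)$, and exploit independence of weights and increments: the conditional expectation factors as $\Exp(\upw(2i-1)\mid \upw(2i-1)\le k)\cdot \Exp(\upg(2i-1)) = \tfrac{k}{2}\cdot \tfrac{\mu - 2i + 2}{2}$. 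Substituting and reindexing with $j = i-s-1$ and $q = 1 - k/c$ reduces the expression to a linear combination of a geometric sum $\sum_{j=0}^{\lambda - s - 1} q^{j}$ and an arithmetic--geometric sum $\sum_{j=0}^{\lambda - s - 1} j\,q^{j}$, both of which admit closed forms via differentiation of the Newton binomial; collecting terms yields the claimed identity. The right-subproblem computation is identical except that $\Exp(\upg(2i)) = \tfrac{\mu - 2i + 1}{2}$, which accounts for the single-unit shift from $\mu - 2s$ to $\mu - 2s - 1$ in \eqref{Eq Conditional Expectation Eligible-First k, s Right}.

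The principal obstacle is not conceptual but algebraic: compressing the arithmetic--geometric sum into the compact envelope $\big(1 + \tfrac{\lambda - s - 1}{c}k\big)\big(1 - \tfrac{k}{c}\big)^{\lambda - s - 1}$ requires a careful rearrangement analogous to the one hidden in the sketch of \textsc{Lemma} \ref{Thm Full-First, Eligible First, Conditional Expectations}. A second, more subtle point is that the derivation implicitly uses \textsc{Hypothesis} \ref{Hyp Analysis of left and right subproblems} ($\mu = 2\lambda$) to guarantee that both subproblems contain exactly $\lambda$ items and that the upper summation limit in both cases is $\lambda$; dropping this parity assumption would force a case split between $\lt$ and $\rt$ in the summation range and would spoil the symmetry between \eqref{Eq Conditional Expectation Eligible-First k, s Left} and \eqref{Eq Conditional Expectation Eligible-First k, s Right}.
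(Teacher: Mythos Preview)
Your proposal is correct and follows essentially the same route as the paper's sketch: condition on $\cleft = c$ (resp.\ $\cright = c$), transplant the argument of \textsc{Lemma}~\ref{Thm Full-First, Eligible First, Conditional Expectations} to the subproblem via the cornerstone identity, factor $\upp(2i-1) = \upw(2i-1)\,\upg(2i-1)$ using independence of weights and increments, and finish with the geometric and arithmetic--geometric sums. The paper's own sketch displays precisely the sum $\sum_{i=s+1}^{\lambda}\tfrac{\mu-2i+2}{2}\cdot\tfrac{k}{2}\cdot\tfrac{k}{c}\big(1-\tfrac{k}{c}\big)^{i-s-1}$ that you derive and then defers the closed-form collection to the techniques of the earlier lemma, so your plan and the paper's coincide step for step.
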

\begin{proof}[Sketch of the proof.]
	The result is attained adjusting the procedure used in the proof of \textsc{Lemma} \ref{Thm Full-First, Eligible First, Conditional Expectations}. The identities \eqref{Eq Probabilities Full-First i, k, s Left} and \eqref{Eq Probabilities Eligible-First i, k, s Right} follow directly.
	For the proof of \eqref{Eq Conditional Expectation Eligible-First k, s Left}, we only provide details of the following conditional expectation. Recall that $ \# V_{\lt} = \# V_{\rt}  = \lambda = \frac{ 1 }{ 2 } \mu $, due to the hypothesis \ref{Hyp Analysis of left and right subproblems}; therefore
	%
	\begin{equation*}
	\begin{split}
	\Exp\big( \Y^{\ef}_{\lt} \big\vert \kleft  = k,  \sleft = s,  \cleft = c \big)
	=  &
	\sum\limits_{ i \, = \, s + 1 }^{ \lambda } 
	\Exp\big( \upp(2i - 1) \big\vert 2i - 1 \text{ is } \lt\ef ,\kleft = k, \sleft = s, \cleft = c \big)  \\
	& \times \prob\big( 2i - 1 \text{ is } \lt\ef \big\vert \kleft = k , \sleft = s , \cleft = c \big)
	\\
	= &
	\sum\limits_{ i \, = \, s + 1 }^{ \lambda }
	\frac{ \mu - 2i + 2 }{ 2 } \frac{ k }{ 2 }
	\frac{ k }{ c } \big( 1 - \frac{ k }{ c } \big)^{ i - s - 1 } 
	.
	\end{split}
	\end{equation*}
	Here, the first equality is the mere definition of conditional expectation, while the second equality computes directly the conditional probability of the event inside the sum. From here, solving the sum with the techniques presented in the proof of \textsc{Lemma} \ref{Thm Full-First, Eligible First, Conditional Expectations}, the identity \eqref{Eq Conditional Expectation Eligible-First k, s Left} follows. The proof of the identity \eqref{Eq Conditional Expectation Eligible-First k, s Right} is similar.
\end{proof}
\begin{corollary}[Expected values of $ \Z^{\ef}_{\lt} $ and $ \Z^{\ef}_{\rt} $]\label{Thm Conditional Expectation Full-First, Eligible First Left Right}
	With the definitions above, the following conditional expectations hold
	%
	\begin{subequations}\label{Eq Expectation Full-First, Eligible-First Left Right}
		\begin{equation}\label{Eq Conditional Expectation Capacity Eligible-First Left}
				\begin{split}
		\Exp\big( \Z^{\ef}_{\lt} \big\vert   \cleft = c \big)  = &
		\Exp\big( \Z^{\gr}_{\lt} \big\vert \cleft = c \big) 
		+
		\sum\limits_{ s \, = \, 2 }^{ \lambda }
		\sum\limits_{ k \, = \, 1 }^{ \lambda - s + 1 }
		\frac{ k }{ 4 }( \mu - 2s ) 
		\Big\{1 - \big(1 - \frac{ k }{ c } \big)^{ \lambda - s } \Big\}
		\frac{ c - k }{ c^{ s + 1 }} 
		{ c - k - 1\choose s - 2}
		\\
				& 
		- 
		\sum\limits_{ s \, = \, 2 }^{ \lambda }
		\sum\limits_{ k \, = \, 1 }^{ \lambda  - s + 1 }
		\frac{ 1 }{ 2 k }
		\big( 1 - \frac{ k }{ c }\big)
		\Big\{ 
		1 - 
		\big(1 + \frac{ \lambda - s - 1 }{ c } k \big)
		\big( 1 - \frac{ k }{ c }\big)^{ \lambda - s - 1} 
		\Big\}
		\frac{ c - k }{ c^{ s } } { c - k - 1 \choose s - 2 } ,
				\end{split}
		\end{equation}
		\begin{equation}
				\begin{split}
		\Exp( \Z^{\ef}_{\rt} \big\vert  \cright = c ) 
		= & 
		\Exp( \Z^{\gr}_{\rt} \big\vert \cright = c ) 
		+
		\sum\limits_{ s \, = \, 2 }^{ \lambda }
		\sum\limits_{ k \, = \, 1 }^{ \lambda - s + 1 }
		\frac{ k }{ 4 }( \mu - 2s - 1) 
		\Big\{1 - \big(1 - \frac{ k }{ c } \big)^{ \lambda - s } \Big\}
		\frac{ c - k }{ c^{ s + 1 }} 
		{ c - k - 1\choose s - 2}
		\\
				& 
		- 
		\sum\limits_{ s \, = \, 2 }^{ \lambda }
		\sum\limits_{ k \, = \, 1 }^{ \lambda  - s + 1 }
		\frac{ 1 }{ 2 k }
		\big( 1 - \frac{ k }{ c }\big)
		\Big\{ 
		1 - 
		\big(1 + \frac{ \lambda - s - 1 }{ c } k \big)
		\big( 1 - \frac{ k }{ c }\big)^{ \lambda - s - 1} 
		\Big\}
		\frac{ c - k }{ c^{ s } } { c - k - 1 \choose s - 2 } 
		.
		\label{Eq Conditional Expectation Capacity Eligible-First Right}
				\end{split}
		\end{equation}
	\end{subequations}
	Here $ \Z^{ \ef }_{\lt} $, $ \Z^{ \ef}_{\rt} $ are the corresponding values of the objective function, furnished by the eligible-first algorithm for the problems $ \Pi_{\lt}$ and $ \Pi_{\rt} $. 
\end{corollary}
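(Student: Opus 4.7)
The plan is to reduce the computation to the two main building blocks already established: the conditional expectation of the post-greedy profit given $(\kleft,\sleft,\cleft)$ (equation \eqref{Eq Conditional Expectation Eligible-First k, s Left}) and the joint conditional distribution of $(\kleft,\sleft)$ given $\cleft$. Since by \textsc{Definition} \ref{Def Post-G Packable Item} (ii) adapted to the left subproblem we have the decomposition $\Z^{\ef}_{\lt} = \Z^{\gr}_{\lt} + \Y^{\ef}_{\lt}$, linearity of conditional expectation reduces the statement to showing that $\Exp(\Y^{\ef}_{\lt}\,\vert\,\cleft=c)$ equals the double sum displayed in \eqref{Eq Conditional Expectation Capacity Eligible-First Left}.

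First, I would invoke the total expectation identity \eqref{Eq Expectation wrt Partition} conditionally on $\cleft = c$, using the partition of the conditional probability space induced by the pairs $(\kleft,\sleft)$ with $s=2,\ldots,\lambda$ and $k=0,\ldots,\lambda-s+1$. Note that when $k=0$ no further item fits after the greedy algorithm halts inside $\Pi_{\lt}$, so $\Y^{\ef}_{\lt}\equiv 0$ on the event $\{\kleft=0\}$ and the summation in $k$ effectively starts at $k=1$. This yields
\begin{equation*}
\Exp\bigl(\Y^{\ef}_{\lt}\,\bigl\vert\,\cleft=c\bigr)
= \sum_{s=2}^{\lambda}\sum_{k=1}^{\lambda-s+1} \Exp\bigl(\Y^{\ef}_{\lt}\,\bigl\vert\,\kleft=k,\sleft=s,\cleft=c\bigr)\,\prob\bigl(\kleft=k,\sleft=s\,\bigl\vert\,\cleft=c\bigr).
\end{equation*}

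Second, I would substitute \eqref{Eq Conditional Expectation Eligible-First k, s Left} for the conditional expectation factor, and derive the joint probability $\prob(\kleft=k,\sleft=s\,\vert\,\cleft=c)$ by marginalizing \eqref{Eq Probabilities Full-First i, k, s Left} over $i$ (equivalently, by reproducing the cornerstone \textsc{Lemma} \ref{Thm Cornerstone Lemma} for the restricted weight sequence $(\upw(2j-1))_{j\geq 1}$ whose distribution relative to $\cleft$ is identical to that of $(\upw(j))_{j\geq 1}$ relative to $\delta$, as was used in the sketches of \textsc{Lemmas} \ref{Thm Distribution Split Item Left Right} and \ref{Thm Knapsack Slack Expectation Left Right}). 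This gives
\begin{equation*}
\prob\bigl(\kleft=k,\sleft=s\,\bigl\vert\,\cleft=c\bigr)=\frac{c-k}{c^{s}}\binom{c-k-1}{s-2}.
\end{equation*}

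Third, I would split the resulting double sum into the two pieces that correspond to the two summands of \eqref{Eq Conditional Expectation Eligible-First k, s Left}, pull the factor $(c-k)/c^{s}\binom{c-k-1}{s-2}$ inside each piece, and observe that the factor $1/c$ appearing in the first piece of \eqref{Eq Conditional Expectation Eligible-First k, s Left} combines with $c^{s}$ to yield the $c^{s+1}$ in the denominator of the first summand of \eqref{Eq Conditional Expectation Capacity Eligible-First Left}. Finally, adding $\Exp(\Z^{\gr}_{\lt}\,\vert\,\cleft=c)$ to the expression gives \eqref{Eq Conditional Expectation Capacity Eligible-First Left}. The proof of \eqref{Eq Conditional Expectation Capacity Eligible-First Right} is identical after swapping $\mu-2s$ for $\mu-2s-1$ and $\lt$ for $\rt$, using \eqref{Eq Conditional Expectation Eligible-First k, s Right} and \eqref{Eq Probabilities Eligible-First i, k, s Right} in place of \eqref{Eq Conditional Expectation Eligible-First k, s Left} and \eqref{Eq Probabilities Full-First i, k, s Left}. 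The main (essentially only) subtlety is matching the summation ranges: $s$ ranges in $\{2,\ldots,\lambda\}$ because each subproblem has $\lambda$ eligible items, and the upper bound $k\leq \lambda-s+1$ reflects that on the event $\{\sleft=s\}$ the partial weight sum $\sum_{j<s}\upw(2j-1)$ is at least $s-1$ out of the capacity $\cleft$, whose typical value (by \textsc{Theorem} \ref{Thm Expected Right and Left Capacities}) is of order $\delta/2\approx \lambda$; no further combinatorial identities are needed.
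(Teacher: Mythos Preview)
Your approach is essentially the same as the paper's, which simply states that the corollary follows by repeating the argument of \textsc{Theorem}~\ref{Thm Expectation Full-First, Eligible First} conditionally on $\cleft=c$ (resp.\ $\cright=c$): decompose $\Z^{\ef}_{\lt}=\Z^{\gr}_{\lt}+\Y^{\ef}_{\lt}$, apply total expectation over $(\kleft,\sleft)$, and substitute \eqref{Eq Conditional Expectation Eligible-First k, s Left} together with the cornerstone-type joint probability $\frac{c-k}{c^{s}}\binom{c-k-1}{s-2}$. One minor slip: there is no explicit $1/c$ factor in the first summand of \eqref{Eq Conditional Expectation Eligible-First k, s Left}, so your explanation of how $c^{s+1}$ arises in the first displayed sum is off; this is a bookkeeping detail (and arguably reflects a typographical inconsistency in the displayed formula itself) rather than a gap in your method.
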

\begin{proof}[Sketch of the proof.]
	The proof is analogous to the one presented in \textsc{Theorem} \ref{Thm Expectation Full-First, Eligible First}. 
\end{proof}
Finally, we close this section presenting an estimate for the expected performance of the eligible-first algorithm on the $ \Pi_{\lt} $ and $ \Pi_{\rt} $ subproblems.
\begin{corollary}[Approximation of $ \Exp(\Z^{\ef}_{\lt}), \Exp(\Z^{\ef}_{\rt}) $]\label{Thm Approximation Expectation Eligible First Left and Right}
	With the definitions above, the following estimates hold
	\begin{subequations}\label{Eq Approximation Expectation Eligible First Left and Right}
		\begin{align}
		\begin{split}
		\Exp \big( \Z^{\ef}_{\lt} \big) 
		& \sim  
		\Exp \big( \Z^{\gr}_{\lt} \big) 
		+
		\frac{ \expkl }{ 4 }\big( \mu - 2\expsl \big) 
		\Big\{1 - \big(1 - \frac{ \expkl }{ \expcl } \big)^{ \lambda - \expsl } \Big\}
		\\
		& 
		- \frac{ \expcl }{ 2 \expkl }
		\big( 1 - \frac{ \expkl }{ \expcl }\big) 
		\Big\{ 
		1 - 
		\big(1 + \frac{ \lambda - \expsl - 1 }{ \expcl } \expkl \big)
		\big( 1 - \frac{ \expkl }{ \expcl }\big)^{ \lambda - \expsl - 1} 
		\Big\}
		\end{split}
		\label{Eq Approximating Expectation Eligible-First Left}
		\\
		%
		\begin{split}
		\Exp \big( \Z^{\ef}_{\rt} \big) 
		& \sim   
		\Exp \big( \Z^{\gr}_{\rt} \big) 
		+
		\frac{ \expkr }{ 4 }( \mu - 2\expsr - 1 ) 
		\Big\{1 - \big(1 - \frac{ \expkr }{ \expcr } \big)^{ \lambda - \expsr } \Big\}
		\\
		& 
		- \frac{ \expcr }{ 2 \expkr }
		\big( 1 - \frac{ \expkr }{ \expcr }\big) 
		\Big\{ 
		1 - 
		\big(1 + \frac{ \lambda - \expsr - 1 }{ \expcr } \expkr \big)
		\big( 1 - \frac{ \expkr }{ \expcr }\big)^{ \lambda - \expsr - 1} 
		\Big\}
		.
		\label{Eq Approximating Expectation Eligible-First Right}
		\end{split}
		\end{align}
	\end{subequations}
\end{corollary}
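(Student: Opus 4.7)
The plan is to mimic the strategy used in \textsc{Corollary} \ref{Thm Approximation Expectation Full-First, Eligible First}, i.e., to exploit the exact conditional expectations from \textsc{Theorem} \ref{Thm Full-First, Eligible First, Conditional Probabilities and Conditional Expectations Right Left} and collapse the randomness of $\cleft, \sleft, \kleft$ (resp. $\cright, \sright, \kright$) by evaluating at their expected values. Concretely, observe the decomposition
\begin{equation*}
\Exp\big(\Z^{\ef}_{\lt}\big) = \Exp\big(\Z^{\gr}_{\lt}\big) + \Exp\big(\Y^{\ef}_{\lt}\big),
\end{equation*}
so that the estimate for $\Exp(\Z^{\ef}_{\lt})$ reduces to estimating $\Exp(\Y^{\ef}_{\lt})$ and invoking \textsc{Corollary} \ref{Thm Approximation Greedy Left Right Expected Profit} for the greedy summand (which is already rewritten implicitly through the $\Exp(\Z^{\gr}_{\lt})$ term).

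First, I would set $c_{0} \defining \lfloor \Exp(\cleft) \rfloor$, $s_{0} \defining \lceil \Exp(\sleft) \rceil$, $k_{0} \defining \lfloor \Exp(\kleft) \rfloor$ and write the chain of approximations
\begin{equation*}
\Exp\big(\Y^{\ef}_{\lt}\big) \sim \Exp\big(\Y^{\ef}_{\lt} \big\vert \kleft = k_{0}, \sleft = s_{0}, \cleft = c_{0}\big),
\end{equation*}
which treats the triple $(\cleft,\sleft,\kleft)$ as if concentrated at its mean. Plugging $(c_{0},s_{0},k_{0})$ into formula \eqref{Eq Conditional Expectation Eligible-First k, s Left} yields a closed-form expression in $c_{0}, s_{0}, k_{0}$; finally, I would replace $c_{0} \mapsto \expcl$, $s_{0} \mapsto \expsl$, $k_{0} \mapsto \expkl$ to obtain \eqref{Eq Approximating Expectation Eligible-First Left}. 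Adding back $\Exp(\Z^{\gr}_{\lt})$ then gives the desired estimate.

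The right-hand estimate \eqref{Eq Approximating Expectation Eligible-First Right} follows by the same argument, using \eqref{Eq Conditional Expectation Eligible-First k, s Right} instead and replacing $\cleft, \sleft, \kleft$ by their right counterparts $\cright, \sright, \kright$. The main subtlety (and the only place where rigor is sacrificed for tractability) is precisely the replacement of the random vector $(\cleft, \sleft, \kleft)$ by its componentwise expectation: since the map $(c,s,k) \mapsto \Exp(\Y^{\ef}_{\lt} \vert \cleft = c, \sleft = s, \kleft = k)$ is nonlinear, Jensen-type corrections are in principle required. However, as already accepted in \textsc{Corollary} \ref{Thm Approximation Expectation Full-First, Eligible First} and in the estimates \eqref{Est Approximation Split Expectation Left}--\eqref{Est Approximation Split Expectation Right}, \eqref{Est Approximation Knapsack Slack Expectation Left}--\eqref{Est Approximation Knapsack Slack Expectation Right}, these corrections are negligible within the precision of the probabilistic analysis adopted throughout the paper, and the estimate is intended to be matched against the Monte Carlo experiments rather than proved with equality. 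This completes the sketch.
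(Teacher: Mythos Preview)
Your proposal is correct and follows essentially the same approach as the paper: decompose $\Exp(\Z^{\ef}_{\side}) = \Exp(\Z^{\gr}_{\side}) + \Exp(\Y^{\ef}_{\side})$, freeze the random triple $(\cleft,\sleft,\kleft)$ (resp.\ right) at integer proxies of their expectations, evaluate the conditional formula \eqref{Eq Conditional Expectation Eligible-First k, s Left} (resp.\ \eqref{Eq Conditional Expectation Eligible-First k, s Right}) there, and then substitute back the expectations, exactly as in \textsc{Corollary}~\ref{Thm Approximation Expectation Full-First, Eligible First}. The only addition relative to the paper's terse ``similar to'' is your explicit handling of the extra random capacity $\cleft$ (resp.\ $\cright$), which is the correct adaptation and is implicit in the paper's reference.
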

\begin{proof}[Sketch of the proof.]
	Similar to the proof of \textsc{Corollary} \ref{Thm Approximation Expectation Full-First, Eligible First}
\end{proof}
%
%
%
%
%
%
%
%
%
\section{Performance Estimates for the Divide-and-Conquer Method}
\label{Sec Performance Estimates for the Divide-and-Conquer}
%
%
%
%
In the current section, we use the previous analysis to derive performance parameters, some for efficiency-reference and other as lower bound estimates for the expected (average) performance of the Divide-and-Conquer method. We also compute with higher accuracy, the performance of the method for the $ \Z^{\ef} $ and $ \Z^{\lp} $ bounding algorithm solutions, to estimate the expected performance of Divide-and-Conquer on the optimal solution $ \Z^{*} $. We begin this section by evaluating numerically the aforementioned parameters for one iteration of the method. 
%
%
\subsection{Expected performance for one iteration of the \newline Divide-and-Conquer Method}
\label{Sec Expected performance for one iteration of the Divide-and-Conquer Method}
%
%
In this section, we finally apply the analytical results previously developed to estimate the performance of one iteration of the Divide-and-Conquer method. Observe that the complexity of the analytical expressions, forces us to seek a numerical evaluation of them in order to attain a tangible value (or reference lower bounds) of the method's efficiency. It is important to stress that for most of the cases, the numerical computations will use the approximations introduced in the
lemmas \ref{Thm Distribution Split Item Left Right}, \ref{Thm Knapsack Slack Expectation Left Right} and the corollaries \ref{Thm Approximation Greedy Left Right Expected Profit}, \ref{Thm Approximation Expectation Eligible First Left and Right} above. This approach is adopted because, the conditional expectations of $ \cleft $ and $ \cright $ with respect to $ \upk $ and $ \ups $ have a wild structure, as they heavily depend on whether the split value is even or odd (see the equations \eqref{Eq Balanced Capacities s odd} and \eqref{Eq Balanced Capacities s even} in \textsc{Theorem} \ref{Thm Balanced capacities}). This case-wise structure makes hard to use the identities \eqref{Eq Balanced Capacities s odd} and \eqref{Eq Balanced Capacities s even} for further calculations beyond the expectations $ \expcl $ and $ \expcr $ (e.g., the equations \eqref{Eq Conditional Expectation Capacity Eligible-First Left} and \eqref{Eq Conditional Expectation Capacity Eligible-First Right}). 

On the other hand, it is important to observe that the approximation $ \ef( \delta )$ for $ \Exp(\Z^{\ef}) $ given in \eqref{Eq Approximation Expectation Eligible-First} (similar to all the estimates adopted) is very accurate with respect to the exact values \eqref{Eq Expectation Eligible-First}, as it can be seen in Table \ref{Tbl Approximation Accuracy} below. Additionally, \textsc{Theorem} \ref{Thm Split Item Asymptotic} shows a convergent asymptotic behavior for the paramount random variables of the 0-1RKP (equation \eqref{Eqn Random Integer Problem}). Furthermore, the statement \eqref{Eqn Relative LP Approximation} in \textsc{Theorem} \ref{Thm Split Item Asymptotic} shows analytically, that the upper bound $ \Exp(\Z^{\lp}) $ can be accurately approximated, as pointed out in \textsc{Remark} \ref{Rem Split Item Asymptotic}, in an analogous way to our approximation $ \Exp(\Z^{\ef}) \sim \ef(\delta) $. 
%
%
\begin{table}[h!]
	\small{
		\begin{centering}
			\begin{tabular}{c | cccccccccccc}
								\rowcolor{gray!80}
				\hline
				\diagbox{Accuracy}{ $ \delta $ }& 
				10 & 20 & 30 & 40 & 50 & 60 & 70 & 80 & 90 & 100 & 110 & 120 \\
				$ 100 \times \dfrac{\ef(\delta)}{ \Exp(\Z^{\ef}) } $ & 2.66 & 0.75 & 0.45 & 0.34 & 0.28 & 0.24 & 0.21 & 0.19 & 0.17 & 0.15 & 0.14 & 0.13 \\
				\hline		
			\end{tabular}
			\caption{Accuracy of the approximation $ \ef(\delta) $. We present the relative accuracy of the approximation in percentage terms for several values of the capacity $ \delta $.}
			\label{Tbl Approximation Accuracy}
		\end{centering}
	}
\end{table}

Hence, the numerical evidence of \textsc{Table} \ref{Tbl Approximation Accuracy}, together with the expected asymptotic behavior, stated in \textsc{Theorem} \ref{Thm Split Item Asymptotic}, are solid grounds to estimate the expected performance of the Divide-and-Conquer method using the approximations \eqref{Eq Approximating Expectation Eligible-First Left} and \eqref{Eq Approximating Expectation Eligible-First Right} for the eligible-first algorithm. Next, we introduce the following set of parameters to estimate the performance of the Divide-and-Conquer. 
\begin{definition}\label{Def DAC Efficiency}
	Let $ \Pi $ be an instance of the 0-1RKP introduced in \textsc{Definition} \ref{Def Random Problems} and let $ \Pi_{\lt} $ and $ \Pi_{ \rt } $ be the problems induced by one iteration of the Divide-and-Conquer method (see \textsc{Definition} \ref{Def Divide and Conquer Setting}). 	Let $ \Exp(\Z^{*}) $ $ \Exp(\Z^{ \ef }) $ and $ \Exp(\Z^{ \lp }) $ be the expected objective function values for the optimal, eligible-first and linear relaxation respectively; moreover the analogous notation holds when the subindex makes reference to the $ \Pi_{\lt} $ or $ \Pi_{ \rt } $ random subproblems. 
	
	\begin{enumerate}[(i)]
		\item Define the following efficiency-reference parameters	
		\begin{subequations}\label{Eq Efficiencies}
			\begin{align}
			\rho \defining & \frac{ \Exp\big(\Z^{ * }_{ \lt }\big) + \Exp\big(\Z^{ * }_{ \rt }\big) }{ \Exp\big(\Z^{ * }\big) } \times 100 , &
			\rho_{ \side } \defining & \frac{ \Exp\big(\Z^{ * }_{ \side }\big) }{ \Exp\big(\Z^{ * }\big) } \times 100 ,
			\label{Eq DAC Efficiency}
			\\
			\rho^{ \ef } \defining & \frac{ \Exp\big(\Z^{ \ef }_{ \lt }\big) + \Exp\big(\Z^{ \ef }_{ \rt }\big) }{ \Exp\big(\Z^{ \ef }\big) } \times 100 , &
			\rho_{ \side }^{ \ef } \defining & \frac{ \Exp\big(\Z^{ \ef }_{ \side }\big) }{ \Exp\big(\Z^{ \ef }\big) } \times 100 ,
			\label{Eq DAC Eligible-First Efficiency}
			\\
			\rho^{ \lp } \defining & \frac{ \Exp\big(\Z^{ \lp }_{ \lt }\big) + \Exp\big(\Z^{ \lp }_{ \rt }\big) }{ \Exp\big(\Z^{ \lp }\big) } \times 100 , &
			\rho_{ \side }^{ \lp } \defining & \frac{ \Exp\big(\Z^{ \lp }_{ \side }\big) }{ \Exp\big(\Z^{ \lp }\big) } \times 100 ,
			\label{Eq DAC Linear Relaxation Efficiency}
			\end{align}
		\end{subequations}
		where $ \side \in \{ \lt, \rt \} $.
		
		\item Define the following lower bound parameters
		\begin{subequations}\label{Eq Lower Bounds}
			\begin{align}\
			\lb^{\gr} \defining & \frac{ \Exp\big(\Z^{ \gr }_{ \lt }\big) + \Exp\big(\Z^{ \gr }_{ \rt }\big) }{ \Exp\big(\Z^{ \lp }\big) } \times 100 , &
			\lb^{\gr}_{ \side } \defining & \frac{ \Exp\big(\Z^{ \gr }_{ \side }\big) }{ \Exp\big(\Z^{ \lp }\big) } \times 100 ,
			\label{Eq LB Greedy DAC Efficiency}
			\\
			\lb^{\ef } \defining & \frac{ \Exp\big(\Z^{ \ef }_{ \lt }\big) + \Exp\big(\Z^{ \ef }_{ \rt }\big) }{ \Exp\big(\Z^{ \lp }\big) } \times 100 , & 
			\lb^{\ef }_{ \side } \defining & \frac{ \Exp\big(\Z^{ \ef }_{ \side }\big) }{ \Exp\big(\Z^{ \lp }\big) } \times 100 ,
			\label{Eq LB Eligible-First DAC Efficiency}
			\end{align}
		\end{subequations}
		where $ \side \in \{ \lt, \rt \} $.
	\end{enumerate}
\end{definition}
It is direct to see that the parameters of equations \eqref{Eq Efficiencies} account for the efficiency of the Divide-and-Conquer method acting on the three solutions $ \Z^{*}, \Z^{\ef} $ and $ \Z^{\lp} $. The analogous holds whenever the subindex $ \side \in \{ \lt, \rt \} $ is present. However, we still need to show that the parameters introduced in the equations \eqref{Eq Lower Bounds} are actually lower bounds.
\begin{proposition}\label{Thm Lower Bounds Hold}
	With the definitions above for the performance parameters, the following estimates hold
	\begin{subequations}
		\begin{align}
		& \lb^{ \gr } \leq \lb^{ \ef } \leq \rho, 
		\label{Ineq Estimate Full Problem}
		\\
		& \lb^{ \gr }_{ \lt } \leq \lb^{ \ef }_{ \lt } \leq \rho_{ \lt }, 
		\label{Ineq Estimate Left Problem}
		\\
		& \lb^{ \gr }_{ \rt } \leq \lb^{ \ef }_{ \rt } \leq \rho_{ \rt } 
		\label{Ineq Estimate Right Problem}.
		\end{align}
	\end{subequations}
\end{proposition}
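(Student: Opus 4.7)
The plan is to reduce everything to pointwise comparisons between the objective values of the different algorithms on any fixed realization, then use monotonicity of expectation and of the reciprocal in the denominators. The key facts I would invoke are: (a) $ z^{\gr} \leq z^{\ef} $ always (by \textsc{Remark} \ref{Rem Greedy Algorithms}, since $ \ef $ is $ \gr $ plus possibly one extra eligible item), (b) $ z^{\ef} \leq z^{*} $ always (any greedy-type feasible value is bounded by the optimum), and (c) $ z^{*} \leq z^{\lp} $ always (the LP relaxation enlarges the feasible region). Applied to the random subproblems $ \Pi_{\lt}, \Pi_{\rt} $ we obtain, $ \prob $-a.s.,
\begin{equation*}
\Z^{\gr}_{\side} \leq \Z^{\ef}_{\side} \leq \Z^{*}_{\side}, \qquad \side \in \{\lt, \rt\},
\end{equation*}
and for the full problem $ \Z^{*} \leq \Z^{\lp} $.

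Taking expectations and using linearity and monotonicity, I would conclude
\begin{equation*}
\Exp(\Z^{\gr}_{\side}) \leq \Exp(\Z^{\ef}_{\side}) \leq \Exp(\Z^{*}_{\side}), \qquad \Exp(\Z^{*}) \leq \Exp(\Z^{\lp}).
\end{equation*}
Dividing the first chain by $ \Exp(\Z^{\lp}) > 0 $ and multiplying by $ 100 $ gives $ \lb^{\gr}_{\side} \leq \lb^{\ef}_{\side} $ for $ \side \in \{\lt, \rt\} $, and summing the $ \lt $ and $ \rt $ cases yields $ \lb^{\gr} \leq \lb^{\ef} $. For the right-hand inequalities $ \lb^{\ef}_{\side} \leq \rho_{\side} $ I would use
\begin{equation*}
\lb^{\ef}_{\side} = \frac{\Exp(\Z^{\ef}_{\side})}{\Exp(\Z^{\lp})} \times 100 \leq \frac{\Exp(\Z^{*}_{\side})}{\Exp(\Z^{\lp})} \times 100 \leq \frac{\Exp(\Z^{*}_{\side})}{\Exp(\Z^{*})} \times 100 = \rho_{\side},
\end{equation*}
where the first step uses $ \Exp(\Z^{\ef}_{\side}) \leq \Exp(\Z^{*}_{\side}) $ and the second uses $ \Exp(\Z^{*}) \leq \Exp(\Z^{\lp}) $. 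Summing over $ \side \in \{\lt,\rt\} $ produces $ \lb^{\ef} \leq \rho $, which completes \eqref{Ineq Estimate Full Problem}--\eqref{Ineq Estimate Right Problem}.

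The proof has essentially no technical obstacle; the only thing that deserves a brief justification is that all pointwise inequalities carry through for the random versions of the subproblems defined in \textsc{Definition} \ref{Def left-right capacities}. This is immediate because, for every fixed $ \omega $ in the sample space, the subproblems $ \Pi_{\lt}(\omega), \Pi_{\rt}(\omega) $ are deterministic 0-1KP instances to which \textsc{Remark} \ref{Rem Greedy Algorithms} and \textsc{Theorem} \ref{Thm Greed and LP Solutions} apply, and the inequalities (a)--(c) therefore hold sample-path-wise. I would therefore keep the write-up short: one display establishing the pointwise chain, one display extracting the expected-value chain, and one display dividing by $ \Exp(\Z^{\lp}) $ and $ \Exp(\Z^{*}) $ respectively to finish.
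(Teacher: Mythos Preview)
Your proof is correct and follows essentially the same approach as the paper: establish the pointwise chain $\Z^{\gr}_{\side}\le\Z^{\ef}_{\side}\le\Z^{*}_{\side}$ and $\Z^{*}\le\Z^{\lp}$, take expectations, then divide by $\Exp(\Z^{\lp})$ and switch the denominator to $\Exp(\Z^{*})$ for the final inequality. Your write-up is in fact slightly more explicit than the paper's about the denominator switch and about why the pointwise inequalities hold sample-path-wise for the random subproblems.
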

\begin{proof}
	Recall that due to the algorithms' definition $ \Z^{ * } \leq \Z^{ \lp } $ and  $ \Z^{ \gr }_{ \side } \leq \Z^{ \ef }_{ \side } \leq \Z^{ * }_{ \side } $ for $ \side = \lt, \rt $, for any instance of the problem. Then, $ \Exp(\Z^{ \gr }_{\lt}) + \Exp( \Z^{ \gr }_{ \rt } )
	\leq  \Exp(\Z^{ \ef }_{\lt}) + \Exp(\Z^{ \ef }_{ \rt })
	\leq \Exp(\Z^{ * }_{\lt}) + \Exp(\Z^{ * }_{ \rt }) $, consequently
	\begin{equation*}
	\frac{ \Exp(\Z^{ \gr }_{\lt}) + \Exp( \Z^{ \gr }_{ \rt } ) }{ \Exp(\Z^{ \lp } ) } = \lb^{ \gr }
	\leq  \frac{ \Exp(\Z^{ \ef }_{\lt}) + \Exp(\Z^{ \ef }_{ \rt }) }{ \Exp( \Z^{ \lp } ) } = \lb^{ \ef }
	\leq \frac{ \Exp(\Z^{ * }_{\lt}) + \Exp(\Z^{ * }_{ \rt }) }{ \Exp( \Z^{*} ) } = \rho.
	\end{equation*}
	The above shows the inequality \eqref{Ineq Estimate Full Problem}. The proof of the estimates \eqref{Ineq Estimate Left Problem} and \eqref{Ineq Estimate Right Problem} is analogous.
\end{proof}
Clearly, we want to compute the values of $ \rho, \rho_{\lt} $ and $ \rho_{\rt} $, however, as discussed in \textsc{Remark} \ref{Rem Greedy Algorithms} above, the probabilistic analysis of $ \Z^{*} $ is not tractable (or even $ \Z^{\eg}, \Z^{\fg} $). Hence, we use the values of $ \Z^{\gr}, \Z^{\ef}, \Z^{\lp} $ whose probabilistic analysis has been described accurately enough in the sections \ref{Sec The Expected Performance} and \ref{Sec Probabilistic Estimates of the Divide-and-Conquer Algorithm} above. We analyze the behavior of the Divide-and-Conquer method from two points of view, 
\begin{enumerate}[{view} a.]
	\item We compute the efficiency of the method for $ \Z^{ \ef } $ and $ \Z^{ \lp } $ (equations \eqref{Eq DAC Eligible-First Efficiency} and \eqref{Eq DAC Linear Relaxation Efficiency}) to have an idea of the expected performance of the method for $ \Z^{*} $ (equation \eqref{Eq DAC Efficiency}), see \textsc{Table} \ref{Tbl Eligible-First and Linear Relaxation Performance} and \textsc{Figure} \ref{Fig Performance of LP and EF} below. 
	
	%
	\begin{table}[h!]
		\begin{minipage}{0.5\textwidth}
			\begin{center}
				\scriptsize{
										\rowcolors{2}{white}{gray!25}
					\begin{tabular}{c  c | c  c c | c c c}
												\rowcolor{gray!80}
						\hline
						Capacity & Items & \multicolumn{3}{ c | }{$ \Z^{\ef} $} & \multicolumn{3}{ c }{$ \Z^{\lp} $} \\[2pt] 
												\rowcolor{gray!80}
						$ \delta $ & $ \mu $ & $ \rho^{\ef} $ & $ \rho^{\ef}_{\lt} $ & $ \rho^{\ef}_{\rt} $ & $ \rho^{\lp} $ & $ \rho^{\lp}_{\lt} $ & $ \rho^{\lp}_{\rt} $ \\[2pt]
						\hline 
						49 & 50 & 99.59	& 68.35 & 31.24 & 91.05 & 63.62 & 27.43 \\
						99 & 100 & 99.78 & 68.37 & 31.41 & 91.97 & 64.23 & 27.74 \\
						149 &	150	& 99.85 &	68.38 & 31.48 & 92.28 & 64.44	& 27.84 \\
						199 & 200 & 99.89 &	68.38 & 31.51 & 92.44 & 64.54	& 27.90 \\
						249 & 250 & 99.91 & 68.38	 & 31.53 & 92.53 & 64.6 & 27.93 \\
						299 & 300	& 99.93 &	68.39 & 31.54 & 92.59 & 64.64 & 27.95 \\
						399 & 400 & 99.94 &	68.39 & 31.56 & 92.67 & 64.69	& 27.98 \\
						499 & 500	& 99.96 &	68.39 & 31.57 & 92.72 & 64.72 & 28.00 \\
						599 & 600	& 99.96 &	68.39 & 31.57 & 92.75 & 64.75 & 28.01 \\
						699 & 700	& 99.97 &	68.39 & 31.58 & 92.77 & 64.76	& 28.01 \\
						799 & 800	& 99.97 &	68.39 & 31.58 & 92.79 & 64.77	& 28.02 \\
						899 & 900	& 99.97 &	68.39 & 31.58 & 92.81 & 64.78	& 28.03 \\
						999 &	1000	& 99.98 & 68.39 & 31.59 & 92.82 & 64.79	& 28.03 \\
						\hline		
					\end{tabular}
					\caption{Expected performance of $ \Z^{ \ef } $ and $ \Z^{ \gr } $.}
					\label{Tbl Eligible-First and Linear Relaxation Performance}
				}
			\end{center}
		\end{minipage}
		\begin{minipage}{0.5\textwidth}
			\centering
			{\includegraphics[scale = 0.450]{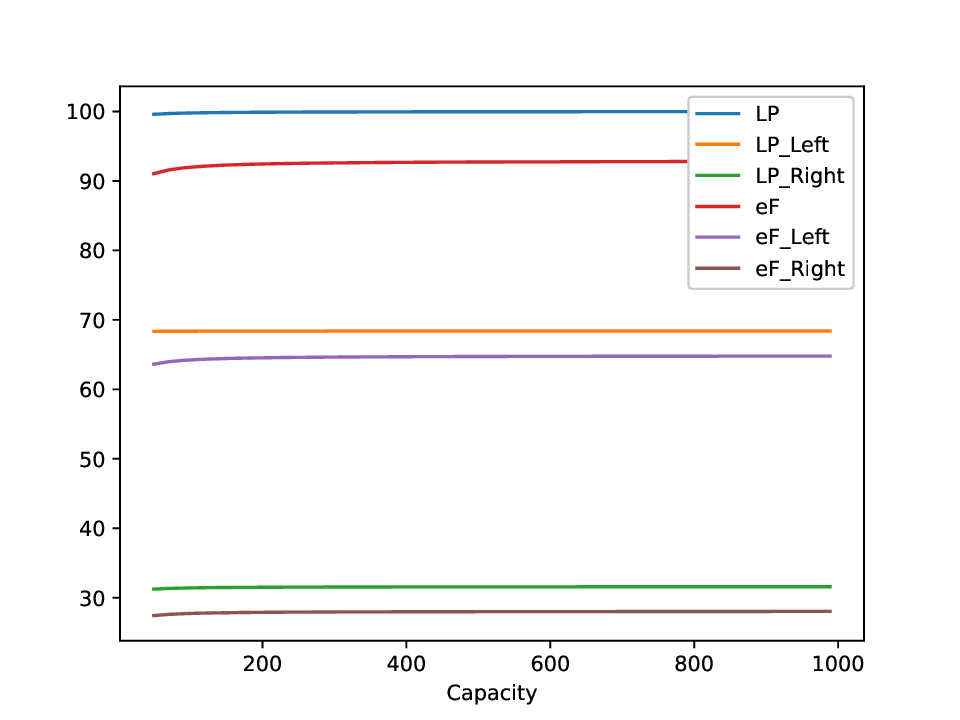} } 
			\captionof{figure}{Expected performance of the Divide-and-Conquer method on $ \Z^{ \lp } $ and $ \Z^{ \ef }$.}
			\label{Fig Performance of LP and EF}
		\end{minipage}
	\end{table}

	\item We compute lower bounds (equations \eqref{Eq Lower Bounds}) for the expected performance of the Divide-and-Conquer method for $ \Z^{ * } $, see \textsc{Table} \ref{Tbl Lower Bounds Greedy and Eligible-First} and \textsc{Figure} \ref{Fig Lower bounds Greedy and Eligible-First} below.  
	
	%
	\begin{table}[h!]
		\begin{minipage}{0.5\textwidth}
			\begin{center}
				\scriptsize{
										\rowcolors{2}{white}{gray!25}
					\begin{tabular}{c  c | c  c c | c c c}
												\rowcolor{gray!80}
						\hline
						Capacity & Items & \multicolumn{3}{ c | }{$ \Z^{\gr} / \Z^{\lp} $} & \multicolumn{3}{ c }{$ \Z^{\ef} / \Z^{\lp} $} \\[2pt] 
												\rowcolor{gray!80}
						$ \delta $ & $ \mu $ & $ \lb^{\gr} $ & $ \lb^{\gr}_{\lt} $ & $ \lb^{\gr}_{\rt} $ & $ \lb^{\ef} $ & $ \lb^{\ef}_{\lt} $ & $ \lb^{\ef}_{\rt} $ \\[2pt]
						\hline 
						49 & 50 & 72.74 & 49.75 & 22.99 & 79.19 & 55.34 & 23.85 \\
						99 & 100 & 72.28 & 49.44 & 22.84 & 79.51 & 55.53 & 23.98 \\
						149 & 150	& 72.13 & 49.33 & 22.79 & 79.62 & 55.59	& 24.02 \\
						199 & 200	& 72.05 & 49.28 & 22.77 & 79.67 & 55.62	& 24.04 \\
						249 & 250	& 72.01 &	49.25 & 22.76 & 79.70 & 55.64 & 24.06 \\
						299 & 300	& 71.98 & 49.23 & 22.75 & 79.72 & 55.66	& 24.07 \\
						399 & 400	& 71.94 &	49.20 & 22.74 & 79.75 & 55.67	& 24.08 \\
						499 & 500	& 71.92 &	49.19 & 22.73 & 79.76 & 55.68	& 24.08 \\
						599 & 600	& 71.90 & 49.18 & 22.72 & 79.77 & 55.69	& 24.09 \\
						699 & 700	& 71.89 &	49.17 & 22.72 & 79.78 & 55.69	& 24.09 \\
						799 & 800	& 71.88 & 49.16 & 22.72 & 79.79 & 55.69	& 24.09 \\
						899 & 900	& 71.88 &	49.16 & 22.72 & 79.79 & 55.70 & 24.09 \\
						999 &1000	& 71.87 &	49.16 & 22.72 & 79.79 & 55.70	& 24.10 \\
						\hline		
					\end{tabular}
					\caption{Lower bounds, ratios $ \Z^{ \gr }/ \Z^{ \lp } $ and $ \Z^{ \ef } / \Z^{ \lp } $.}
					\label{Tbl Lower Bounds Greedy and Eligible-First}
				}
			\end{center}
		\end{minipage}
		\begin{minipage}{0.5\textwidth}
			\centering
			{\includegraphics[scale = 0.450]{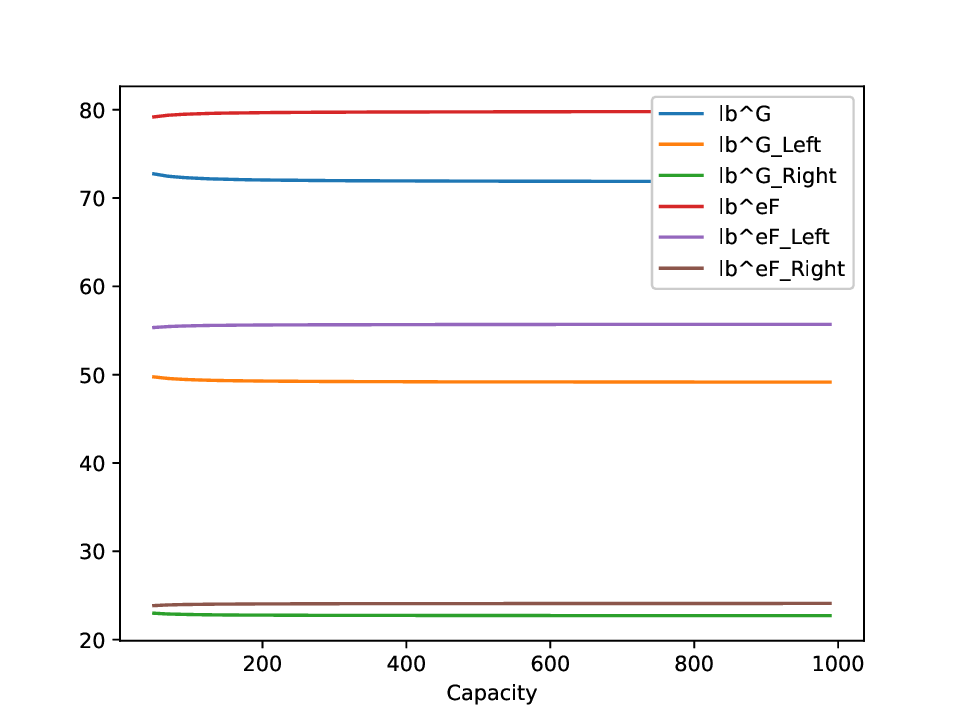} } 
			\captionof{figure}{Lower bounds for the expected performance of the Divide-and-Conquer method on $ \Z^{ \lp } $ and $ \Z^{ \ef }$.}
			\label{Fig Lower bounds Greedy and Eligible-First}
		\end{minipage}
	\end{table}

\end{enumerate}
\begin{remark}\label{Rem Approximation Efficiency and Bounds}
	Strictly speaking we are adopting the following approximations for the performance parameters.
	\begin{equation}\label{Est Approximation Efficiency and Bounds}
	\begin{split}
	\Exp\Big( \frac{ \Z^{ * }_{\lt} + \Z^{ * }_{\rt} }{ \Z^{ * } } \times 100 \Big) \sim & 
	\frac{ \Exp\big(\Z^{ * }_{ \lt }\big) + \Exp\big(\Z^{ * }_{ \rt }\big) }{ \Exp\big(\Z^{ * }\big) } \times 100 = \rho , \\
	\Exp\Big( \frac{ \Z^{ \gr }_{\lt} + \Z^{ \gr }_{\rt} }{ \Z^{ \lp } } \times 100 \Big) \sim &
	\frac{ \Exp\big(\Z^{ \gr }_{ \lt }\big) + \Exp\big(\Z^{ \gr }_{ \rt }\big) }{ \Exp\big(\Z^{ \lp }\big) } \times 100 = \lb^{\gr} ,
	\end{split}
	\end{equation}
	and similarly for all the efficiency (equations \eqref{Eq Efficiencies}) and the lower bound (equations \eqref{Eq Lower Bounds}) parameters that we have introduced in \textsc{Definition} \ref{Def DAC Efficiency}. However, it must be observed that these assumptions are mild as their values are very close to the empirical results. On the other hand, finding the expectation of the left hand side in the estimates \eqref{Est Approximation Efficiency and Bounds} is significantly more complex and provides little extra accuracy. Finally, given that we want to merely estimate the expected efficiency of the Divide-and-Conquer method on the 0-1RKP, it is safe to give up such level of precision.
\end{remark}
%
%
%
%
\subsection{Expected performance for a Divide-and-Conquer Tree}\label{Sec Expected performance for a Divide-and-Conquer Tree}
%
%
In this section we can finally deliver tangible values for the performance of the Divide-and-Conquer method. First for one iteration and then we furnish a method to estimate the expected performance for any D\&C tree (see \textsc{Example} \ref{Exm Labeling DAC Tree} below). 

Observe that for all the parameters introduced in the previous section, the variance is remarkably low. Therefore, we can adopt the averages as the value of the corresponding performance parameters for one iteration of the Divide-and-Conquer method, see \textsc{Table} \ref{Tbl Variance Lower Bounds}. Moreover, due to the low value of the variance, it is safe to assume the same performance of the method through all the iterations of the full binary D\&C tree. 
\begin{table}[h!]
	\begin{center}
		\footnotesize{
			\rowcolors{2}{white}{gray!25}
			\begin{tabular}{c | c  c c | c c c | c c c | c c c}
				\rowcolor{gray!80}
				\hline
				&
				\multicolumn{3}{ c | }{$ \Z^{\ef}  $} &
				\multicolumn{3}{ c | }{$ \Z^{\lp} $} &
				\multicolumn{3}{ c | }{$ \Z^{\gr} / \Z^{\lp} $} & 
				\multicolumn{3}{ c }{$ \Z^{\ef} / \Z^{\lp} $} \\[2pt] 
				\rowcolor{gray!80}
				& $ \rho^{\ef} $ & $ \rho^{\ef}_{\lt} $ & $ \rho^{\ef}_{\rt} $ 
				& $ \rho^{\lp} $ & $ \rho^{\lp}_{\lt} $ & $ \rho^{\lp}_{\rt} $
				& $ \lb^{\gr} $ & $ \lb^{\gr}_{\lt} $ & $ \lb^{\gr}_{\rt} $ 
				& $ \lb^{\ef} $ & $ \lb^{\ef}_{\lt} $ & $ \lb^{\ef}_{\rt} $ \\[2pt]
				\hline 
				mean &
				99.93 & 68.39 & 31.54 & 92.59 & 64.64 & 27.95 & 
				71.98 & 49.23 & 22.75 & 79.72 & 55.65 & 24.07 \\
				variance &
				0.01 & 0.00 & 0.00 & 0.12 & 0.05 & 0.01 & 0.03 &
				0.01 & 0.00 & 0.01 & 0.00 & 0.00 \\
				\hline		
			\end{tabular}
			\caption{Mean and Variance for the performance parameters defined in equations \eqref{Eq DAC Eligible-First Efficiency}, 
				\eqref{Eq DAC Linear Relaxation Efficiency}, \eqref{Eq LB Greedy DAC Efficiency} and \eqref{Eq LB Eligible-First DAC Efficiency}.}
			\label{Tbl Variance Lower Bounds}
		}
	\end{center}
\end{table}

Next, we mark the D\&C tree vertices in a particular way.
%
%
\begin{definition}\label{Def Labelling of Trees}
	Let $ \T $ be a D\&C tree 
	\begin{enumerate}[(i)]
		\item For every vertex $ \Pi $ of $ \T $ we construct a marker $ \mar^{\Pi} $ in the following way. If the vertex is different from the root then the marker is the sequential list of left and/or right turns, that the unique path from the root to it, takes. If the vertex is the root simply assign an empty list as its marker. (See \textsc{Figure} \ref{Fig Divide-and-Conquer labeled tree} in \textsc{Example} \ref{Exm Labeling DAC Tree} below.)
		
		\item Let $ \Pi $ of $ \T $ be a vertex with its corresponding marker $ \mar^{\Pi} $. We define the factor 
		\begin{align}\label{Eq Vertex Performance Factor}
		& \Phi(\Pi) \defining 100 \times \prod\limits_{ i \, = \, 1 }^{ \vert \text{length } \Pi  \vert } \frac{ 1 }{ 100 } \, \Phi_{ \mar^{\Pi} ( i ) } , &
		& \Phi \in \{ \rho^{\ef}, \rho^{\lp}, \lb^{\gr}, \lb^{\ef} \} ,
		\end{align}
		with the convention that $ \varphi( \text{root} ) = 1 $. 
		
		\item The value of the performance parameter of the tree  $ \T $ is given by
		\begin{align}\label{Eq Tree Performance Factor}
		& \Phi(\T) \defining \max \Big\{ 50, \sum\limits_{ L \text{ is a leave of } \T } \Phi( L ) , \Big\}&
		& \Phi \in \{ \rho^{\ef}, \rho^{\lp}, \lb^{\gr}, \lb^{\ef} \} .
		\end{align}
	\end{enumerate}
\end{definition}
\begin{remark}\label{Rem Labelling of Trees}
	We observe the following
	\begin{enumerate}[(i)]
		\item 
		Due to \textsc{Definition} \ref{Def Divide and Conquer Setting} (iii), every internal vertex of a D\&C tree has exactly two children: left and right. Therefore, the marking process is well-defined because, given any arbitrary vertex of the tree, all its ancestors excepting the root, are necessarily the left or right child of its parent. 
		
		\item The marking of vertices is completely analogous to the well-know binary expansion of numbers in the interval $ [0, 1] $. Hence,  a vertex can be very well identified with its marking list. In particular, the length of the marking list is the depth of the vertex.
		
		\item In the expression \eqref{Eq Vertex Performance Factor} each of the percentages is switched to the real number fractions, so that they can be multiplied properly. We set back to the percentage format once the product is executed. In contrast, the expression \eqref{Eq Tree Performance Factor} does not need these precautions because its definition only involves sums. 
		
		\item The computation of $ \Phi(\T) $ involves a maximum between the derived algebraic expression and a 50\% value. This is due to the quality certificate of 50\% in the worst case scenario presented in \textsc{Theorem} \ref{Thm Worst Case Scenario} (ii).
		
	\end{enumerate}
\end{remark}
\begin{example}[Continuation of \textsc{Example}\ref{Exm 0-1KP and DC tree Asymetric}]\label{Exm Labeling DAC Tree} 
	We compute the performance parameters for the D\&C tree presented in \textsc{Example} \ref{Exm 0-1KP and DC tree Asymetric} above. The figure \ref{Fig Divide-and-Conquer labeled tree} depicts the marking of each of the vertices of the tree, while \textsc{Table} \ref{Tbl Example DAC performance} summarizes the values of the four efficiency parameters introduced in \textsc{Section} \ref{Sec Expected performance for one iteration of the Divide-and-Conquer Method} above.
	\begin{table}[h!]
		\begin{minipage}{0.5\textwidth}
			\begin{center}
				\small{
										\rowcolors{2}{white}{gray!25}
					\begin{tabular}{c | c  c  | c  c}
												\rowcolor{gray!80}
						\hline
						Element & \multicolumn{2}{ c | }{Efficiency} & \multicolumn{2}{ c  }{Lower Bound} \\[2pt] 
												\rowcolor{gray!80}
						& $ \rho^{\ef} $ & $ \rho^{\lp} $ & $ \lb^{\gr} $ & $ \lb^{\ef} $ \\[2pt]
						\hline 
						$ \Pi_{0} $ & 1 & 1 & 1 & 1  \\
						$ \Pi_{1} $  & 68.39 & 64.64 & 49.20 & 55.65 \\
						$ \Pi_{2} $  & 46.77 & 41.78 & 24.21 & 30.97 \\
						$ \Pi_{3} $  & 21.57 & 18.06 & 11.20 & 13.95 \\
						$ \Pi_{4} $  & 31.54 & 27.95 & 22.75 & 25.07 \\
						\hline
												\rowcolor{gray!80}
						$ \T  $         & 99.88 & 87.79 & 58.16 & 69.99 \\
						\hline		
					\end{tabular}		
					\caption{Performance parameters \textsc{Example} \ref{Exm Labeling DAC Tree}.}
					\label{Tbl Example DAC performance}
				}
			\end{center}
		\end{minipage}
		\begin{minipage}{0.5\textwidth}
			\centering
			{
				\begin{tikzpicture}
				[scale=.6,auto=left,every node/.style={}]
				\node (n0) at (6,5) {$ \stackrel{ \bullet }{ \Pi_{0} , \, \mar_{0} =  ( \, ) }  $};
				\node (n1) at (2,2.5) {$ \stackrel{ \bullet }{ \Pi_{1} , \, \mar_{1} = (l) } $};
				\node (n2) at (0,0)  {$ \stackrel{\bullet }{ \Pi_{2} , \, \mar_{2} = (l, l) } $};
				\node (n3) at (4, 0)  {$ \stackrel{ \bullet }{ \Pi_{3} , \, \mar_{3}  = (l , r) } $};
				\node (n4) at (10,2.5)  {$ \stackrel { \bullet }{ \Pi_{4} , \, \mar_{4} = (r) } $};
				
				\foreach \from/\to in {n0/n1, n1/n2,n1/n3, n0/n4
				}
				\draw[thick, ->] (\from) -- (\to);
				\end{tikzpicture}
			}
			
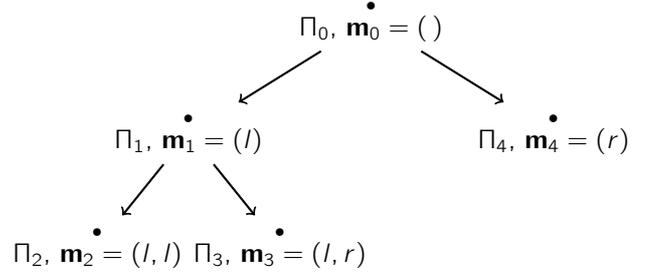
\captionof{figure}{Divide-and-Conquer labeled tree. For each vertex $ \Pi_{i} $ the path from the root to the vertex is indicated as a sequence of left and right turns and denoted as its marking $ \mar_{i} $. (The notation $ \mar^{\Pi_{i}} $ of \textsc{Definition} \ref{Def Labelling of Trees} is omitted for visual purposes.)}
			\label{Fig Divide-and-Conquer labeled tree}
		\end{minipage}
	\end{table}
\end{example}
%
%
%
%
\subsection{Empirical Verification of the Results}\label{Sec Empirical Verification of the Results}
%
%
In the current section we describe the numerical verification of the results presented so far. First, we need to define a number of trials in our experiments, to that end we recall the following result on confidence
\begin{theorem}\label{Them Confidence Interval}
	Let $ x $ be a scalar statistical variable with mean $ \bar{x} $, variance $ \sigma^{2} $.
	\begin{enumerate}[(i)]
		\item The number of trials necessary to get a 95\% confidence interval is given by 
		\begin{equation}\label{Eq Bernoulli Trials for Confidence Intervals}
		n \defining \big( \frac{1.96}{0.05}\big)^{2} \sigma^{2}.
		\end{equation}

		\item The \textbf{95 percent confidence interval} is given by 
		\begin{equation}\label{Eq Confidence Interval Real-Valued}
		I_{x} \defining \Bigg[ \bar{x} - 1.96 \, \sqrt{ \frac{\sigma^{2}}{n} }, \bar{x} + 1.96 \,\sqrt{ \frac{\sigma^{2}}{n} } \, \Bigg] .
		\end{equation}
	\end{enumerate}
\end{theorem}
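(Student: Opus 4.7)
The plan is to derive both formulas from a single application of the Central Limit Theorem to the sample mean of $n$ i.i.d. Bernoulli trials of $x$. First, I would let $X_1, X_2, \ldots, X_n$ be the i.i.d. copies corresponding to the Bernoulli trials, with common mean $\bar x$ and variance $\sigma^2$, and denote by $\widehat X_n \defining \frac{1}{n} \sum_{j = 1}^{n} X_j$ the sample mean. The CLT then yields
\begin{equation*}
\frac{ \widehat X_n - \bar x }{ \sqrt{ \sigma^2 / n } } \xrightarrow[n \, \to \, \infty]{ d } \mathcal{N}( 0, 1 ) ,
\end{equation*}
so that for $n$ large enough, the left-hand side is approximately standard normal. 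This is the single analytical input of the proof.

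Next, for part (ii), I would recall that for a standard normal variable $Z$ one has $\prob( \vert Z \vert \leq 1.96 ) \approx 0.95$, i.e., the two-sided $97.5\%$ quantile of $\mathcal{N}(0,1)$ is $1.96$. Applying this to the standardized sample mean,
\begin{equation*}
\prob\Big( \Big\vert \frac{ \widehat X_n - \bar x }{ \sqrt{ \sigma^2 / n } } \Big\vert \leq 1.96 \Big) \approx 0.95 ,
\end{equation*}
and solving the inequality inside the probability for $\bar x$ (treating $\widehat X_n$ as the observed sample mean) one obtains directly the interval $I_x$ stated in \eqref{Eq Confidence Interval Real-Valued}.

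For part (i), I would impose the accuracy requirement that the half-width of $I_x$ does not exceed the prescribed tolerance of $0.05$; that is,
\begin{equation*}
1.96 \sqrt{ \frac{ \sigma^2 }{ n } } \leq 0.05 .
\end{equation*}
Solving for $n$ gives $n \geq (1.96/0.05)^2 \sigma^2$, and taking the smallest such $n$ produces \eqref{Eq Bernoulli Trials for Confidence Intervals}.

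The statement is truly a standard textbook consequence of the CLT, so there is no real obstacle; the only subtle point is making explicit the convention being adopted (the target $5\%$ absolute margin of error around the empirical mean), which must be stated clearly so that the formula in part (i) is unambiguous. Since the paper uses this result only to size the Monte Carlo experiments in \textsc{Section} \ref{Sec Empirical Verification of the Results}, a short remark recording this convention together with the two applications of the CLT above will suffice.
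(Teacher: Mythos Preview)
Your proposal is correct and follows exactly the same approach as the paper, which simply states that the proof is based on the Central Limit Theorem and refers to \cite{Thompson} for details. You have merely filled in the standard textbook derivation that the paper leaves to the reference.
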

\begin{proof}
	The proof is based on the Central Limit Theorem, see \cite{Thompson} for details.
\end{proof}
Next, we summarize the guidelines for the experiments design
\begin{enumerate}[a.]
	\item The split index variable $ \ups $ is used to determine the number of trials for our numerical experiments, because we have an analytical expression for its variance given by \textsc{Equation} \ref{Eq Split Variance}. 
	
	\item For simplicity, the sizes of the 0-1RKP's for which the theoretical results are to be verified have the structure $ \delta = 2^{j} - 1 $. These sizes, together with their corresponding number of trials, using the equations \eqref{Eq Split Variance} and \eqref{Eq Bernoulli Trials for Confidence Intervals} are summarized in the table \ref{Tbl Summary of Experiments and Bernoulli Trials} below.
	\begin{table}[h!]
		\centering
		\begin{minipage}{0.45\textwidth}
			\begin{center}
				\normalsize{
										\rowcolors{2}{white}{gray!25}
					\begin{tabular}{c  c  c  c }
												\rowcolor{gray!80}
						\hline
						Capacity & Items & Variance & Trials\\[2pt] 
												\rowcolor{gray!80}
						$ \delta $ & $ \mu $ & $ \Var(\ups) $ & $ n $ \\[2pt]
						\hline 
						63 & 64 &	0.7329 & 1127 \\
						127 & 128 & 0.7493 & 1152 \\
						255 & 256	& 0.7575 & 1165 \\
						511 &	512	& 0.7616 & 1171 \\
						1023 & 1024 & 0.7637 & 1174 \\
						\hline		
					\end{tabular}		
					\caption{Summary of Experiments and Number of Trials}
					\label{Tbl Summary of Experiments and Bernoulli Trials}
				}
			\end{center}
		\end{minipage}
		\quad\quad\quad
		\begin{minipage}{0.45\textwidth}
			\begin{center}
				\normalsize{
										\rowcolors{2}{white}{gray!25}
					\begin{tabular}{c  c c  c }
												\rowcolor{gray!80}
						\hline
						Tree & Height & Number of \\[2pt] 
												\rowcolor{gray!80}
						$ \T $ & $ h $ & Nodes \\[2pt]
						\hline 
						1 & 1 & 3 \\
						2 & 2 & 6 \\
						3 & 3 & 14  \\
						4 & 4 & 30 \\
						\hline		
					\end{tabular}		
					\caption{Summary of Tree Structures}
					\label{Tbl Summary of Tree Structures}
				}
			\end{center}
		\end{minipage}
	\end{table}

	\item For simplicity, the D\&C tree structures to be evaluated are the complete binary trees of the heights detailed in \textsc{Table} \ref{Tbl Summary of Tree Structures}.
	
	\item Each capacity $ \delta $ of \textsc{Table} \ref{Tbl Summary of Experiments and Bernoulli Trials} is tested through all the D\&C trees of \textsc{Table} \ref{Tbl Summary of Tree Structures}.
	
\end{enumerate}

The table \ref{Tbl Experiments 64 Elements} displays the empirical efficiency results for the first case of \textsc{Table} \ref{Tbl Summary of Experiments and Bernoulli Trials}: knapsack capacity $ \delta = 63 $, number of items $ \mu = 64 $. The remaining experiments of the table \ref{Tbl Summary of Experiments and Bernoulli Trials} yield similar efficiency results to the first case, presented in \textsc{Table} \ref{Tbl Experiments 64 Elements}. The table \ref{Tbl Theoretical Tree Efficiencies} summarizes the theoretical results, computed using the approximation method introduced in \textsc{Definition} \ref{Def Labelling of Trees} and explained in \textsc{Example} \ref{Exm Labeling DAC Tree}. As it can be seen, the empirical results are more favorable than the theoretical results, for all the analyzed trees. (The same holds for all the remaining experiments of the table \ref{Tbl Summary of Experiments and Bernoulli Trials}.)  
\begin{table}[h!]
	\begin{minipage}{0.5\textwidth}
		\begin{center}
			\scriptsize{
								\rowcolors{2}{white}{gray!25}
				\begin{tabular}{c | c c  c  | c  c}
										\rowcolor{gray!80}
					\hline
					Tree & \multicolumn{3}{ c | }{Efficiency} & \multicolumn{2}{ c  }{Lower Bound} \\[2pt] 
										\rowcolor{gray!80}
					$ \T $ & $ \rho $ & $ \rho^{\ef} $ & $ \rho^{\lp} $ & $ \lb^{\gr} $ & $ \lb^{\ef} $ \\[2pt]
					\hline 
					1 & 97.66 & 99.83 & 98.82	& 92.78 &	94.99 \\
					2 & 95.45 & 99.46 & 97.63	& 92.91 &	93.87 \\
					3 & 94.75 & 96.40 & 97.00 & 92.96 &	93.29 \\
					4 & 94.55 & 94.30 & 96.81	& 93.00 & 	93.12 \\
					\hline		
				\end{tabular}		
				\caption{Empirical Tree Efficiencies, $ \delta = 63 $, $ \mu = 64 $, Number of Trials $ n = 1152 $.}
				\label{Tbl Experiments 64 Elements}
			}
		\end{center}
	\end{minipage}
	\quad
	\begin{minipage}{0.5\textwidth}
		\begin{center}
			\scriptsize{
								\rowcolors{2}{white}{gray!25}
				\begin{tabular}{c | c  c  | c  c}
										\rowcolor{gray!80}
					\hline
					Tree & \multicolumn{2}{ c | }{Efficiency} & \multicolumn{2}{ c  }{Lower Bound} \\[2pt] 
										\rowcolor{gray!80}
					$ \T $ & $ \rho^{\ef} $ & $ \rho^{\lp} $ & $ \lb^{\gr} $ & $ \lb^{\ef} $ \\[2pt]
					\hline 
					1 & 99.93 & 92.59 & 71.98	& 79.72 \\
					2 & 99.86 & 85.73 & 51.81	& 63.55 \\
					3 & 99.79 & 79.38 & 50.00 & 50.66 \\
					4 & 99.72 & 73.49 & 50.00 & 50.00 \\
					\hline		
				\end{tabular}		
				\caption{Theoretical Tree Efficiency Estimates. These are constructed based on the values of \textsc{Table} \ref{Tbl Variance Lower Bounds}. }
				\label{Tbl Theoretical Tree Efficiencies}
			}
		\end{center}
	\end{minipage}
\end{table}
\begin{remark}\label{Rem Numerical Experiments}
	It is important to stress that the same set of experiments of \textsc{Table} \ref{Tbl Summary of Experiments and Bernoulli Trials} were used to verify the results developed in this work. For all the random variables involved, its empirical expectation falls into their corresponding confidence interval presented in \textsc{Theorem} \ref{Them Confidence Interval}. The correctness of the developed expressions was verified, using the full knapsack problem for the results of \textsc{Section} \ref{Sec The Expected Performance} and using the basic D\&C tree, $ \T = 1 $ of the table \ref{Tbl Summary of Tree Structures} (three nodes and height one), to check those presented in \textsc{Section} \ref{Sec Probabilistic Estimates of the Divide-and-Conquer Algorithm}. 
	
\end{remark}
%
%
%
%
\section{Conclusions and Final Discussion}\label{Sec Conclusions and Final Discussion}
%
%
The present work yields the following conclusions
\begin{enumerate}[(i)]
	
	\item A complete and detailed theoretical analysis for the Divide-and-Conquer method's efficiency has been presented. The analysis has been done from two points of view: the worst case scenario and the expected performance. Before this work, the method's efficiency was analyzed only from the empirical point of view.
	
	\item For the worst case scenario, it suffices to find a control solution (see \textsc{Theorem} \ref{Thm quality certificate DCM}) which is computationally cheap. In our case furnished by the extended-greedy algorithm ($ \x^{\eg} $ and $ z^{\eg} $) and then split the problems: the restriction of this solution belongs to all the search spaces of the D\&C subproblems. This was done by carefully computing the knapsack capacities of the subproblems, given that the mechanism for splitting items (even and odd indexes) was already decided as discussed in \textsc{Remark} \ref{Rem Divide-and-Conquer tree}.
	
	\item It is possible to use another control solution for the worst case scenario, rather than the one presented here. For instance, the algorithm $ G^{\frac{3}{4}} $ presented in \cite{kellerer2005knapsack}, which is computationally more expensive, but it certifies a worst case scenario of 75\%. However, for this or any other control solution, the computation of the knapsack capacities $ \delta_{\lt}, \delta_{\rt} $ detailed in \textsc{Algorithm} \ref{Alg Branch Function}, needs to be adjutsed in order to satisfy the hypothesis of \textsc{Theorem} \ref{Thm quality certificate DCM}.
	
	\item The analysis of the expected performance is considerably harder than the previous one. A discrete probabilistic setting has to be established (see \textsc{Hypothesis} \ref{Hyp Random Problems}) and a randomized version of the problem, 0-1RKP, has to be introduced (see \textsc{Definition} \ref{Def Random Problems}). 
	
	\item The probabilistic analysis was done in two parts: \textsc{Section} \ref{Sec The Expected Performance} analyzes the 0-1RKP in full, while \textsc{Section} \ref{Sec Probabilistic Estimates of the Divide-and-Conquer Algorithm} analyzes the expected behavior of one single iteration of the Divide-and-Conquer method. In the first case, all the expectations were computed with absolute accuracy. In the second case, the same rigor was kept only for the computation of the left and right knapsack capacities but, in order to pursue further results, we approximated the expression assuming independence of the slack $ \upk $ and split $ \ups $ variables. The latter approximation has solid grounds because of the smooth behavior of the expectations of the main variables of the model, as shown in \textsc{Theorem} \ref{Thm Split Item Asymptotic}.
	
	\item In \textsc{Section} \ref{Sec Performance Estimates for the Divide-and-Conquer} several parameters to measure the performance of the method were introduced. Here, the expressions previously attained were numerically evaluated (due to its complexity) in order to obtain concrete, tangible values of the method's performance; first for one single D\&C iteration, then, an approximation is given for a general D\&C tree (see \textsc{Definition} \ref{Def Labelling of Trees}). Once again, hypothesis of independence between random variables were adopted, in order to compute the desired values (see \textsc{Remark} \ref{Rem Approximation Efficiency and Bounds}). Finally, the theoretical results are verified empirically with numerical experiments statistically sound.
	
	\item The empirical verification of our results (displayed in \textsc{Table} \ref{Tbl Experiments 64 Elements}), show that the theoretical approximations (summarized in the table \ref{Tbl Theoretical Tree Efficiencies}) are a lower estimate for the performance of Divide-and-Conquer and can be used to evaluate the method in general terms. To this end, two pairs of parameters were introduced: $ \rho^{\ef}, \rho^{\lp} $ as a \emph{reference of} and $ \lb^{\gr}, \lb^{\ef} $ as \emph{lower bounds of} the \emph{expected performance}. Hence, if the first pair of parameters is used to decide, it is recomendable to use the method with at most three iterations ($ \T = 3 $). However, a more conservative approach using the lower bounds' pair, states that Divide-and-Conquer should be used with at most two iterations ($ \T = 2 $), because beyond that height they are no better than those of the worst case scenario, already furnished by the extended-greedy algorithm ($ \x^{\eg} $ and $ z^{\eg} $).    
	
	\item Finally, a more daring approach would use the empirical evidence to decide the limit extension of Divide-and-Conquer trees, summarized in \textsc{Table} \ref{Tbl Experiments 64 Elements} (similar to all the other experiments of \textsc{Table} \ref{Tbl Summary of Experiments and Bernoulli Trials}). From this point of view, the method is still highly recommendable for four iterations ($ \T = 4 $). This is consistent with the empirical findings of \cite{MoralesMartinez}, where six D\&C iterations produced satisfactory results in \emph{average}.  
	
\end{enumerate}
%
%
%
%
%
%
%
%
\noindent\textbf{Acknowledgements}\\
\textit{\noindent The first Author wishes to thank Universidad Nacional de Colombia, Sede Medell\'in for supporting the production of this work through the project Hermes 54748 as well as granting access to Gauss Server, financed by ``Proyecto Plan 150x150 Fomento de la cultura de evaluaci\'on continua a trav\'es del apoyo a planes de mejoramiento de los programas curriculares". (\url{gauss.medellin.unal.edu.co}), where the numerical experiments were executed. All the polynomial identities used in the proofs were verified in \url{wolframalpha.com}; due to its level of complexity it would have not been possible to develop them without this remarkable free tool.}


\end{document}